\PassOptionsToPackage{scaled=0.86}{helvet}
\documentclass{lmcs} 
\pdfoutput=1

\usepackage{lastpage}
\lmcsdoi{17}{4}{18}
\lmcsheading{}{\pageref{LastPage}}{}{}%
{Nov.~20,~2020}{Dec.~14,~2021}{}

\usepackage[utf8]{inputenc}

\keywords{unification, higher-order logic, theorem proving, term rewriting, indexing data structures}


\usepackage[misc]{ifsym}
\usepackage{microtype}
\usepackage{mathtools}
\usepackage{nicefrac}
\usepackage{enumitem}
\usepackage{scrextend}
\usepackage{float}
\usepackage{forest}
\usepackage{booktabs}
\usepackage{multirow}
\usepackage{stmaryrd}
\usepackage{amssymb}
\usepackage{pifont}
\usepackage{xargs}
\PassOptionsToPackage{normalem}{ulem}
\usepackage{ulem}
\usepackage{color}
\usepackage{cite}
\usepackage{eqparbox}
\usepackage{url}

\newcommand\paper{article}


\newcommand\cst[1]{\mathsf{#1}}
\newcommand\nf[2]{{#1}_{\downarrow{\ifthenelse{\equal{#2}{}}{\beta}{#2}}}}
\newcommand\unif{\mathrel{\smash{\stackrel{\lower.1ex\hbox{\ensuremath{\scriptscriptstyle ?}}}{=}}}}

\newcommand\ourpara[1]{\subsection*{#1}}

\def\vthinspace{\kern+0.083333em}
\newcommand\unifarrow{\longrightarrow}
\newcommand\jpunifarrow{\Longrightarrow}
\newcommand\ord{\makeop{ord}}

\definecolor{light-gray}{gray}{0.875}
\newcommand{\selected}[1]{\smash{\setlength{\fboxsep}{.3ex}\colorbox{light-gray}{\ensuremath{\vphantom{('q}{#1}}}}} 

\renewcommand\AA{{\textsf{A}}}
\newcommand\BB{{\textsf{B}}}
\newcommand\NN{{\textsf{N}}}
\newcolumntype{C}[1]{>{\centering\let\newline\\\arraybackslash\hspace{0pt}}m{#1}}

\newcommand{\hotofo}[1]{\lfloor#1\rfloor}
\newcommand{\nametodb}[1]{\ensuremath{\langle #1 \rangle_\textsf{db}}}
\newcommand{\cstdb}[2]{\ensuremath{\textsf{db}_{#1}^{#2}}}
\newcommand{\cstdba}[1]{\ensuremath{\cstdb{#1}{\alpha}}}

\newcommand{\ptarrow}[1]{\ensuremath{\;\Longrightarrow^{#1}\;}}
\newcommand{\ptarrowid}[0]{\ptarrow{\text{id}}}

\newcommand{\xmark}{\ding{55}}%

\newcommand{\rulename}[1]{\textsf{#1}}

\let\oldleft\left
\let\oldright\right
\def\left{\mathopen{}\mathclose\bgroup\oldleft}
\def\right{\aftergroup\egroup\oldright}

\global\long\def\makeop#1{\operatorname{#1}}
\global\long\def\seq#1{\overline{#1} }
\newcommandx\param[1][usedefault, addprefix=\global, 1=\seq x]{\lambda#1.\, }
\global\long\def\set#1#2{\left\{  #1\,\middle|\,#2\right\}  }
\newcommandx\jp[3][usedefault, addprefix=\global, 1=3]{\textnormal{JP #2#1.#3}}

\newcommand{\equi}{\leftrightarrow^*_{\alpha\beta\eta}}

\DeclareFontFamily{OT1}{pzc}{}
\DeclareFontShape{OT1}{pzc}{m}{it}{<-> s * [1.10] pzcmi7t}{}
\DeclareMathAlphabet{\mathcalx}{OT1}{pzc}{m}{it}


\begin{document}

\title{Efficient Full Higher-Order Unification}


\author[P.~Vukmirovi\'c]{Petar Vukmirovi\'c}	
\author[A.~Bentkamp]{Alexander Bentkamp}	
\author[V.~Nummelin]{Visa Nummelin}	
\address{Vrije Universiteit Amsterdam,
De Boelelaan 1111,
1081 HV Amsterdam,
The Netherlands}	
\email{\{p.vukmirovic,a.bentkamp,visa.nummelin\}@vu.nl}  




\begin{abstract}
  \noindent We developed a procedure to enumerate complete sets of higher-order unifiers based on work
  by Jensen and Pietrzykowski. Our procedure removes many redundant unifiers by
  carefully restricting the search space and tightly integrating decision
  procedures for fragments that admit a finite complete set of unifiers. We
  identify a new such fragment and describe a procedure for computing its unifiers.
  Our unification procedure, together with new higher-order term indexing data structures,
  is implemented in the Zipperposition theorem prover.
  Experimental evaluation shows a clear advantage over Jensen and Pietrzykowski's
  procedure.
\end{abstract}

\maketitle

\section{Introduction}%
\label{sec:intro}

\looseness=-1
Unification is concerned with finding a substitution that makes two terms
equal, for some notion of syntactic equality. Since the invention of Robinson's first-order unification
algorithm~\cite{ar-resolution-65}, it has become an indispensable
tool in theorem proving, logic
programming, natural language processing, programming language compilation and other areas of computer science.

Many of these applications are based on higher-order formalisms and require
higher-order unification. Due to its undecidability and explosiveness,
the higher-order unification problem is considered one of the main
obstacles on the road to efficient higher-order tools.

One of the reasons for higher-order unification's explosiveness lies in
\emph{flex-flex pairs}, which consist of two variable-headed terms,
e.g., $F \, X \unif G \, \cst{a}$, where $F$, $G$, and
$X$ are variables and $\cst{a}$ is a constant. Even this seemingly simple
problem has infinitely many incomparable unifiers.
One of the first methods designed to combat this explosion is Huet's
preunification~\cite{gh-unification-75}. Huet noticed that some logical calculi
would remain complete if flex-flex pairs are not eagerly solved but postponed as
constraints. If only flex-flex constraints remain, we know that a unifier must
exist and we do not need to solve them.
Huet's preunification has been used in many reasoning tools including Isabelle~\cite{tn-isabelle-2002}, Leo-III~\cite{ascb-leo3-2018}, and Satallax~\cite{cb-satallax-12}. However, recent developments in higher-order theorem
proving~\cite{ab-lamsup-2019,br-combs-19} require full unification---i.e., enumeration of unifiers even for
flex-flex pairs, which is the focus of this \paper{}.

Jensen and Pietrzykowski's (JP) procedure~\cite{jp-unif-76} is the best known
procedure for this purpose (Section~\ref{sec:background}). Given two terms to
unify, it first identifies a position where the terms disagree.
Then, in parallel branches of the search tree, it applies suitable substitutions, involving a
variable either at the position of disagreement or above, and repeats this process on
the resulting terms until they are equal or trivially nonunifiable.

Building on the JP procedure, we designed a new procedure (Section~\ref{sec:the-unification-procedure}) with the same completeness guarantees (Section~\ref{sec:proof-of-completeness}).
The new procedure addresses many of the issues that are
detrimental to the performance of the JP procedure.
First, the JP procedure does not terminate in many cases of obvious
nonunifiability, e.g., for $X \unif \cst{f} \, X$, where $X$ is a non-functional
variable and $\cst{f}$ is a function constant. This example also shows that
the JP procedure does not generalize Robinson's first-order procedure gracefully. To address
this issue, our procedure detects whether a unification problem belongs to a
fragment for which unification is decidable and finite complete sets of unifiers (CSUs)
exist.
We call algorithms that enumerate elements of the CSU for such fragments
\emph{oracles}. Noteworthy fragments with oracles are first-order terms,
patterns~\cite{tn-patterns-93}, functions-as-constructors~\cite{tl-facunif-2016}, and a new fragment
we present in Section~\ref{sec:solid-oracle}.
The unification procedures of Isabelle and Leo-III check whether the unification
problem belongs to a decidable fragment, but we take this idea a step further by
checking this more efficiently and for every subproblem arising during
unification.

Second, the JP procedure computes many redundant unifiers. Consider the
example $F \, (G \, \cst{a}) \unif F \, \cst{b}$, where it produces, in addition
to the desired unifiers $\{F \mapsto \lambda x. \, H\}$ and $\{G \mapsto \lambda
x. \, \cst{b}\}$, the redundant unifier $\{F \mapsto \lambda x. \, H,\; G \mapsto
\lambda x. \, x\}$.
The design of our procedure avoids computing many redundant unifiers, including
this one. Additionally, as oracles usually return a small CSU,
their integration reduces the number of redundant unifiers.

\looseness=-1
Third, the JP procedure applies more explosive rules than Huet's preunification procedure to
flex-rigid pairs. To gracefully generalize Huet's procedure,
we show that his rules for flex-rigid pairs suffice
to enumerate CSUs
if combined with appropriate rules for flex-flex pairs.

Fourth, the JP procedure repeatedly traverses the parts of the unification
problem that have already been unified. Consider the problem $\cst{f}^{100} \,
(G \, \cst{a}) \unif \cst{f}^{100} \, (H \, \cst{b})$, where the exponents
denote repeated application. It is easy to see that this problem can be reduced
to $G \, \cst{a} \unif H \, \cst{b}$. However, the JP procedure will wastefully
retraverse the common context $ \cst{f}^{100} [\;] $ after applying each new
substitution. Since the JP procedure must apply substitutions to the variables occurring in the
common context above the position of disagreement, it cannot be easily adapted to
eagerly decompose unification pairs. By contrast, our procedure is designed to decompose
the pairs eagerly, never traversing a common context twice.

Last, the JP procedure does not allow to apply substitutions and
$\beta$-reduce lazily.
The rules of simpler procedures (e.g., first-order~\cite{kh-unification-2009} and pattern unification~\cite{tn-patterns-93}) depend only on
the heads of the unification pair.
Thus, to
determine the next step, implementations of these procedures need to
substitute and
$\beta$-reduce only until the heads of the current unification
pair are not mapped by the substitution and are not $\lambda$-abstractions.
Since the JP procedure is not based on the decomposition of unification pairs,
it is unfit for
optimizations of this kind.
We designed our procedure to allow for this optimization.

To more efficiently find terms (in a large term set) that are unifiable with a given query term,
we developed a higher-order extension of fingerprint indexing~\cite{ss-fpindex-12}
(Section~\ref{sec:indexing}).
We implemented our procedure, several oracles,
and the fingerprint index in
the Zipperposition prover (Section~\ref{sec:implementation}). Since a
straightforward implementation of the JP procedure already existed in
Zipperposition, we used it as a baseline to evaluate the performance of our
procedure (Section~\ref{sec:evaluation}). The results show substantial
performance improvements.

This invited article is an extended version of our FSCD-2020 paper~\cite{pv-hounif-2020}.
Most notable extension is the Section~\ref{sec:proof-of-completeness}, which gives
the detailed proof of completeness of our new procedure. In addition,
we give proofs for all the unproved statements from the paper, expand the examples
and provide more detailed explanations.
%

\section{Background}%
\label{sec:background}
Our setting is the simply typed $\lambda$-calculus. Types $\alpha, \beta,
\gamma$ are either base types or functional types $\alpha \rightarrow \beta$. By
convention, when we write $\alpha_1 \rightarrow \cdots \rightarrow \alpha_n
\rightarrow \beta$, we assume $\beta$ to be a base type. Basic terms are free
variables (denoted $F, G, H, \dots$), bound variables ($x,y,z$), and constants
($\cst{f}, \cst{g}, \cst{h}$). Complex terms are applications of one term to
another ($s \, t$) or $\lambda$-abstractions~($\lambda x.\, s$).
Following Nipkow~\cite{tn-patterns-93}, we use these syntactic conventions to distinguish free from bound variables.
Bound variables with no enclosing binder, such as $x$ in $\lambda y.\, x$, are called \emph{loose bound
variables}. We say that a
term without loose bound variables is \emph{closed} and a term without free
variables is \emph{ground}. Iterated
$\lambda$-abstraction $\lambda x_1 \ldots \lambda x_n.\, s$ is abbreviated as $\lambda
\overline{x}_n.\, s$ and iterated application $(s \, t_1) \, \ldots\, t_n$ as
$s \, \overline{t}_n$, where $n \geq 0$. Similarly, we denote a sequence of terms $t_1, \ldots,
t_n$ by $\overline{t}_n$, omitting its length $n \geq 0$ where it can be inferred or is
irrelevant.
Parameters and body for any term $\param[\seq x]s$ are defined
to be $\seq x$ and $s$ respectively, where $s$ is
not a $\lambda$-abstraction.
The \emph{size} of a term is inductively defined as
$\text{size}(F) = 1$; $\text{size}(x) = 1$; $\text{size}(\cst{f}) = 1$;
$\text{size}(s\,t) = \text{size}(s) + \text{size}(t)$;
$\text{size}(\lambda x.\, s) = \text{size}(s) + 1$.

\let\rho\varrho 

\looseness=-1
We assume the standard notions of $\alpha$-, $\beta$-, $\eta$-conversions. A term is in
\emph{head normal form} (\emph{hnf}) if it is of the form $\lambda
\overline{x}.\,a\,\overline{t}$, where $a$ is a free variable, bound variable,
or a constant. In this case, $a$ is called the \emph{head} of the term.
By convention, $a$ and $b$ denote heads.
If $a$ is a free variable, we call it a \emph{flex} head; otherwise, we call it a \emph{rigid} head.
A term is called flex or rigid if its head is flex or rigid, respectively.
By $\nf{s}{\textsf{h}}$ we denote the term obtained from a term $s$
by repeated $\beta$-reduction of
the leftmost outermost redex until it is in hnf.
Unless stated otherwise, we view terms syntactically, as opposed to
$\alpha\beta\eta$-equivalence classes.
We write $s \equi t$
if $s$ and $t$ are $\alpha\beta\eta$-equivalent.
Substitutions ($\sigma,\varrho,\theta$) are functions from free and bound variables to terms;
$\sigma t$ denotes application of $\sigma$ to $t$, which $\alpha$-renames $t$ to avoid variable capture. The composition $\varrho\sigma$
of substitutions is defined by $\left(\varrho\sigma\right)t=\varrho\left(\sigma t\right)$.
A variable $F$ is mapped by $\sigma$ if $ \sigma F \not\equi F$.
Given a substitution $\varrho$, which maps $F$ to $s$, we write $\varrho\setminus\{F \mapsto s\}$
to denote a substitution that does not map $F$ and otherwise coincides with $\varrho$.
Given substitutions $\varrho$ and $\sigma$, which map disjoint sets of variables,
we write $\varrho \cup \sigma$ to denote $\varrho\sigma$.

Deviating from the standard notion of higher-order subterm,
we define subterms on $\beta$-reduced terms as follows: a term $t$ is a
subterm of $t$ at position $\varepsilon$. If $s$ is a subterm of
$u_i$ at position $p$, then $s$ is a subterm of $a\;\overline{u}_n$ at
position $i.p$.
If $s$ is a subterm of $t$ at position $p$, then $s$ is a subterm
of $\lambda x.\, t$ at position $1.p$.
Our definition of subterm gracefully generalizes
the corresponding first-order notion: $\cst{a}$ is a
subterm of $\cst{f} \, \cst{a} \, \cst{b}$ at position 1,
but $\cst{f}$ and $\cst{f} \, \cst{a}$ are not subterms
of $\cst{f} \, \cst{a} \, \cst{b}$.
A context is a term with zero or more subterms replaced by a hole $\square$.
We write $C[\overline{u}_n]$ for the term resulting from filling in the holes of
a context $C$ with the terms $\overline{u}_n$ from left to right.
The common context
$\mathcal{C}(s,t)$ of two $\eta$-long $\beta$-reduced terms $s$~and~$t$
of the same type is defined
inductively as follows, assuming that $a \not= b$: $\mathcal{C}(\lambda x.\, s, \lambda y.\, t) = \lambda x.\,
\mathcal{C}(s,\{y\mapsto x\}t)$; $\mathcal{C}(a\,\overline{s}_m, b\,\overline{t}_n) = \square$;
$\mathcal{C}(a\,\overline{s}_m, a\,\overline{t}_m) = a\,\mathcal{C}(s_1,t_1)\,\ldots\,\mathcal{C}(s_m,t_m)$.

A \emph{unification constraint} $s \unif t$ is an unordered pair of two terms of
the same type. A \emph{unifier} of a multiset of unification constraints $E$ is
a substitution $\sigma$, such that $\sigma s \equi \sigma t$, for all $s \unif t
\in E$. A \emph{complete set of unifiers} (\emph{CSU}) of $E$ is defined as a
set $U$ of $E$'s unifiers along with a set $V$ of \emph{auxiliary
variables} such that no $s \unif t \in E$ contains variables from $V$ and for
every unifier $\rho$ of $E$, there exists a $\sigma \in U$ and a substitution
$\theta$ such that for all $X\not\in V$, $\rho X = \theta\sigma X$. A \emph{most
general unifier} (\emph{MGU}) is a one-element CSU\@. A unifier of terms $s$ and $t$
is a unifier of the singleton multiset $\{ s \unif t \}$.

\begin{rem}
  We use this definition of a CSU because
  JP's definition of a CSU, which we have adopted in our FSCD-2020 paper, is flawed.
  JP's definition does not employ the notion of auxiliary variables,
  but instead requires $\rho X = \theta\sigma X$ for all variables mapped by $\rho$.
  This is problematic because nothing prevents $\rho$ from mapping the auxiliary variables.
  For example, $\sigma = \{F \mapsto \lambda x y.\> G\>y\}$ is supposed
  to be an MGU for $F\>\cst{a}\>\cst{c} \unif F\>\cst{b}\>\cst{c}$.
  But for the unifier $\rho = \{F \mapsto \lambda x y.\> y,\> G \mapsto \lambda x.\> \cst{d}\}$,
  without the notion of auxiliary variables,
  there exists no appropriate substitution $\theta$
  because $\rho G = \theta\sigma G$ requires $\theta G = \lambda x.\> \cst{d}$
  and $\rho F = \theta\sigma F$ requires $\theta G = \lambda x.\> x$.
\end{rem}

\section{The Unification Procedure}%
\label{sec:the-unification-procedure}
\looseness=-1
To unify two terms $s$ and $t$, our procedure builds a tree as follows. The
nodes of the tree have the form $(E,\sigma)$, where $E$ is a multiset of
unification constraints $\{(s_1 \unif t_1), \ldots, (s_n \unif t_n)\}$ and
$\sigma$ is the substitution constructed up to that point.  The root node of the
tree is $(\{s \unif t\}, \makeop{id})$, where $\makeop{id}$ is the identity
substitution. The tree is then constructed applying the transitions listed
below. The leaves of the tree are either failure nodes  $\bot$ or substitutions
$\sigma$. Ignoring failure nodes, the set of all substitutions in the leaves
forms a complete set of unifiers for $s$ and $t$. More generally, our procedure can be used to
unify a multiset $E$ of constraints by making the root of the unification tree
$(E, \makeop{id})$.

The procedure requires an infinite supply of fresh free variables. These fresh
variables must be disjoint from the variables occurring in the initial multiset
$E$. Whenever a transition $(E,\sigma) \unifarrow (E',\sigma')$ is made, all
fresh variables used in $\sigma'$ are removed from the supply and cannot be used
again as fresh variables.

\looseness=-1
The transitions are parametrized by a mapping~$\mathcal{P}$ that assigns a set
of substitutions to a unification pair; this mapping abstracts the
concept of unification rules present in other unification procedures. Moreover,
the transitions are parametrized by a selection function~$S$ mapping a multiset
$E$ of unification constraints to one of those constraints $S(E) \in E$, the
\emph{selected} constraint in $E$. The transitions, defined as follows, are only
applied if the $\selected{\text{grayed}}$ constraint is selected.

\begin{description}[leftmargin=3mm]
\item[\rm\rulename{Succeed}]
    $(\varnothing, \sigma) \unifarrow \sigma$
\smallskip
\item[\rm\rulename{Normalize$_{\alpha\eta}$}]
    $(\{\selected{\lambda \overline{x}_m. \, s \unif \lambda \overline{y}_n. \, t}\}\uplus E, \sigma)
    \unifarrow
    (\{\lambda \overline{x}_m. \, s \unif \lambda \overline{x}_m. \, t'\, x_{n+1} \ldots x_m\}\uplus E, \sigma)$\\
    where $m \geq n$, $\overline{x}_m \not= \overline{y}_n$, and $t'=\{y_1\mapsto x_1,\ldots,y_n\mapsto x_n\}t$
\smallskip
\item[\rm\rulename{Normalize$_\beta$}]
    $(\{\selected{\lambda \overline{x}. \, s \unif \lambda \overline{x}. \, t}\}\uplus E, \sigma)
    \unifarrow
    (\{\lambda \overline{x}. \, \nf{s}{\textsf{h}} \unif \lambda \overline{x}. \, \nf{t}{\textsf{h}}\}\uplus E, \sigma)$\\
    where $s$ or $t$ is not in hnf
    \smallskip
\item[\rm\rulename{Dereference}]
    $(\{\selected{\lambda \overline{x}. \, F \, \overline{s} \unif \lambda \overline{x}. \, t}\}\uplus E, \sigma)
    \unifarrow
    (\{\lambda \overline{x}. \, (\sigma F) \,\overline{s} \unif \lambda \overline{x}. \, t\}\uplus E, \sigma)$\\
    where none of the previous transitions apply and $F$ is mapped by $\sigma$
\smallskip
    \item[\rm\rulename{Fail}]
    $(\{\selected{\lambda \overline{x}. \, a \, \overline{s}_m \unif \lambda \overline{x}. \, b \, \overline{t}_n}\}
    \uplus E, \sigma) \unifarrow \bot$\\
    where none of the previous transitions apply, and
    $a$ and $b$ are different rigid heads
\smallskip
\item[\rm\rulename{Delete}]  $(\{\selected{s \unif s}\}\uplus E, \sigma) \unifarrow (E, \sigma)$\\
    where none of the previous transitions apply
\smallskip
\item[\rm\rulename{OracleSucc}]
$(\{\selected{s \unif t}\}\uplus E, \sigma) \unifarrow (E, \varrho \sigma)$\\
    where none of the previous transitions apply,
    some oracle found a finite CSU $U$ for $\sigma s \unif \sigma t $ using fresh auxiliary variables,
    and $\varrho \in U$; if multiple oracles found a CSU, only one of them is considered
\smallskip
\item[\rm\rulename{OracleFail}]
$(\{\selected{s \unif t}\}\uplus E, \sigma) \unifarrow \bot$\\
where none of the previous transitions apply,
and some oracle determined $\sigma s \unif \sigma t$ has no solutions
\smallskip
\item[\rm\rulename{Decompose}]
    $(\{\selected{\lambda \overline{x}. \, a \, \overline{s}_m \unif \lambda \overline{x}. \, a \, \overline{t}_m}\}
    \uplus E, \sigma)
    \unifarrow
    ( \{s_1 \unif t_1, \ldots ,s_m \unif t_m\} \uplus E,
    \sigma)$\\
    where none of the transitions \rulename{Succeed} to \rulename{OracleFail} apply
\smallskip
\item[\rm\textsf{Bind}]
    $(\{\selected{s \unif t}\}
    \uplus E, \sigma)
    \unifarrow
    (\{s \unif t\}
    \uplus E, \rho \sigma)$\\
    where none of the transitions \textsf{Succeed} to \textsf{OracleFail} apply,
    and $\rho\in\mathcal{P}(s \unif t)$.
\end{description}

\noindent
The transitions are designed so that only \rulename{OracleSucc},
\rulename{Decompose}, and \rulename{Bind} can introduce parallel branches in the
constructed tree. \rulename{OracleSucc} can introduce branches
using different unifiers of the CSU, \rulename{Bind} can introduce branches
using different substitutions in $\mathcal{P}$, and \rulename{Decompose}
can be applied in parallel with \rulename{Bind}.

The form of the rules \rulename{OracleSucc} and \rulename{Bind} is similar: both
extend the current substitution. However,
they are designed following different principles. \rulename{OracleSucc} solves
the selected unification constraint using an efficient algorithm applicable only
to certain classes of terms. On the other hand, \rulename{Bind} is applied to
explore the whole search space for any given constraint. These rules are separated
in two to make \rulename{Bind} applicable only if \rulename{OracleSucc}
(or \rulename{OracleFail}) is not, so that possible solutions (or failures) are
detected early.

Our approach is to apply substitutions and $\alpha\beta\eta$-normalize terms
lazily. In this context, laziness means that the transitions
\rulename{Normalize$_{\alpha\eta}$}, \rulename{Normalize$_{\beta}$}, and
\rulename{Dereference} partially normalize and partially apply the constructed
substitution just enough to ensure that the heads are the ones we would get if
the substitution was fully applied and the term was fully normalized.
Additionally, the transitions that modify the constructed substitution,
\rulename{OracleSucc} and \rulename{Bind}, do not apply that substitution to the
unification pairs directly, but only extend it with a new binding. To support
lazy dereferencing, these rules must maintain the
invariant that all substitutions are idempotent. The invariant is easily
preserved if the substitution $\varrho$ from the definition of
\rulename{OracleSucc} and \rulename{Bind} is itself idempotent and no variable
mapped by $\sigma$ occurs in $\varrho F$, for any variable $F$ mapped by
$\varrho$.

The \textsf{OracleSucc} and \textsf{OracleFail} transitions
invoke oracles, such as pattern unification,
to compute a CSU faster,
produce fewer redundant unifiers, and
discover nonunifiability earlier.
In some cases, addition of oracles lets the procedure terminate more often.

In the literature, oracles are usually stated under the assumption that their
input belongs to the appropriate fragment. To check whether a unification
constraint is inside the fragment, we need to fully apply the substitution and
$\beta$-normalize the constraint.
To avoid these expensive operations and enable efficient oracle integration,
oracles must be redesigned to lazily discover whether the terms
belong to their fragment.
Most oracles contain a decomposition operation which requires only a partial application of
the substitution and only partial $\beta$-normalization. If one of the constraints resulting from decomposition is not in
the fragment, the original problem is not in the fragment.
This allows us to detect that the problem is not in the fragment without fully applying
the substitution and $\beta$-normalizing.


%
%
\looseness=-1
The core of the procedure lies in the \textsf{Bind} step, parameterized
by the mapping~$\mathcal{P}$ that determines which substitutions (called \emph{bindings})
to create. The bindings are defined as follows:
%
\begin{description}[itemsep=1\jot,leftmargin=3mm]
    \item[JP-style projection for $F$] Let $F$ be a free variable of
    type $\alpha_1 \rightarrow \cdots \rightarrow \alpha_n \rightarrow \beta$, where
    some $\alpha_i$ is equal to $\beta$ and $n > 0$. Then the JP-style projection binding is
        \[F \mapsto \lambda \overline{x}_n.\,x_i\]
    \item[Huet-style projection for $F$] Let $F$ be a free variable of type $\alpha_1
    \rightarrow \cdots \rightarrow \alpha_n \rightarrow \beta$, where some $\alpha_i
    = \gamma_1 \rightarrow \cdots \rightarrow \gamma_m \rightarrow \beta$, $n > 0$ and $m \geq 0$. Huet-style projection is
    \[F \mapsto \lambda \overline{x}_n. \, x_i \, (F_1 \, \overline{x}_n) \, \ldots \, (F_m \, \overline{x}_n)\]
    where the fresh free variables $\overline{F}_m$ and bound variables $\overline{x}_n$ are of appropriate types.
    \item[Imitation of $a$ for $F$]
    \looseness=-1 Let $F$ be a free variable of type $\alpha_1 \rightarrow
    \cdots \rightarrow \alpha_n \rightarrow \beta$ and $a$ be a free variable or a constant
    of type $\gamma_1 \rightarrow \cdots
    \rightarrow \gamma_m \rightarrow \beta$ where $n,m \geq 0$. The imitation binding is
    \[F \mapsto \lambda \overline{x}_n. \, a \, (F_1 \, \overline{x}_n) \ldots
    (F_m \, \overline{x}_n)\] where the fresh free variables $\overline{F}_m $ and
    bound variables $\overline{x}_n$ are of appropriate types.
    \item[Elimination for $F$] Let $F$ be a free variable of type $\alpha_1 \rightarrow
    \cdots \rightarrow \alpha_n \rightarrow \beta$, where $n >0$. In addition, let $1 \leq j_1 < \cdots < j_i \leq n$ and $i<n$. Elimination
    for the sequence ${(j_k)}_{k=1}^i$ is
    \[ F \mapsto \lambda \overline{x}_n. \, G \, x_{j_1} \, \ldots \, x_{j_i}\]
    where the fresh free variable $G$ as well as all $x_{j_k}$ are of appropriate type.
    We call fresh variables emerging from this binding in the role of $G$
    \emph{elimination variables}.
    \item[Identification for $F$ and $G$] Let $F$ and $G$ be different free variables. Furthermore, let
    the type of $F$ be $\alpha_1 \rightarrow \cdots \rightarrow \alpha_n
    \rightarrow \beta$ and the type of $G$ be $\gamma_1 \rightarrow \cdots
    \rightarrow \gamma_m \rightarrow \beta$, where $n,m\geq 0$. Then, the identification binding binds
    $F$ and $G$ with
    \begin{equation*}
        F \mapsto \lambda \overline{x}_n. \, H \, \overline{x}_n \, (F_1 \, \overline{x}_n)
        \ldots (F_m \, \overline{x}_n) \quad
        G \mapsto \lambda \overline{y}_m. \,
        H \, (G_1 \, \overline{y}_m) \ldots (G_n \, \overline{y}_m) \, \overline{y}_m
    \end{equation*}
    \looseness=-1
    where the fresh free variables $H,\overline{F}_m,\overline{G}_n$ and bound
    variables $\overline{x}_n$,$\overline{y}_m$ are of appropriate types.
    Fresh variables from this binding with the role of $H$ are called \emph{identification variables}.
    \item[Iteration for $F$]
    \looseness=-1
    Let $F$ be a free variable of the type $\alpha_1 \rightarrow \cdots
    \rightarrow \alpha_n \rightarrow \beta_1$ and let some $\alpha_i$ be the
    type $\gamma_1 \rightarrow \cdots \rightarrow \gamma_m \rightarrow
    \beta_2$, where $n>0$ and $m\geq 0$. Iteration for $F$ at $i$ is
    \[ F \mapsto \lambda \overline{x}_n.\,H\,\overline{x}_n \, (\lambda
    \overline{y}.\,x_i \, (G_1 \, \overline{x}_n \, \overline{y}) \ldots (G_m \,
    \overline{x}_n \, \overline{y}) )\]
    The free variables $H$ and $G_1, \ldots,
    G_m$ are fresh, and $\overline{y}$ is an arbitrary-length sequence of bound
    variables of arbitrary types. All new variables are of
    appropriate type. Due to indeterminacy of $\overline{y}$, this step is
    infinitely branching.
\end{description}

\noindent
The following mapping $\mathcal{P}_\mathsf{c}(\lambda \overline{x}.
\, s \unif \lambda \overline{x}. \, t)$ is used as the parameter $\mathcal{P}$ of the procedure:
\begin{itemize}
  \setlength\itemsep{1\jot}

    \item If the constraint is rigid-rigid, $\mathcal{P}_\mathsf{c}(\lambda \overline{x}.
    \, s \unif \lambda \overline{x}. \, t) = \varnothing$.

    \item If the constraint is flex-rigid,
    let
    $\mathcal{P}_\mathsf{c}(\lambda \overline{x}. \, F \,\overline{s} \unif \lambda \overline{x}. \, a\, \overline{t})$
    be
    \begin{itemize}
        \item an imitation of $a$ for $F$, if $a$ is a constant, and
        \item all Huet-style projections for $F$, if $F$ is not an identification variable.
    \end{itemize}

    \item If the constraint is flex-flex and the heads are different,
    let
    $\mathcal{P}_\mathsf{c}(\lambda \overline{x}. \, F \,\overline{s} \unif \lambda \overline{x}. \, G\, \overline{t})$
    be
    \begin{itemize}
        \item all identifications and iterations for both $F$ and $G$, and
        \item all JP-style projections for non-identification variables among $F$ and $G$.
    \end{itemize}

    \item If the constraint is flex-flex and the heads are identical,
    we distinguish two cases:
    \begin{itemize}
        \item if the head is an elimination variable,
        $\mathcal{P}_\mathsf{c}(\lambda \overline{x}.\, s \unif \lambda \overline{x}.\, t ) = \varnothing$;
        \item otherwise, let
        $\mathcal{P}_\mathsf{c}(\lambda \overline{x}. \, F \,\overline{s} \unif \lambda \overline{x}. \, F\, \overline{t})$
        be all iterations for $F$ at arguments of functional type and
        all eliminations for~$F$.
    \end{itemize}
\end{itemize}
\ourpara{Comparison with the JP Procedure}
%
%
\looseness=-1
The JP procedure enumerates unifiers by constructing a search tree with nodes of the form $(s \unif t, \sigma)$, where $s \unif t$ is the current unification
problem and $\sigma$ is the substitution built so far.
The initial node consists of the input problem and the
identity substitution. Success nodes are nodes of the form $(s \unif s, \sigma)$.
The set of all substitutions contained success nodes form a CSU\@.

To determine the
child nodes of a node $(s \unif t, \sigma)$, the procedure  computes the common context $C$ of $s$ and $t$,
yielding term pairs $(s_1, t_1), \ldots, (s_n, t_n)$, called \emph{disagreement
pairs}, such that $s = C[s_1,\ldots,s_n]$ and $t = C[t_1,\ldots,t_n]$. It
chooses one of the disagreement pairs $(s_i, t_i)$.
Depending on the context $C$ and the chosen disagreement pair $(s_i, t_i)$,
it determines a set of bindings $\mathcal{P}_\mathsf{JP}(C,s_i, t_i)$.
For each of the bindings $\rho\in\mathcal{P}_\mathsf{JP}(C,s_i, t_i)$, it creates a child node
$(\nf{(\varrho{s})}{\beta\eta} \unif \nf{(\varrho{t})}{\beta\eta}, \varrho\sigma)$,
where $\nf{u}{\beta\eta}$ denotes a
$\beta\eta$-normal form of a term $u$.

The set of bindings $\mathcal{P}_\mathsf{JP}(C,s_i, t_i)$
is based on the heads of $s_i$ and $t_i$, and the free variables occurring above
$s_i$ and $t_i$ in $C$. The set $\mathcal{P}_\mathsf{JP}(C,s_i, t_i)$ contains
\begin{itemize}
  \item all JP-style projections for free variables that are heads of $s_i$ or $t_i$;%
\footnote{In JP's formulation of projection, they explicitly mention that the projected argument must be of base type.
In our presentation, this follows from $\beta$ being of base type by the convention introduced in Section~\ref{sec:background}.}

  \item an imitation of $a$ for $F$ if a free variable $F$ is the head of $s_i$ and a free
  variable or constant $a$ is the head of $t_i$ (or vice versa);
  \item all eliminations for free variables occurring above the
  chosen disagreement pair
  eliminating only the argument containing the disagreement pair;
  \item an identification for the heads of $s_i$ and $t_i$ if they are both free variables; and
  \item all iterations for the heads of $s_i$ and
  $t_i$ if they are free variables, and for all free variables occurring above the disagreement pair.%
\footnote{In JP's formulation of iteration, it is not immediately obvious whether they intend to require iteration of arguments of base type.
However, their Definition 2.4~\cite{jp-unif-76} shows that they do.}
\end{itemize}

Architecturally, the most noticeable difference between the JP procedure and ours is
the representation of the problem: The JP procedure works on a single constraint,
while our procedure maintains a multiset of constraints. At a first glance, this
is a merely presentational change. However, it has consequences for termination,
performance, and redundancy of the procedure.

\looseness=-1
Since the JP procedure never decomposes the common context of its only constraint, it allows
iteration or elimination to be applied at a free variable above the disagreement
pair, even if  bindings were already applied below that free variable. This can
lead to many different paths to the same unifier. In contrast, our procedure
makes the decision which binding to apply to a flex-flex pair with the same
head as soon as it is observed. Also, it explores the possibility of not
applying a binding and decomposing the pair. In either way, the flex-flex pair
is never revisited, which improves the performance and returns fewer redundant
unifiers. We show that this restriction prunes the search space without
influencing the completeness.

Our procedure makes the choice of child nodes based only on the heads of the
chosen unification constraint. In contrast, the JP procedure tracks all the
variables occurring in the common context. Thus, lazy normalization and lazy
variable substitution cannot be integrated in the JP procedure a straightforward
fashion. Moreover, as it does not feature a rule similar to
\rulename{Decompose}, it always retraverses the already unified part of the
problem, resulting in poor performance on deep terms.

\looseness=-1
One of the main drawbacks of the JP procedure is that it features a highly
explosive, infinitely branching iteration rule. This rule is a more general
version of Huet-style projection. Its universality enables finding elements of
CSU for flex-flex pairs, for which Huet-style projection does not suffice.
However, the JP procedure applies iteration indiscriminately on both flex-flex and
flex-rigid pairs. We discovered that our procedure remains complete if iteration
is applied only on flex-flex pairs, and Huet-style projection only on
flex-rigid ones. This helps our procedure terminate more often than the JP
procedure. As a side-effect, the restriction of our procedure to
the preunification problem is a graceful generalization of Huet procedure, with
additional improvements such as oracles, lazy substitution, and lazy
$\beta$-reduction.

The bindings of our procedure contain further optimizations that are absent in
the JP procedure: The JP procedure applies eliminations for only one parameter
at a time, yielding multiple paths to the same unifier. It applies imitations to
flex-flex pairs, which we found to be unnecessary. Similarly, we found out that
tracking which rules introduced which variables can avoid computing redundant
unifiers: It is not necessary to apply iterations and eliminations on
elimination variables, and projections on identification variables.

\ourpara{Examples}
We present some examples that demonstrate advantages of our procedure.
The displayed branches of the constructed trees are not necessarily exhaustive.
We abbreviate
JP-style projection as \textsf{JP\,Proj},
imitation as \textsf{Imit},
identification as \textsf{Id},
\rulename{Decompose} as \textsf{Dc},
\rulename{Dereference} as \textsf{Dr},
$\textsf{Normalize}_\beta$ as $\textsf{N}_\beta$,
and
\textsf{Bind} of a binding $x$ as $\textsf{B(}x\textsf{)}$.
Transitions of the JP procedure are denoted by $\Longrightarrow$.
For the JP transitions we implicitly apply the generated bindings and fully
normalize terms, which significantly shortens JP derivations.

\begin{exa}
The JP procedure does not terminate on the problem $G \unif \cst{f} \, G$:
\[(G \unif \cst{f} \, G, \makeop{id})
\overset{\textsf{Imit}}{\jpunifarrow}
(\cst{f}\,G' \unif \cst{f}^2 \, G', \sigma_1)
\overset{\textsf{Imit}}{\jpunifarrow}
(\cst{f}^2 \, G'' \unif \cst{f}^3 \, G'', \sigma_2)
\overset{\textsf{Imit}}{\jpunifarrow}\cdots\]
where $\sigma_1 = \{G \mapsto  \lambda x. \, \cst{f} \, G' \}$
and $\sigma_2 = \{G' \mapsto  \lambda x. \, \cst{f} \, G'' \}\sigma_1$.
By including any oracle that supports the first-order occurs check, such as the
pattern oracle or the fixpoint oracle described in Section~\ref{sec:implementation},
our procedure gracefully generalizes first-order unification:
\[(\{G \unif \cst{f} \, G\}, \makeop{id})
\overset{\textsf{OracleFail}}{\unifarrow}
\bot
\]
\end{exa}

\begin{exa}
The following derivation illustrates the advantage of the \textsf{Decompose} rule.
\begin{align*}
&
(\{\cst{h}^{100} \, (F \, \cst{a}) \unif \cst{h}^{100} \, (G \,\cst{b})\}, \makeop{id})
\overset{\textsf{Dc}^{100}}{\unifarrow}
(\{F \, \cst{a} \unif G \,\cst{b}\}, \makeop{id})
\overset{\textsf{B(Id)}}{\unifarrow}
(\{F \, \cst{a} \unif G \,\cst{b}\}, \sigma_1)
\\ &
\overset{\textsf{Dr}+\textsf{N}_\beta}{\unifarrow}
(\{H \, \cst{a}\, (F' \, \cst{a}) \unif H \, (G'\,\cst{b}) \,\cst{b}\}, \sigma_1)
\overset{\textsf{Dc}}{\unifarrow}
(\{\cst{a} \unif G'\,\cst{b}, F' \, \cst{a} \unif \cst{b}\}, \sigma_1)
\\ &
\overset{\textsf{B(Imit)}}{\unifarrow}
(\{\cst{a} \unif G'\,\cst{b}, F' \, \cst{a} \unif \cst{b}\}, \sigma_2)
\overset{\textsf{Dr}+\textsf{N}_\beta}{\unifarrow}
(\{\cst{a} \unif \cst{a}, F' \, \cst{a} \unif \cst{b}\}, \sigma_2)
\overset{\textsf{Delete}}{\unifarrow}
(\{F' \, \cst{a} \unif \cst{b}\}, \sigma_2) \\
&\overset{\textsf{B(Imit)}}{\unifarrow}
(\{F' \, \cst{a} \unif \cst{b}\}, \sigma_3)
\overset{\textsf{Dr}+\textsf{N}_\beta}{\unifarrow}
(\{\cst{b} \unif \cst{b}\}, \sigma_3)
\overset{\textsf{Delete}}{\unifarrow}
(\varnothing, \sigma_3)
\overset{\textsf{Succeed}}{\unifarrow}
\sigma_3
\end{align*}
where
$\sigma_1 = \{
    F\mapsto \lambda x. \, H \, x\, (F' \, x),
    G\mapsto \lambda y. \, H \, (G' \, y) \, y
    \}$;
$\sigma_2 = \{
  G' \mapsto \lambda x. \, \cst{a}
  \}\sigma_1$; and
$\sigma_3 = \{
  F' \mapsto \lambda x. \, \cst{b}
  \}\sigma_2$.
The JP procedure produces the same intermediate substitutions $\sigma_1$ to $\sigma_3$,
but since it does not decompose the terms, it retraverses the common context
$\cst{h}^{100}\,[\;]$ at every step to identify the contained disagreement pair:
\pagebreak[1]
\begin{align*}
    &
    (\cst{h}^{100} \, (F \, \cst{a}) \unif \cst{h}^{100} \, (G \,\cst{b}), \makeop{id})
    \overset{\textsf{Id}}{\jpunifarrow}
    (\cst{h}^{100} \, (H \, \cst{a}\, (F' \, \cst{a})) \unif \cst{h}^{100} \, (H \, (G' \,\cst{b}) \,\cst{b}), \sigma_1)
    \\ &
    \overset{\textsf{Imit}}{\jpunifarrow}
    (\cst{h}^{100} \, (H \, \cst{a}\, (F' \, \cst{a})) \unif \cst{h}^{100} \, (H \, \cst{a} \,\cst{b}), \sigma_2)
    \overset{\textsf{Imit}}{\jpunifarrow}
    (\cst{h}^{100} \, (H \, \cst{a}\, \cst{b}) \unif \cst{h}^{100} \, (H \, \cst{a} \,\cst{b}), \sigma_3)
    \overset{\textsf{Succeed}}{\jpunifarrow}
    \sigma_3
\end{align*}

\end{exa}

\begin{exa}
  Even when no oracles are used, our procedure performs better than the JP procedure on small, simple
  problems. Consider the problem $F \, \cst{a} \unif \cst{a}$, which has a two
  element CSU:\@ $\{ F \mapsto \lambda x. \, x, F \mapsto \lambda x. \, \cst{a}
  \}$. Our procedure terminates, finding both unifiers:

\begin{align*}
    &
    (\{ F \, \cst{a} \unif \cst{a} \}, \makeop{id})
    \overset{\textsf{B(JP Proj)}}{\unifarrow}
    (\{ F \, \cst{a} \unif \cst{a} \},  \{ F \mapsto \lambda x. \, x \} )
    \overset{\textsf{Dr}+\textsf{N}_\beta}{\unifarrow}
    (\{ \cst{a} \unif \cst{a} \}, \{  F \mapsto \lambda x. \, x \})
    \\ &
    \overset{\textsf{Delete}}{\unifarrow}
    (\varnothing, \{  F \mapsto \lambda x. \, x \})
    \overset{\textsf{Succeed}}{\unifarrow}
    \{  F \mapsto \lambda x. \, x \}
    \\[1\jot]
    &
    (\{ F \, \cst{a} \unif \cst{a} \}, \makeop{id})
    \overset{\textsf{B(Imit)}}{\unifarrow}
    (\{ F \, \cst{a} \unif \cst{a} \},  \{ F \mapsto \lambda x. \, \cst{a} \} )
    \overset{\textsf{Dr}+\textsf{N}_\beta}{\unifarrow}
    (\{ \cst{a} \unif \cst{a} \}, \{  F \mapsto \lambda x. \, \cst{a} \})
    \\ &
    \overset{\textsf{Delete}}{\unifarrow}
    (\varnothing, \{  F \mapsto \lambda x. \, \cst{a} \})
    \overset{\textsf{Succeed}}{\unifarrow}
    \{  F \mapsto \lambda x. \, \cst{a} \}
\end{align*}

The JP procedure finds those two unifiers as well, but it does not terminate as
it applies iterations to $F$.

\end{exa}

\begin{exa}
The search space restrictions also allow us to prune some redundant unifiers. Consider the problem $F \, (G\,\cst{a}) \unif F \,
\cst{b}$, where $\cst{a}$ and $\cst{b}$ are of base type.
Our procedure produces only one failing branch and the following two successful branches:
\begin{align*}
&
(\{ F\,(G\,\cst{a}) \unif F \, \cst{b} \}, \makeop{id})
\overset{\textsf{Dc}}{\unifarrow}
(\{ G\,\cst{a} \unif \cst{b} \}, \makeop{id})
\overset{\textsf{B(Imit)}}{\unifarrow}
(\{ G\,\cst{a} \unif \cst{b} \}, \{ G \mapsto \lambda x.\, \cst{b} \})
\\ &
\overset{\textsf{Dr}+\textsf{N}_\beta}{\unifarrow}
(\{ \cst{b} \unif \cst{b} \}, \{ G \mapsto \lambda x.\, \cst{b} \})
\overset{\textsf{Delete}}{\unifarrow}
(\varnothing, \{ G \mapsto \lambda x.\, \cst{b} \})
\overset{\textsf{Succeed}}{\unifarrow}
\{ G \mapsto \lambda x.\, \cst{b} \}
\\[1\jot]
&
(\{ F\,(G\,\cst{a}) \unif F \, \cst{b} \}, \makeop{id})
\overset{\textsf{B(Elim)}}{\unifarrow}
(\{ F\,(G\,\cst{a}) \unif F \, \cst{b} \}, \{ F \mapsto \lambda x.\, F' \})
\\ &
\overset{\textsf{Dr}+\textsf{N}_\beta}{\unifarrow}
(\{ F' \unif F' \}, \{ F \mapsto \lambda x.\, F' \})
\overset{\textsf{Delete}}{\unifarrow}
(\varnothing, \{ F \mapsto \lambda x.\, F' \})
\overset{\textsf{Succeed}}{\unifarrow}
\{ F \mapsto \lambda x.\, F' \}
\end{align*}
The JP procedure additionally produces the following redundant unifier:
\begin{align*}
    &
    ( F\,(G\,\cst{a}) \unif F \, \cst{b} , \makeop{id})
    \overset{\textsf{JP Proj}}{\jpunifarrow}
    ( F \, \cst{a} = F \, \cst{b} , \{ G \mapsto \lambda x. \, x \})
    \\ &
    \overset{\textsf{Elim}}{\jpunifarrow}
    (F' = F' , \{ G \mapsto \lambda x. \, x, F \mapsto \lambda x. \, F' \})
    \overset{\textsf{Succeed}}{\jpunifarrow}
    \{ G \mapsto \lambda x. \, x, F \mapsto \lambda x. \, F' \}
\end{align*}
Moreover, the JP procedure does not terminate because an infinite number of iterations is
applicable at the root. Our procedure terminates in this case since we only apply iteration
binding for non base-type arguments, which $F$ does not have.
\end{exa}


\ourpara{Pragmatic Variant} We structured our procedure so that most of
the unification machinery is contained in the \rulename{Bind} step. Modifying
$\mathcal{P}$, we can sacrifice completeness and obtain a pragmatic variant of
the procedure that often performs better in practice.
Our preliminary experiments showed that using mapping $\mathcal{P}_\mathsf{p}$ defined as follows is a reasonable compromise between completeness and
performance:
\begin{itemize}
    \setlength\itemsep{1\jot}
    \item If the constraint is rigid-rigid, $\mathcal{P}_\mathsf{p}(\lambda \overline{x}.
    \, s \unif \lambda \overline{x}. \, t) = \varnothing$.
    \item If the constraint is flex-rigid,
    let
    $\mathcal{P}_\mathsf{p}(\lambda \overline{x}. \, F \,\overline{s} \unif \lambda \overline{x}. \, a\, \overline{t})$
    be
    \begin{itemize}
        \item an imitation of $a$ for $F$, if $a$ is a constant, and
        \item all Huet-style projections for $F$ if $F$ is not an identification variable.
    \end{itemize}

    \item If the constraint is flex-flex and the heads are different,
    let
    $\mathcal{P}_\mathsf{p}(\lambda \overline{x}. \, F \,\overline{s} \unif \lambda \overline{x}. \, G\, \overline{t})$
    be
    \begin{itemize}
      \item an identification binding for $F$ and $G$, and
      \item all Huet-style projections for $F$ if $F$ is not an identification variable
  \end{itemize}

  \item If the constraint is flex-flex and the heads are identical, we distinguish two cases:
      \begin{itemize}
      \item if the head is an elimination variable,
      $\mathcal{P}_\mathsf{p}(\lambda \overline{x}. \, F \,\overline{s} \unif \lambda \overline{x}. \, F\, \overline{t}) = \varnothing$;
      \item otherwise, let $\mathcal{P}_\mathsf{p}(\lambda \overline{x}. \, F \,\overline{s} \unif \lambda \overline{x}. \, F\, \overline{t})$
            be the set of all eliminations bindings for $F$.
  \end{itemize}
\end{itemize}

The pragmatic variant of our procedure removes all iteration bindings to enforce finite branching.
Moreover, it
imposes limits
on the number of bindings applied,
counting the applications of bindings locally, per
constraint.
It is useful to distinguish the Huet-style projection cases where
$\alpha_i$ is a base type (called \emph{simple projection}), which always
reduces the problem size, and the cases where $\alpha_i$ is a functional type (called \emph{functional
projection}). We limit the number applications of the following bindings:
functional projections, eliminations, imitations and identifications.
In addition,
a limit on the total number of applied bindings can be set.
An elimination binding that removes $k$ arguments counts as $k$
elimination steps.
Due to these limits, the pragmatic variant
terminates.

To fail as soon as any of the limits
is reached, the pragmatic variant employs an additional oracle.
If this oracle determines that the limits are reached and the constraint is of the form
$\lambda \overline{x}.\, F \, \overline{s}_m \unif \lambda \overline{x}.\, G \, \overline{t}_n$, it returns a \emph{trivial unifier}---a substitution $\{ F \mapsto \lambda \overline{x}_m.\, H, G \mapsto \lambda \overline{x}_n.\, H \}$, where $H$ is a fresh variable; if the limits are reached and
the constraint is flex-rigid, the oracle fails; if the limits are not reached,
it reports that terms are outside its fragment. The trivial unifier prevents the procedure from failing on
easily unifiable flex-flex pairs.

Careful tuning of each limit optimizes the procedure for a specific class of problems.
For problems originating from proof assistants, shallow unification depth usually suffices. However, hard
hand-crafted problems often need
deeper unification.

    \section{Proof of Completeness}%
    \label{sec:proof-of-completeness}


  Like the JP procedure, our procedure misses no unifiers:

  \begin{thm}%
  \label{thm:completeness}
  The procedure described in Section~\ref{sec:the-unification-procedure}
  parametrized by $\mathcal{P}_\mathsf{c}$ is complete, meaning that the
  substitutions on the leaves of the constructed tree form a CSU\@.
  More precisely, let $E$ be a multiset of constraints
  and let $V$ be the supply of fresh variables provided to the procedure.
  Then for any unifier $\varrho$ of $E$ there exists
  a derivation $\left(E,\makeop{id}\right)\longrightarrow^{*}\sigma$ and a
  substitution $\theta$ such that for all free variables $X\not\in V$, we have $\varrho X = \theta\sigma X$.
  \end{thm}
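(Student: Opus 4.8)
The plan is to prove completeness by exhibiting, for a given unifier $\varrho$ of $E$, an infinite-descent argument along the unification tree: starting from $(E,\makeop{id})$, I would show that as long as the current node is not a leaf, there is always a transition to a child node $(E',\sigma')$ that keeps a suitably chosen ``remaining unifier'' $\varrho'$ compatible with $\varrho$ (in the sense that $\varrho' \sigma' X = \varrho X$ for all $X \notin V$ on the original variables), and that some well-founded measure of the pair $(E',\varrho')$ strictly decreases. Since the measure is well-founded, the process must terminate, and it can only terminate at a \rulename{Succeed} leaf, yielding the desired $\sigma$ with $\theta = \varrho'$ at that leaf. The key is to design both the invariant linking $\varrho'$ to $\varrho$ through $\sigma'$, and the termination measure, carefully enough that every transition can be shown to respect them.

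First I would set up the invariant. Given a node $(E,\sigma)$ reachable in the tree, I call $\varrho'$ a \emph{solution relative to $\sigma$} if $\varrho'$ unifies $E$, is compatible with the original $\varrho$ via $\sigma$ as above, and satisfies the idempotence/freshness bookkeeping conditions the procedure maintains (so that lazy dereferencing is sound). For the base case $\varrho' = \varrho$ works since $\sigma = \makeop{id}$. The core of the argument is the inductive step: assuming $(E,\sigma)$ is not a leaf and $\varrho'$ is a solution relative to $\sigma$, I pick the selected constraint $S(E) = (\lambda\overline{x}.\,s \unif \lambda\overline{x}.\,t)$ and do a case analysis on the shape of $\nf{(\varrho' s)}{}$ and $\nf{(\varrho' t)}{}$ and on which transitions are enabled. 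The administrative transitions (\rulename{Normalize$_{\alpha\eta}$}, \rulename{Normalize$_\beta$}, \rulename{Dereference}, \rulename{Delete}) preserve the set of unifiers and can be kept from looping by a secondary component of the measure. \rulename{Decompose} is justified because $\varrho'$ unifies $\lambda\overline{x}.\,a\,\overline{s}_m \unif \lambda\overline{x}.\,a\,\overline{t}_m$ iff it unifies all $s_i \unif t_i$. When an oracle applies, $\varrho'$ restricted to $\sigma s \unif \sigma t$ factors through some $\varrho\in U$ by definition of CSU, so \rulename{OracleSucc} can follow that branch (and \rulename{OracleFail} never fires on a solvable constraint). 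The substantive case is a flex-rigid or flex-flex selected pair with no administrative transition enabled, where I must show that one of the bindings in $\mathcal{P}_\mathsf{c}(s\unif t)$ is a prefix of $\varrho'$, i.e. there is a $\rho \in \mathcal{P}_\mathsf{c}(s\unif t)$ and a solution $\varrho''$ relative to $\rho\sigma$ with $\varrho'' \rho = \varrho'$ on the relevant variables, and with strictly smaller measure.

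The main obstacle is exactly this last case: showing that the restricted binding set $\mathcal{P}_\mathsf{c}$ — which deliberately omits many JP bindings (no imitation on flex-flex pairs, projections only on non-identification variables, iteration only at functional-type arguments, eliminations/iterations forced eagerly and forbidden on elimination variables, Huet-style rather than JP-style projections for flex-rigid) — is still ``rich enough'' to reach every unifier. This requires a delicate normal-form analysis of an arbitrary unifier $\varrho'$: one must argue that $\varrho'$ can always be presented so that the head of $\varrho' F$ at the selected variable is obtained by one of the permitted bindings, and that the bookkeeping roles (elimination/identification variables) introduced by earlier bindings never block the only applicable rule. I expect this to hinge on a structural lemma about how unifiers decompose at a disagreement — essentially an adaptation of Jensen and Pietrzykowski's completeness analysis — together with a careful choice of the termination measure (likely a lexicographic combination of something like the size of $\varrho'$ applied to $E$, or the multiset of sizes of $\varrho' X$ for the relevant $X$, with the administrative steps handled by lower-order components). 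Making the measure decrease under \rulename{Bind} while the constraint multiset textually grows (since \rulename{Bind} leaves $E$ unchanged and only changes $\sigma$) is the technically trickiest point, and I would address it by measuring $\varrho'$ rather than $E$, so that each binding ``consumes'' a piece of the witness unifier.
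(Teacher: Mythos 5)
Your proposal follows essentially the same route as the paper's proof: a remainder substitution $\varrho'$ maintained alongside $(E',\sigma')$ with the invariant $\varrho'\sigma' X=\varrho X$ off the fresh variables, a lexicographic well-founded measure on $(E',\varrho')$ in which \rulename{Bind} steps decrease by ``consuming'' part of the witness unifier, and a case analysis on the selected constraint that reduces each case to a per-binding lemma adapted from Jensen and Pietrzykowski, with bookkeeping invariants on identification and elimination variables to justify the restrictions in $\mathcal{P}_\mathsf{c}$. The only notable divergence is cosmetic: the paper avoids your ``secondary measure component for administrative transitions'' by working with $\alpha\beta\eta$-equivalence classes and eagerly applied substitutions, arguing the lazy rules affect only efficiency.
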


  Taking a high-level view, this theorem is proved by incrementally defining states
  $\left(E_{j},\sigma_{j}\right)$ and \emph{remainder} \emph{substitutions}
  $\varrho_{j}$ starting with
  $\left(E_{0},\sigma_{0}\right)=\left(E,\makeop{id}\right)$ and
  $\varrho_{0}=\varrho$. The substitution $\varrho_{j}$ is what remains to be added
  to $\sigma_{j}$ to reach $\varrho_{0}$. States are defined
  so that the shape of the selected constraint from $E_{j}$ and the remainder
  substitution guide the choice of applicable transition rule.
  We employ a measure based on values of $E_j$ and $\varrho_j$ that decreases with
  each application of the rules. Therefore, eventually, we will reach the target substitution
  $\sigma$.

    \looseness=-1
    In the remaining of this section, we view terms as
    $\alpha\beta\eta$-equivalence classes, with the $\eta$-long $\beta$-normal
    form as their canonical representative. Moreover, we consider all
    substitutions to be fully applied. These assumptions are justified because
    all bindings depend only on the head of terms and hence replacing the lazy
    transitions $\rulename{Normalize}_{\alpha\eta}$,
    $\rulename{Normalize}_{\beta}$, and $\rulename{Dereference}$ by eager
    counterparts only affects the efficiency but not the overall behavior of our
    procedure.

    We now give the detailed completeness proof of Theorem~\ref{thm:completeness}.
    Our proof is an adaptation of the proof given by Jensen and Pietrzykowski~\cite{jp-unif-76}.
    Definitions and lemmas are reused, but are combined together differently to
    suit our procedure. We start by listing all reused definitions and lemmas
    from the original JP proof. The ``JP'' labels in
    their statements refer to the corresponding lemmas and definitions from
    the original proof.


    \begin{defi}[{$\jp[1]D6$}]\label{def:disagreement-pairs}
    \looseness=-1
    Given two terms $t$ and $s$ and their common context $C$,
    we can write $t$ as $C[\seq t]$ and $s$ as $C[\seq s]$
    for some $\seq t$ and $\seq s$.
    The pairs $(s_{j},t_{j})$ are called \emph{disagreement pairs.}
    \end{defi}
    \begin{defi}
    [$\jp D1$]\label{def:opponents}
    Given two terms $t$ and $s$,
    let $\param[\seq x] t'$ and $\param[\seq y] s'$
    be respective $\alpha$-equivalent terms such that
    their parameters $\seq x$ and $\seq y$ are disjoint.
    Then the disagreement pairs of $t'$ and $s'$ are called
    \emph{opponent pairs} in $t$ and $s$.
    \end{defi}
    \begin{lem}
      [$\jp L{3\text{ (1)}}$]\label{lem:principal-lemma}Let $\varrho$ be a substitution
      and $X$, $Y$ be free variables such that $\varrho\left(X\,\seq s\right)=\varrho\left(Y\,\seq t\right)$
      for some term tuples $\seq s$ and $\seq t$. Then for every opponent
      pair $u$, $v$ in $\varrho X$ and $\varrho Y$
      (Definition~\ref{def:opponents}), the head of $u$ or $v$ is a parameter
      of $\varrho X$ or $\varrho Y$.
      \end{lem}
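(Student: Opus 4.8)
The plan is to argue by contradiction, exploiting the fact that the common context underlying the opponent pairs descends only through positions whose head occurs in \emph{both} $\varrho X$ and $\varrho Y$, and that such a shared head can therefore never be one of the (disjoint) parameters. Throughout I assume, in line with the conventions set out at the beginning of this section, that terms are $\eta$-long $\beta$-normal and substitutions are fully applied; by $\eta$-expanding I also assume that $X\,\seq s$ and $Y\,\seq t$ are of base type and that $\varrho X$ and $\varrho Y$ have exactly as many leading binders as there are arguments in $\seq s$ and $\seq t$. Following Definition \ref{def:opponents}, write $\varrho X = \lambda\seq x.\,t'$ and $\varrho Y = \lambda\seq y.\,s'$ with $t'$, $s'$ not abstractions and $\seq x$, $\seq y$ disjoint from each other and from the remaining variables in play, and let $C$ be the common context of $t'$ and $s'$, so that the opponent pairs are exactly the disagreement pairs $(u_j,v_j)$ sitting at the holes of $C$. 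Fix one of them, $(u,v)$, at position $p$ of $C$; write $a$ for the head of $u$ and $b$ for the head of $v$; note $a\neq b$, since $(u,v)$ is a disagreement pair. For contradiction, assume that neither $a$ nor $b$ is a parameter of $\varrho X$ or of $\varrho Y$.

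The first key step is that \emph{no} head on the path from the root of $C$ down to and including position $p$ belongs to $\seq x\cup\seq y$. For positions strictly above $p$ this is because the common-context construction descends past a position only when $t'$ and $s'$ have the same head there, while a variable of $\seq x$ cannot occur in $s'$ and a variable of $\seq y$ cannot occur in $t'$; for position $p$ itself it is exactly the contradiction hypothesis on $a$ and $b$. The second key step draws the consequence. Computing $\varrho(X\,\seq s)$ amounts, under the normalization assumptions, to substituting $\varrho\,\seq s$ for the leading binders $\seq x$ in the body $t'$ and $\beta$-normalizing; since a $\beta$-redex can only be created where a substituted binder occurs in head position, and by the first step none of $\seq x$ occurs in head position anywhere on the path down to $p$, this substitution neither alters nor creates a redex in the portion of the skeleton of $C$ above $p$, and all normalization happens strictly inside the subterms filling the holes. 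Hence in the common term $P:=\varrho(X\,\seq s)=\varrho(Y\,\seq t)$ the subterm at position $p$ is the $\beta$-normal form of $\{\seq x\mapsto\varrho\,\seq s\}\,u$, whose head is still $a$; the symmetric argument on the $Y$-side shows that this same subterm is the $\beta$-normal form of $\{\seq y\mapsto\varrho\,\seq t\}\,v$, whose head is $b$. As $P$ is a single term this forces $a=b$, contradicting $a\neq b$. Therefore $a$ or $b$ is a parameter; and since $u$ is a subterm of $t'$ and $v$ a subterm of $s'$, this means the head of $u$ is a parameter of $\varrho X$ or the head of $v$ is a parameter of $\varrho Y$.

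I expect the main obstacle to be making the second step fully rigorous: one must check carefully that substituting the arguments for the leading binders and then $\beta$-normalizing leaves the part of the common context above the chosen hole — and hence the position string $p$ and the heads along it — genuinely undisturbed, the point being precisely that redexes are created only where a leading binder sits in head position, which the first step rules out. The $\eta$-long and arity normalizations at the outset also need a little care — in particular the degenerate cases in which $\seq s$ or $\seq t$ is shorter than the arity of $X$ or $Y$, where the terms in question turn out to have no opponent pairs at all — but these are routine once the positional argument is in hand and mirror the corresponding passage in Jensen and Pietrzykowski's proof.
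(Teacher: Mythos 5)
The paper does not actually prove this lemma: it is one of the results imported verbatim from Jensen and Pietrzykowski's article, so there is no in-paper proof to compare against. Your reconstruction is correct and is essentially the intended argument: the common context of the two bodies descends only through heads occurring in both bodies, hence through non-parameters; if the heads of the chosen disagreement pair were also non-parameters, then substituting $\varrho\,\seq s$ for $\seq x$ (resp.\ $\varrho\,\seq t$ for $\seq y$) creates no redex anywhere on the spine above or at the hole, so the two distinct heads would survive at the same position of the common normal form $\varrho(X\,\seq s)=\varrho(Y\,\seq t)$ --- a contradiction. The step you flag as delicate (that $\beta$-normalization cannot disturb the skeleton above the hole) is indeed the crux, and your justification --- redexes arise only where a substituted binder sits in head position, which the first step excludes along the path --- is the right one; the only further detail worth making explicit is that the $\alpha$-renamings performed by the common-context construction involve only variables bound inside the bodies, never the parameters, so they cannot smuggle a parameter of $\varrho X$ into $s'$ or vice versa.
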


    In contrast to applied constants, applied variables should not be eagerly decomposed.
    For a constant $\cst{f}$, if $\cst{f}\,\seq s \unif \cst{f}\,\seq t$ has a unifier,
    that unifier must clearly also unify $s_{i} \unif t_{i}$ for each~$i$.
    For a free variable $X$, a unifier of $X\,\seq s \unif X\,\seq t$ does not necessarily
    unify $s_{i} \unif t_{i}$.
    The concept of $\omega$-simplicity is
    a criterion on unifiers that captures some of the cases where eager decomposition is possible.
    Non-$\omega$-simplicity
    on the other hand is the main trigger of $\rulename{Iteration}$\textemdash the
    most explosive binding of our procedure.

    \begin{defi}
    [$\jp D2$]\label{def:omega-simple}An occurrence of a parameter $x$ of term $t$
    in the body of $t$ is $\omega$\emph{-simple }if both
    \begin{enumerate}
    \item the arguments of $x$ are distinct and are exactly (the $\eta$-long
    forms of) all of the variables bound in the body of $t$, and
    \item this occurrence of $x$ is not in an argument of any parameter
    of $t$.
    \end{enumerate}
    \end{defi}
    This definition is slightly too restrictive for our purposes.
    It is unfortunate that condition~1 requires $x$ to be applied to
    \emph{all} instead of just some of the bound variables. The JP
    proof would probably work with such a relaxation, and the definition
    would then cover all cases where eager decomposition is possible.
    However, to reuse the JP lemmas, we stick to the original
    notion of $\omega$-simplicity and introduce the following relaxation:
    \begin{defi}
    An occurrence of a parameter $x$ of term $t$ in the body of $t$
    is \emph{base-simple} if it is $\omega$-simple or both
    \begin{enumerate}
    \item $x$ is of base type, and
    \item this occurrence of $x$ is not in an argument of any parameter
    of $t$.
    \end{enumerate}
    \end{defi}
    \begin{lem}\label{lem:base-simple->inj.}
    Let $s$ have parameters $\seq x$ and a subterm $x_{j}\,\seq v$
    where this occurrence of $x_{j}$ is base-simple.
    Then for any sequence $\seq t$ of (at least $j$) terms, the body of
    $t_{j}$ is a subterm of $s\,\seq t$ (after normalization) at the position
    of $x_{j}\,\seq v$ up to renaming of the parameters of $t_{j}$.
    To compare positions of $s$ and $s\,\seq t$, ignore
    the parameter count mismatch.
    \end{lem}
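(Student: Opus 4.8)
The plan is to follow the position of the distinguished occurrence of $x_j\,\seq v$ through the $\beta$-normalization of $s\,\seq t$. Write $s = \lambda \seq x_n.\,s'$ with body $s'$, and let $p$ be the position of the occurrence of $x_j\,\seq v$ in question inside $s'$. The structural key --- furnished by clause~(2) of base-simplicity, which $\omega$-simplicity also requires --- is that every step on the path from $s'$ to $p$ either descends under a $\lambda$-binder or descends into an argument of a subterm whose head is a constant, a free variable, or a variable bound strictly inside $s'$; it never descends into an argument headed by a parameter $x_i$. Since the substitution $[\seq x_n\mapsto\seq t]$ performed in computing $\nf{(s\,\seq t)}{}$ leaves all these heads untouched, the chain of subterms along $p$ survives it up to $\alpha$-renaming of binders, no $\beta$-redex created by the substitution sits strictly above $p$, and every parameter occurrence in $s'$ other than this head of $x_j\,\seq v$ lies in a subterm disjoint from $p$. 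Consequently the subterm of $\nf{(s\,\seq t)}{}$ at position $p$ --- reading positions through the body $s'$, i.e.\ ignoring the leading parameter positions, which is exactly the ``parameter count mismatch'' the statement asks us to disregard --- is $\nf{t_j\,(\seq v[\seq x_n\mapsto\seq t])}{}$. I would prove this by induction on the length of $p$, using that the substitution commutes with descending into arguments and under binders and that $\beta$-normalization acts independently within argument positions; the base case $p = \varepsilon$, where $s'$ is itself $x_j\,\seq v$, is immediate, noting that $\seq v$ is then empty (the body has no binders of its own and is of base type) and $s\,\seq t$ reduces to $t_j$.

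It then remains to identify $\nf{t_j\,(\seq v[\seq x_n\mapsto\seq t])}{}$ with the body of $t_j$ up to a renaming of the parameters of $t_j$, and here the two cases of base-simplicity split. If the occurrence is $\omega$-simple, $\seq v$ is a tuple of distinct variables bound inside $s'$, hence disjoint from $\seq x_n$, so $\seq v[\seq x_n\mapsto\seq t]$ is again that tuple (up to the $\alpha$-renaming needed to avoid capturing free variables of $\seq t$); writing the $\eta$-long $\beta$-normal form of $t_j$ --- which has the type of $x_j$ --- as $\lambda \seq z.\,t_j'$, the tuple $\seq z$ has the same length as $\seq v$ (both count the arguments $x_j$ accepts before reaching a base type), so $\nf{t_j\,\seq v}{} = t_j'[\seq z\mapsto\seq v]$, namely the body $t_j'$ of $t_j$ with its parameters renamed to $\seq v$. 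If instead the occurrence is base-simple but not $\omega$-simple, then $x_j$, and hence $t_j$, is of base type and $\seq v$ is empty, so $\nf{t_j\,\seq v}{} = t_j$ is already the body of $t_j$ (which has no parameters). In both cases the claim follows.

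I expect the only real difficulty to lie in the bookkeeping behind the first paragraph: making rigorous the claims that the chain of subterms above $p$ is unchanged, that normalization introduces no interfering redex above $p$, and that positions in $s'$ correspond to positions in $\nf{(s\,\seq t)}{}$, while correctly managing the $\alpha$-renaming of the binders of $s'$ against the free variables of $\seq t$ and the possible mismatch between the lengths of $\seq t$ and $\seq x_n$. Once the subterm at position $p$ is pinned down, the closing case analysis is routine.
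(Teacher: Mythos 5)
Your proposal is correct and follows essentially the same route as the paper's proof: substitute $\seq t$ into the body, use clause~(2) of base-simplicity to argue that no created redex lies above the occurrence of $x_j\,\seq v$, and then reduce $t_j\,\seq v$ to the ($\alpha$-renamed) body of $t_j$ using that $\seq v$ consists of variables bound in the body (and hence is untouched by the substitution) and matches the parameter count of the $\eta$-long $t_j$. You merely spell out more of the bookkeeping (induction on the position, explicit split between the $\omega$-simple and base-type cases) that the paper treats as immediate.
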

    \begin{proof}
    Consider the process of $\beta$-normalizing $s\,\seq t$. After substituting
    terms $\seq t$ into the body of~$s$, a further reduction can only
    take place when some $t_{k}$ is an abstraction that gets arguments in
    $s$. The arguments $\seq v$ to the $x_{j}$ are distinct variables
    bound in the body of $s$. This follows easily from either case of the
    definition of base-simplicity. So $t_{j}$ is applied to the unmodified
    $\seq v$ after substituting
    terms $\seq t$ into the body of~$s$. Base-simplicity also implies
    that $t_{j}\,\seq v$ does not occur in an argument to another
    $t_{k}$. Hence only the reduction of $t_{j}\,\seq v$ itself affects this
    subterm. The variables $\seq v$ match the parameter count of $t_{j}$
    because we consider the $\eta$-long form of $t_{j}$; so $t_{j}\,\seq v$ reduces to the body
    of $t_{j}$ (modulo renaming). The position is obviously that of $x_{j}\,\seq v$.
    \end{proof}
    \begin{lem}
      [$\jp C{4 strengthened}$]\label{lem:eq-if-omega-simple}Let $\varrho$ be a substitution and $X$ a free
      variable. If $\varrho\left(\param X\,\seq s\right)=\varrho\left(\param X\,\seq t\right)$
      and some occurrence of the $i^{\text{th}}$ parameter of $\varrho X$
      is base-simple, then $\varrho s_{i}=\varrho t_{i}$.
      \end{lem}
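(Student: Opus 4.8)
The plan is to derive the lemma from Lemma~\ref{lem:base-simple->inj.} instantiated with $s := \varrho X$. First I would strip the common outer binder: since the $\seq x$ are bound and substitution is capture-avoiding, in $\eta$-long $\beta$-normal form the hypothesis $\varrho(\param X\,\seq s)=\varrho(\param X\,\seq t)$ is equivalent to $(\varrho X)(\varrho s_1)\cdots(\varrho s_m)=(\varrho X)(\varrho t_1)\cdots(\varrho t_m)$, where $\seq s$ and $\seq t$ have some common length $m\geq i$; from here on the $\seq x$ are inert and may be treated as fixed constants.

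Next, I would write the canonical form of $\varrho X$ as $\lambda\seq{y}_k.\,u$, so that $y_i$ is its $i^{\text{th}}$ parameter and, by hypothesis, $u$ contains a base-simple occurrence of $y_i$, which we may display as the subterm $y_i\,\seq v$ (by the definition of base-simplicity, $\seq v$ is either the list of all variables bound in $u$ or is empty). Applying Lemma~\ref{lem:base-simple->inj.} to this $s=\varrho X$ with the sequence $\varrho s_1,\dots,\varrho s_m$ shows that the body of $\varrho s_i$ occurs, after normalisation, as a subterm of $(\varrho X)(\varrho s_1)\cdots(\varrho s_m)$ at the position $P$ of $y_i\,\seq v$ in $\varrho X$ (ignoring the parameter-count mismatch), up to renaming the parameters of $\varrho s_i$ to $\seq v$; applying the same lemma with $\varrho t_1,\dots,\varrho t_m$ shows, at the very same position $P$, the body of $\varrho t_i$, again up to renaming its parameters to $\seq v$. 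Since the two ambient terms coincide by the normalised hypothesis and $P$ is literally the same position on both sides, the subterms at $P$ agree; hence the body of $\varrho s_i$ and the body of $\varrho t_i$, each with its parameters renamed to $\seq v$, are equal. Undoing these renamings, which have the same length because $s_i$ and $t_i$ both carry the type of the $i^{\text{th}}$ argument of $X$, yields $\alpha$-equivalent canonical forms, i.e.\ $\varrho s_i=\varrho t_i$, as required.

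The one genuinely delicate point, which is exactly why Lemma~\ref{lem:base-simple->inj.} was isolated beforehand, is the claim that the position $P$ is independent of the substituted sequence: condition~2 of base-simplicity keeps the occurrence of $y_i$ outside every argument of a parameter, so none of the top-level redexes $(\varrho s_k)\,\seq w$ contracted during normalisation lies above this subtree or can displace it, while condition~1 (or base-typedness) keeps $\seq v$ unchanged, leaving the contraction of $(\varrho s_i)\,\seq v$ itself as the only reduction acting at $P$. As this analysis mentions only $\varrho X$, the position $P$ is shared by both instantiations and no further bookkeeping is needed. Note also that the argument treats the $\omega$-simple case (the original JP corollary) and the new base-type case in a single stroke.
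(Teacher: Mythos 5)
Your proposal is correct and follows exactly the paper's route: two applications of Lemma~\ref{lem:base-simple->inj.} (with $\varrho\seq s$ and with $\varrho\seq t$), the observation that the resulting position depends only on $\varrho X$, and equality of the ambient normalised terms to conclude $\varrho s_i = \varrho t_i$. You merely spell out the binder-stripping and parameter-renaming bookkeeping that the paper's two-sentence proof leaves implicit.
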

      \begin{proof}
      By Lemma~\ref{lem:base-simple->inj.}, $\varrho s_{i}$ occurs in
      $\varrho X\left(\varrho\seq s\right)$ at certain position that depends
      only on $\varrho X$. Similarly $\varrho t_{i}$ occurs in $\varrho X\left(\varrho\seq t\right)=\varrho X\left(\varrho\seq s\right)$
      at the same position, and hence $\varrho s_{i}=\varrho t_{i}$.
      \end{proof}

    We define more properties to determine which binding to apply to a given constraint.
    Roughly speaking, the simple comparison form will trigger identification bindings,
    projectivity will trigger Huet-style projections, and
    simple projectivity will trigger JP-style projections.
    \begin{defi}[$\jp D4$]%
    \looseness=-1%
    \label{def:simple-comparison-form}We say that $s$ and $t$ are in \emph{simple
    comparison form }if all $\omega$-simple heads of opponent pairs in $s$ and $t$ are distinct, and
    each opponent pair has an $\omega$-simple head.
    \end{defi}
    \begin{defi}[$\jp D5$]%
    \label{def:projective} A term $t$ is called \emph{projective }if the head
    of $t$ is a parameter of $t$. If the whole body is just
    the parameter, then $t$ is called \emph{simply projective.}
    \end{defi}
    A central part of the proof is to find a suitable measure for the
    remaining problem size. Showing that the measure is strictly decreasing
    and well-founded guarantees that the procedure finds
    a suitable substitution in finitely many steps. We reuse the measure for remainder substitutions
    from JP~\cite{jp-unif-76}, but embed it into a lexicographic measure
    to handle the decomposition steps and oracles of our procedure.
    \begin{defi}[$\jp D7$]\label{def:weights}
    The \emph{free weight} of a term $t$ is the total number of occurrences of
    free variables and constants in $t$.
    The \emph{bound weight} of $t$ is the total number of occurrences
    (excluding occurrences $\lambda x$) of bound variables in $t$,
    but with the particular exemption: if a prefix variable $u$
    has one or more $\omega$-simple occurrences in the body,
    then one such occurrence and its arguments are not counted.
    It does not matter which occurrence is not counted because
    in $\eta$-long form the bound weight of the arguments of an
    $\omega$-simple variable is the same for all occurrences
    of that variable.
    \end{defi}
    \begin{defi}[$\jp D8$]\label{def:ord}
    For multisets $E$ of unification constraints and
    substitutions $\rho$, our measure on pairs $(E, \rho)$
    is the lexicographic comparison of
    \begin{enumerate}[{\bf A:}]
      \item the sum of the sizes of the terms in $\rho E$
      \item the sum over the free weight of $\rho F$,
        for all variables $F$ mapped by $\rho$
      \item the sum over the bound weight of $\rho F$,
      for all variables $F$ mapped by $\rho$
      \item the sum over the number of parameters of $\rho F$,
      for all variables $F$ mapped by $\rho$
    \end{enumerate}
    We denote the quadruple containing these numbers as $\ord(E,\rho)$.
    We denote the triple containing only the last three components of $\ord(E,\rho)$ as $\ord \rho$.
    We write $<$ for the lexicographic comparison of these tuples.
    \end{defi}
    %
    The next six lemmas correspond to the bindings of our procedure and sufficient
    conditions for the binding to bring us closer to a given solution.
    This is expressed as a decrease of the $\ord$ measure
    of the remainder. In each of these lemmas,
    let $u$ be a term with a variable head $a$ and
    $v$ a term with an arbitrary head $b$.
    Let $\varrho$ be a unifier of $u$ and $v$.
    The conclusion, let us call it \textbf{\emph{C}},
    is always the same: there exists a binding $\delta$ applicable to the problem $ u\unif v$, and
    there exists a substitution $\varrho'$ such that $\ord\varrho'<\ord\varrho$
    and for all variables $X$ except the fresh variables introduced by the binding $\varrho X= \varrho'\, \delta X$.
    For most of these lemmas, we refer to JP~\cite{jp-unif-76} for proofs.
    Although JP only claim $\varrho X= \varrho'\, \delta X$
    for variables $X$ mapped by $\varrho$,
    inspection of their proofs shows that the equality holds for
    all $X$ except the fresh variables introduced by the binding.
    Moreover, some of our bindings have more preconditions,
    yielding additional orthogonal hypotheses in our lemmas,
    which we address below.
    \begin{lem}
    [$\jp L9$]\label{lem:elimination}If $a=b$ is not an elimination variable and $\varrho a$
    discards any of its parameters, then \textbf{C} by elimination.
    Moreover, for the elimination variable $G$ introduced by this elimination,
    $\varrho'G$ discards none of its parameters and has the body of $\varrho a$.
    \end{lem}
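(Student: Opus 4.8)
The plan is to reduce this to JP's corresponding lemma in \cite{jp-unif-76} while supplying the extra bookkeeping our statement asks for. Write $\varrho a$ in $\eta$-long $\beta$-normal form as $\lambda\overline{x}_n.\,w$. Since $\varrho a$ discards at least one parameter, $n>0$; let $J=\{\,k\mid x_k\text{ occurs in }w\,\}=\{j_1<\cdots<j_i\}$, so that $i<n$ (possibly $i=0$, if every parameter is discarded). Take $\delta$ to be the elimination binding $a\mapsto\lambda\overline{x}_n.\,G\,x_{j_1}\,\ldots\,x_{j_i}$ with $G$ a fresh variable of the appropriate type. Since $a=b$ is a free variable and, by hypothesis, not an elimination variable, the clause of $\mathcal{P}_\mathsf{c}$ for flex--flex constraints with identical heads makes all eliminations for $a$ available, so $\delta\in\mathcal{P}_\mathsf{c}(u\unif v)$ and $\delta$ is applicable; this is the only use of the assumption on $a$.

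Next I would define the remainder $\varrho'$: let it coincide with $\varrho$ on every variable other than $a$ and $G$, leave $a$ unmapped, and set $\varrho'G=\lambda x_{j_1}\,\ldots\,x_{j_i}.\,w$, i.e.\ $\varrho a$ with its vacuous abstractions deleted. One $\beta$-step gives $\varrho'(\delta a)=\lambda\overline{x}_n.\,(\lambda x_{j_1}\,\ldots\,x_{j_i}.\,w)\,x_{j_1}\,\ldots\,x_{j_i}=\lambda\overline{x}_n.\,w=\varrho a$, and $\varrho'(\delta X)=\varrho'X=\varrho X$ for every $X\notin\{a,G\}$; since $G$ is the only fresh variable introduced by $\delta$, this is the substitution-agreement part of \textbf{\emph{C}}. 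The ``moreover'' is immediate from the construction: the body of $\varrho'G$ is $w$, i.e.\ the body of $\varrho a$, and each surviving parameter $x_{j_k}$ occurs in $w$ by the choice of $J$, so $\varrho'G$ discards none of its parameters.

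It remains to check $\ord\varrho'<\ord\varrho$. Passing from $\varrho$ to $\varrho'$ removes $a$ from the set of mapped variables, adds $G$, and leaves all other mapped variables and their images unchanged; hence only the contributions of $\varrho a$ to $\ord\varrho$ and of $\varrho'G$ to $\ord\varrho'$ can differ. Deleting the abstractions $\lambda x_k$ with $k\notin J$ affects neither the free weight (bound variables are never counted there) nor the bound weight, so components \textbf{B} and \textbf{C} of $\ord$ are unchanged, while component \textbf{D} drops by exactly $n-i>0$, since $\varrho a$ has $n$ parameters and $\varrho'G$ has $i$. Thus $\ord\varrho'<\ord\varrho$ lexicographically, which is \textbf{\emph{C}}.

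The one genuinely delicate point — the step I expect to be the main obstacle — is the invariance of the bound weight just claimed, because the exemption in Definition~\ref{def:weights} is stated relative to the list of prefix variables, which changes. I would dispatch it by noting that the dropped prefix variables $x_k$ ($k\notin J$) have no occurrences in $w$ at all, hence no $\omega$-simple and no exempted occurrences, while for a surviving parameter both clauses of $\omega$-simplicity refer only to $w$ and to the binders inside $w$ — data untouched by removing the vacuous abstractions — so an occurrence in $w$ is $\omega$-simple in $\lambda\overline{x}_n.\,w$ iff it is in $\lambda x_{j_1}\,\ldots\,x_{j_i}.\,w$, and the exempted occurrence and its arguments agree. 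Modulo this, and modulo strengthening ``$X$ mapped by $\varrho$'' to ``$X$ not fresh'' as discussed before the lemma, the argument coincides with JP's.
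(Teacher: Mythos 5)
Your proof is correct and follows essentially the same route as the paper's: the same choice of elimination indices $J$ (the parameters actually occurring in the body), the same binding $\delta$ and remainder $\varrho'G=\lambda x_{j_1}\ldots x_{j_i}.\,w$, and the same measure argument (free and bound weights unchanged, parameter count strictly decreasing). Your extra care about the $\omega$-simplicity exemption being stable under removal of vacuous abstractions is a detail the paper glosses over with ``$\varrho a$ and $\varrho'G$ have the same body,'' and your handling of it is sound.
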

    \begin{proof}
    Let $\varrho a=\param t$ and let ${\left(x_{j_{k}}\right)}_{k=1}^{i}$
    be the subsequence of $\seq x$ consisting of those variables which
    occur in the body $t$. It is a strict subsequence,\emph{ }since $\varrho a$
    is assumed to discard some parameter. Since the equal heads $a=b$ of the
    constraint $u\unif v$ are not elimination variables, elimination
    for ${\left(j_{k}\right)}_{k=1}^{i}$ can be applied. Let $\delta=\left\{ a\mapsto\param G\,x_{j_{1}}\,\ldots\,x_{j_{i}}\right\} $
    be the corresponding binding. Define $\varrho'$ to be like $\varrho$
    except
    \[
    \varrho'a=a\quad\text{and}\quad\varrho'G=\param[x_{j_{1}} \dots x_{j_{i}}]t.
    \]
    Obviously $\varrho'G$ is a closed term and $\varrho X = \varrho'\,\delta X$
    holds for all $X \not= G$. Moreover $\ord\varrho'<\ord\varrho$, because
    free and bound weights stay the same ($\varrho a$ and $\varrho'G$ have the
    same body $t$) whereas the number of parameters strictly decreases. The definition
    of ${\left(j_{k}\right)}_{k=1}^{i}$ implies that $\varrho'G$ discards
    none of its parameters.
    \end{proof}

    \begin{lem}
    [$\jp L{10}$]\label{lem:iteration}Assume that
    there exists a parameter $x$ of $\varrho a$ such that
    $x$ has a non-$\omega$-simple (Definition~\ref{def:omega-simple})
    occurrence in $\varrho a$, which is not below another parameter, or
    such that $x$ has at least two $\omega$-simple occurrences in $\varrho a$.
    Moreover,
    if $a = b$, to make iteration applicable,
    $a$~must not be an elimination variable, and
    $x$ must be of functional type.
    Then \textbf{C} is achieved by iteration.
    \end{lem}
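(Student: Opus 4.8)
The plan is to follow Jensen and Pietrzykowski's proof of their Lemma~10, taking $\delta$ to be the iteration for $a$ at the index $i$ of the distinguished parameter $x = x_i$ of $\varrho a = \lambda \overline{x}_n.\, t$; writing the type of $x_i$ as $\gamma_1 \to \cdots \to \gamma_m \to \beta_2$, this binding is well-formed since $\varrho a$ has the parameter $x$ and hence $n > 0$. First I would check that $\delta$ is really among the bindings $\mathcal{P}_\mathsf{c}$ offers for $u \unif v$. I would only invoke this lemma when the constraint is flex-flex (for a flex-rigid constraint $\varrho a$ is forced to be a projection or an imitation, which the projection and imitation lemmas handle): if $a \not= b$, then $\mathcal{P}_\mathsf{c}$ lists every iteration for $a$; if $a = b$, then the two extra hypotheses --- that $a$ is not an elimination variable and that $x$ is of functional type, i.e. $m \geq 1$ --- are precisely the side conditions under which $\mathcal{P}_\mathsf{c}$ lists the iteration for $a$ at $i$ for a flex-flex constraint with identical heads.

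Next I would build the remainder substitution $\varrho'$, reusing the JP construction. Pick the occurrence of $x_i$ named by the hypothesis and write it $x_i\,\overline{w}_m$, lying inside $t$ under a (possibly empty) block of local binders $\lambda\overline{y}'$ whose variables may appear in $\overline{w}_m$. I would instantiate the indeterminate parameter block $\overline{y}$ of the iteration binding to match $\overline{y}'$ in length and type, set each $\varrho' G_k = \lambda \overline{x}_n\,\overline{y}.\, w_k'$, where $w_k'$ is $w_k$ with $\overline{y}'$ renamed to $\overline{y}$, and set $\varrho' H = \lambda \overline{x}_n\, z.\, t''$, where $z$ is the fresh last parameter of $H$ and $t''$ is $t$ with that one occurrence $x_i\,\overline{w}_m$ replaced by $z\,\overline{y}'$; on every other variable $\varrho'$ agrees with $\varrho$. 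The equation $\varrho X = \varrho'(\delta X)$ is then checked by a routine $\beta$-normalization (with the usual $\alpha$-renaming to avoid capture): it is immediate when $X$ is not $a$ and not one of the fresh variables $H, \overline{G}_m$, and for $X = a$, unfolding $\delta a$, applying $\varrho'$, and reducing reinserts $x_i\,\overline{w}_m$ at its position, giving back $\lambda \overline{x}_n.\, t = \varrho a$.

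It then remains to prove $\ord\varrho' < \ord\varrho$, for which I would again lean on JP's accounting. Component \textbf{A} of $\ord$ is irrelevant, since the \rulename{Bind} step leaves the constraint multiset untouched, so only the triple $\ord$ matters. The variable $a$ leaves the domain of $\varrho'$ while $H$ and the $\overline{G}_m$ enter it, and the free symbols of $\overline{w}_m$ move unchanged from $\varrho a$ into the bodies of the $\varrho' G_k$, so the free weight is unchanged; the strict drop lands in the bound weight. This is exactly where the hypothesis is used: in $\varrho a$ the distinguished occurrence of $x_i$ is \emph{not} exempted by the bound-weight exemption of Definition~\ref{def:weights} --- it is either non-$\omega$-simple, or it is the second of two $\omega$-simple occurrences of $x_i$, only one of which the exemption discounts --- so $x_i$ contributes at least $1$ to the bound weight of $\varrho a$, whereas the replacing occurrence $z\,\overline{y}'$ \emph{is} $\omega$-simple in $\varrho' H$ (its arguments are exactly the distinct local binders in scope) and hence exempt, and the material of $\overline{w}_m$ carried into the $\varrho' G_k$ contributes no more bound weight there than it did in $\varrho a$ (possibly less, as those variables can become exempt as parameters of the $G_k$). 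Hence the bound weight strictly decreases and $\ord\varrho' < \ord\varrho$, which is conclusion \textbf{C}.

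The main obstacle I expect is the bound-weight bookkeeping: one must verify precisely that nothing on the $\varrho$ side is lost ``for free'' when $x_i\,\overline{w}_m$ is split into $z\,\overline{y}'$ and the $\varrho' G_k$ --- handling separately the non-$\omega$-simple case (where $\overline{w}_m$ is not the list of distinct local binders, so its surplus structure is honestly relocated into the $G_k$) and the two-$\omega$-simple-occurrences case (where $\overline{w}_m$ is a list of $\eta$-long local binders that become exempt as parameters of the $G_k$) --- and to get the interaction with the local binders $\lambda\overline{y}'$ inside $t$ right, which is precisely why the iteration binding leaves $\overline{y}$ indeterminate. Everything else is the routine transcription of the JP argument.
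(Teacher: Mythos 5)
The paper gives no proof of this lemma itself but defers to Jensen and Pietrzykowski's Lemma 10, and your proposal is a faithful reconstruction of exactly that argument: the same choice of $\delta$ as the iteration at the distinguished parameter with $\overline{y}$ instantiated to the local binder block, the same remainder substitution $\varrho'$ splitting $t$ into $\varrho'H$ and the $\varrho'G_k$, and the same bound-weight accounting in which the non-exempt occurrence of $x_i$ is traded for an exempt $\omega$-simple occurrence of the new parameter of $H$. Your additional remarks on when the binding is actually offered by $\mathcal{P}_\mathsf{c}$ correctly mirror the extra preconditions the paper adds to the JP statement, so the proposal is correct and takes essentially the same approach as the paper.
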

    \begin{lem}
    [$\jp L{11}$]\label{lem:jp-projection}
    Assume that $a$ and $b$ are different free variables.
    If $\varrho a$ is simply
    projective (Definition~\ref{def:projective})
    and $a$ is not an identification variable,
    then \textbf{C} by JP-style projection.
    \end{lem}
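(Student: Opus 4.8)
The plan is to take the value $\varrho a$ itself as the binding. Since $\varrho$ unifies $u$ and $v$ and $\varrho a$ is simply projective, Definition~\ref{def:projective} gives $\varrho a = \lambda \overline{x}_n.\, x_i$ for some index $i$. Because we work with $\eta$-long $\beta$-normal forms, the body $x_i$ has a base type, which must be the result type $\beta$ of $a$; hence $n > 0$ (a base-typed variable has no parameters, so a projective term cannot have $n = 0$) and $\alpha_i = \beta$. Therefore the JP-style projection binding $\delta = \{a \mapsto \lambda \overline{x}_n.\, x_i\}$ for $a$ is well formed, and it coincides with $\varrho$ on $a$.

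Next I would verify that $\delta$ is prescribed by $\mathcal{P}_\mathsf{c}$ for the constraint $u \unif v$. Since $a$ and $b$ are distinct free variables, $u \unif v$ is a flex-flex constraint with different heads, so $\mathcal{P}_\mathsf{c}$ includes all JP-style projections for the non-identification variables among the two heads. By hypothesis $a$ is not an identification variable, so $\delta \in \mathcal{P}_\mathsf{c}(u \unif v)$ and $\rulename{Bind}$ with $\delta$ is applicable to $u \unif v$.

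Then I would set $\varrho' = \varrho \setminus \{a \mapsto \varrho a\}$. Since $\delta$ maps only $a$ and introduces no fresh variables, the equation $\varrho X = \varrho'(\delta X)$ holds trivially for all $X \neq a$, and for $X = a$ it reduces to $\varrho'(\lambda \overline{x}_n.\, x_i) = \lambda \overline{x}_n.\, x_i = \varrho a$; so the substitution part of \textbf{C} holds for every variable, there being no fresh variables to exclude. For the measure, I would observe that $\varrho a = \lambda \overline{x}_n.\, x_i$ has free weight $0$ (it contains no constants and no free variables) and bound weight $0$: the unique occurrence of $x_i$ is $\omega$-simple, since it has no arguments while the body of $\varrho a$ binds no variables, so it is exempted in Definition~\ref{def:weights}. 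Dropping $a$ from the domain of $\varrho$ thus leaves the components $B$ and $C$ of $\ord$ unchanged and strictly decreases $D$ by $n \geq 1$, so $\ord\varrho' < \ord\varrho$, which establishes \textbf{C}.

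The argument is essentially JP's Lemma~L11; the only genuinely new content is the side condition that $a$ is not an identification variable, which is exactly what makes $\delta$ lie in $\mathcal{P}_\mathsf{c}$, together with the small bookkeeping that the $\omega$-simple exemption makes both weight components of $\varrho a$ vanish, so that the strict decrease shows up precisely in the parameter count. I do not expect any real obstacle; the only point needing care is the typing justification that simple projectivity of $\varrho a$ forces $n > 0$ and $\alpha_i = \beta$, which is what legitimizes invoking the JP-style projection rule in the first place.
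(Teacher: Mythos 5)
Your proof is correct and follows essentially the argument the paper relies on: the paper itself gives no proof here but defers to JP's Lemma~3.11, and your reconstruction (taking $\delta$ to be exactly $\varrho$ restricted to $a$, so that $B$ and $C$ of the measure are unchanged and $D$ drops by the parameter count, with the $\omega$-simple exemption zeroing the bound weight) is precisely that argument. Your added checks --- that simple projectivity in $\eta$-long form forces $n>0$ and $\alpha_i=\beta$, and that the non-identification hypothesis is what places $\delta$ in $\mathcal{P}_\mathsf{c}$ for a flex-flex pair with distinct heads --- are exactly the ``additional orthogonal hypotheses'' the paper says must be addressed on top of the cited lemma.
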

    %
    \begin{lem}
    [$\jp L{12}$]\label{lem:jp-imitation}If $\varrho a$ is not projective and $b$ is rigid,
    then \textbf{C} by imitation.
    \end{lem}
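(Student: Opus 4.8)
The plan is to adapt Jensen and Pietrzykowski's argument for their L12 to our binding $\mathcal{P}_\mathsf{c}$. Write $u = \lambda \overline{x}.\, a\,\overline{s}$ and $v = \lambda \overline{x}.\, b\,\overline{t}$ in $\eta$-long $\beta$-normal form, and $\varrho a = \lambda \overline{y}_n.\, c\,\overline{w}$ in $\eta$-long $\beta$-normal form with head $c$, where $n$ is the number of parameters of $a$. The first step is a head analysis of $\varrho u$: since $\varrho a$ is not projective, $c$ is none of the parameters $\overline{y}_n$, so substituting $\varrho\overline{s}$ for $\overline{y}_n$ and normalising does not touch $c$, and hence $c$ is the head of $\varrho u$. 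Moreover $\varrho a$ is a closed term, so $c$ is a constant or a free variable, never a loose bound variable.

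The second step uses that $\varrho$ unifies $u$ and $v$, so the $\eta$-long $\beta$-normal forms of $\varrho u$ and $\varrho v$ coincide and in particular share their head. As $b$ is rigid, the head of $\varrho v$ is $b$ if $b$ is a constant and is a loose bound variable from $\overline{x}$ if $b$ is one of the $\overline{x}$. The first step rules out the latter, so $b$ is a constant and $c = b$, that is, $\varrho a = \lambda \overline{y}_n.\, b\,\overline{w}_m$. Because everything is in $\eta$-long form, the length $m$ of $\overline{w}_m$ is the arity of $b$ and the base return types of $a$ and $b$ agree, so the imitation binding of $b$ for $a$, namely $\delta = \{a \mapsto \lambda \overline{z}_n.\, b\,(F_1\,\overline{z}_n)\cdots(F_m\,\overline{z}_n)\}$ with fresh $F_1,\dots,F_m$, is well typed; it is the binding offered by the flex-rigid case of $\mathcal{P}_\mathsf{c}$, applicable precisely because $b$ is a constant.

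The third step produces the witness $\varrho'$: let it agree with $\varrho$ everywhere except $\varrho' a = a$ and $\varrho' F_j = \lambda \overline{z}_n.\, w_j$ for $1 \le j \le m$, with $\overline{y}_n$ renamed to $\overline{z}_n$. For every variable $X$ other than the fresh $F_1,\dots,F_m$ one checks $\varrho X = \varrho'\,\delta X$: if $X \ne a$ this is immediate since $\delta X = X$ and $\varrho'$ agrees with $\varrho$ on $X$; if $X = a$, $\beta$-reducing $\varrho'\,\delta a$ recovers $\lambda \overline{z}_n.\, b\,w_1\cdots w_m = \varrho a$. For the measure, binders contribute nothing to either weight, so the free weight of $\varrho a$ equals $1 + \sum_{j}(\text{free weight of }w_j)$ whereas the replacement variables $F_j$ contribute only $\sum_{j}(\text{free weight of }w_j)$, all other mapped variables being untouched; hence component \textbf{B} of $\ord$ strictly drops by the single lost occurrence of $b$, so $\ord\varrho' < \ord\varrho$, which gives \textbf{C}.

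I expect the main obstacle to be the head analysis of the first two steps: making precise that non-projectivity of $\varrho a$ forces its head to survive application to all of its arguments, and that a closed term cannot be headed by a loose bound variable, which together pin $b$ down to being exactly the imitated constant. The remaining obligations — well-typedness of $\delta$, the equation $\varrho X = \varrho'\,\delta X$, and the weight count — are routine bookkeeping with $\eta$-long normal forms. As with the neighbouring lemmas, I would also remark that although JP state the equation only for variables mapped by $\varrho$, inspection of the construction shows it holds for every variable except the freshly introduced $F_j$.
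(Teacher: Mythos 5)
Your proposal is correct and follows essentially the route the paper intends: the paper gives no proof of its own here but defers to Jensen and Pietrzykowski's Lemma 3.12, and your reconstruction (pin the head of $\varrho a$ to the imitated constant via non-projectivity and rigidity of $b$, then split $\varrho a$ into the fresh $F_j$'s and observe that the free weight drops by the one lost occurrence of $b$) is exactly that standard argument, structured in parallel with the paper's own explicit proof of Lemma~\ref{lem:huet-projection}. Your closing remark that the equation $\varrho X = \varrho'\,\delta X$ holds for all $X$ except the fresh variables, not just those mapped by $\varrho$, matches the observation the paper itself makes when importing the JP lemmas.
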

    \begin{lem}
    [$\jp L{13}$]\label{lem:identification}
    Let $a \not= b$.
    Assume that $\varrho a \not= a$ and $\varrho b \not= b$
    are in simple comparison form (Definition~\ref{def:simple-comparison-form}) and neither is
    projective.
    Then \textbf{C} by identification.
    Moreover, $\varrho' H$ is not projective, where $H$ is the identification variable
    introduced by this application of the identification binding.
    \end{lem}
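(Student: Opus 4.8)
The plan is to reduce the conclusion \textbf{C} to JP's Lemma 13 and to supply separately the argument for the ``Moreover'' part, namely that $\varrho' H$ is not projective. Applicability of the identification binding $\delta$ for $a$ and $b$ is immediate: by hypothesis $a$ and $b$ are distinct free variables, so $u \unif v$ is flex-flex with different heads, which is exactly the case in which $\mathcal{P}_\mathsf{c}$ offers the identification binding, and it is the only precondition of the binding itself. For the existence of a remainder $\varrho'$ with $\ord\varrho' < \ord\varrho$ and $\varrho X = \varrho'\,\delta X$ for all $X$ outside the freshly introduced variables, I would invoke JP's proof of their Lemma 13: the hypotheses ``$\varrho a \neq a$ and $\varrho b \neq b$ are in simple comparison form and neither is projective'' are precisely those JP use, our identification binding coincides with theirs up to the naming of the fresh variables $H, \overline{F}_m, \overline{G}_n$, and the strengthening from ``$X$ mapped by $\varrho$'' to ``all $X$ except the fresh variables'' is the routine inspection announced before the lemma list.

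For the ``Moreover'' part, I would first read the shape of $\varrho' H$ off JP's construction. Writing $\varrho a = \lambda \overline{x}_n.\, s$ and $\varrho b = \lambda \overline{y}_m.\, t$ with $\overline{x}_n$ and $\overline{y}_m$ chosen disjoint (as in Definition \ref{def:opponents}), JP take $\varrho' H = \lambda \overline{x}_n \overline{y}_m.\, u$, where $u$ is obtained from the common context $\mathcal{C}(s,t)$ by filling each hole --- for the disagreement pair sitting under binders $\overline{w}$ --- with a term of the form $x_p\,\overline{w}$ or $y_q\,\overline{w}$ read off from an $\omega$-simple head of that pair; distinctness of those heads, which simple comparison form guarantees, is what makes this well defined. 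The only feature of $u$ I need is that, whenever $\mathcal{C}(s,t)$ is not the bare hole $\square$, the head of $u$ equals the head of $\mathcal{C}(s,t)$. So it suffices to prove (i) $\mathcal{C}(s,t) \neq \square$ and (ii) that head is not among $\overline{x}_n \cup \overline{y}_m$, i.e.\ not a parameter of $\varrho' H$.

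Both (i) and (ii) will follow from non-projectivity of $\varrho a$ and $\varrho b$. For (i): if $\mathcal{C}(s,t) = \square$, then $(s,t)$ is the unique opponent pair, it has no enclosing binders ($s$ and $t$ are bodies, hence not abstractions, and of base type), and simple comparison form forces its head to be $\omega$-simple; but for a parameter with no bound variables in scope, condition~1 of Definition \ref{def:omega-simple} degenerates to the parameter having no arguments, so $s = x_p$ or $t = y_q$, making $\varrho a$ or $\varrho b$ simply projective --- a contradiction. Hence $\mathcal{C}(s,t) \neq \square$, so $s$ and $t$ share a head $c$, which is then also the head of $u$; if $c$ were a parameter of $\varrho a$ or of $\varrho b$, that term would be projective, again contradicting the hypotheses. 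Therefore $c \notin \overline{x}_n \cup \overline{y}_m$ and $\varrho' H = \lambda \overline{x}_n \overline{y}_m.\, u$ is not projective.

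The main obstacle I anticipate is bibliographic rather than mathematical: pinning down enough of JP's construction of $\varrho' H$ to be certain that their identification binding really is ours up to renaming and, above all, that the head of the body $u$ is the head of $\mathcal{C}(s,t)$, since the ``Moreover'' part rests entirely on this. A secondary point needing care is the degenerate reading of Definition \ref{def:omega-simple} used in step (i), where ``all of the variables bound in the body'' must be read as the empty set so that $\omega$-simplicity of the head collapses the body to a bare parameter.
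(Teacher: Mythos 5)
Your proposal matches the paper's proof in approach: the paper likewise discharges conclusion \textbf{C} by citing JP's Lemma~3.13 directly and then observes that $\varrho' H$ cannot be projective because $\varrho a$ and $\varrho b$ are not. Your reconstruction of the ``Moreover'' part (ruling out the bare-hole common context via simple projectivity, and noting the head of the merged body is the common head, hence not a parameter) is a correct and more explicit elaboration of what the paper dismisses in one sentence as obvious from inspecting JP's construction.
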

    \begin{proof}
    This is JP's Lemma~3.13, plus the claim that $\varrho' H$ is not projective.
    Inspecting the proof of that lemma, it is obvious that $\varrho' H$ cannot be projective
    because $\varrho a$ and $\varrho b$ are not projective.
    \end{proof}
    %
    %
    \begin{lem}%
    \label{lem:huet-projection}
    Assume that $\varrho a$ is
    projective (Definition~\ref{def:projective}),
    $a$ is not an identification variable, and $b$ is rigid.
    Then \textbf{C} by Huet-style projection.
    \end{lem}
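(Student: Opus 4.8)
The plan is to follow the pattern of the preceding lemmas: exhibit the binding $\delta$, define a remainder substitution $\varrho'$, and check the two conjuncts of conclusion \textbf{C}. Since $\varrho a$ is projective, its $\eta$-long $\beta$-normal form is $\varrho a = \lambda\overline x_n.\, x_i\,\overline w_m$ with $n\ge 1$, where $x_i$ is the head parameter and $\overline w_m$ (possibly empty) is its argument list; we pick the $\alpha$-representative whose parameters $\overline x_n$ are the ones used by the binding. Then $x_i$ has type $\gamma_1\to\cdots\to\gamma_m\to\beta$ with $\beta$ the return type of $a$, so Huet-style projection for $a$ at $i$ is well-defined, and since $b$ is rigid the constraint $u\unif v$ is flex-rigid and $a$ is not an identification variable, so the binding $\delta = \{a\mapsto\lambda\overline x_n.\, x_i\,(F_1\,\overline x_n)\cdots(F_m\,\overline x_n)\}$ with fresh $\overline F_m$ lies in $\mathcal P_\mathsf{c}(u\unif v)$, hence is applicable.

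Define $\varrho'$ to coincide with $\varrho$ everywhere except $\varrho' a = a$ and $\varrho' F_k = \lambda\overline x_n.\, w_k$ for $1\le k\le m$ (these assignments are well-typed). For $X\notin\{a,F_1,\ldots,F_m\}$ we have $\delta X = X$ and $\varrho' X = \varrho X$, while $\varrho'\,\delta a$ $\beta$-reduces to $\lambda\overline x_n.\, x_i\,w_1\cdots w_m = \varrho a$; hence $\varrho X = \varrho'\,\delta X$ for every variable $X$ outside the fresh $\overline F_m$, as \textbf{C} demands.

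It remains to show $\ord\varrho' < \ord\varrho$, where $\varrho'$ drops $a$ from, and adds the fresh $F_1,\ldots,F_m$ to, the set of mapped variables. The free weight (component \textbf{B}) is unchanged, since $x_i$ is a bound variable and the free-variable and constant occurrences of $\varrho a$ are exactly those of the $w_k$ underneath the prefix $\lambda\overline x_n$, i.e.\ those of the $\varrho' F_k$. For the bound weight (component \textbf{C}): when $m = 0$, $\delta$ is just the JP-style projection $a\mapsto\lambda\overline x_n.\,x_i$, and $\varrho a$ contributes bound weight $0$ (its sole head occurrence of $x_i$ is $\omega$-simple, hence exempted) but $n\ge 1$ parameters, so component \textbf{D} strictly decreases while \textbf{B} and \textbf{C} are unchanged. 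When $m\ge 1$, the head occurrence of $x_i$ in $\varrho a$ cannot be $\omega$-simple --- the head of the $\eta$-long form of a variable is always free in that term, so no $w_k$ is the $\eta$-long form of a variable bound in the body $x_i\,\overline w_m$ (all such binders occur inside the $w_k$), and condition~(1) of Definition \ref{def:omega-simple} would otherwise force $m = 0$. Hence this occurrence of $x_i$ is counted and contributes at least $1$ to the bound weight of $\varrho a$, whereas every remaining bound-variable occurrence of $\varrho a$ lies inside some $w_k$ and contributes to the total bound weight of the $\varrho' F_k$ no more than it contributes to the bound weight of $\varrho a$: an occurrence that is $\omega$-simple relative to $\varrho a$ stays $\omega$-simple relative to the corresponding $\varrho' F_k$, so the $\varrho' F_k$ together exempt at least as much as $\varrho a$ does (using that the exempted arguments of an $\omega$-simple variable always carry the same bound weight). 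Thus the \textbf{C}-component of $\varrho$ strictly exceeds that of $\varrho'$, and in both cases $\ord\varrho' < \ord\varrho$.

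I expect the bound-weight comparison to be the main obstacle, because of the $\omega$-simplicity exemption built into Definition \ref{def:weights}: one must argue both that the newly created head occurrence $x_i$ of $\varrho a$ is genuinely counted (which is where the observation about heads of $\eta$-long forms is needed) and that splitting $\varrho a$ into the bindings $\varrho' F_k$ never hides more bound weight than $\varrho a$ already hides. The remaining bookkeeping --- choosing $\alpha$- and $\eta$-representatives so that $\varrho = \varrho'\,\delta$ holds on the nose rather than up to conversion --- is routine and, as throughout Section \ref{sec:proof-of-completeness}, absorbed by working in the $\alpha\beta\eta$-quotient.
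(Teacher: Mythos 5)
Your proof is correct and takes essentially the same route as the paper's: apply the Huet-style projection binding, split the arguments of the head parameter into the fresh variables $F_k$ via $\varrho' F_k = \lambda\overline{x}_n.\,w_k$, and show the free weight is preserved while the bound weight strictly drops because the head occurrence of $x_i$ cannot be $\omega$-simple when $m>0$. The only (minor) divergence is that the paper dispatches the simply projective case $m=0$ by deferring to Lemma~\ref{lem:jp-projection}, whereas you handle it directly through the decrease of the parameter count (component \textbf{D}); both work, and your accounting of the $\omega$-simplicity exemption on the $\varrho'F_k$ side is, if anything, more careful than the paper's ``exactly $1$ larger'' claim.
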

    \begin{proof}
    Since $\varrho a$ is
    projective, we have $\varrho a = \lambda \overline{x}_n.\, x_k\,\overline{t}_m$
    for some $k$ and some terms $\overline{t}_m$.
    If $\varrho a$ is also simply projective,
    then $x_k$ must be non-functional
    and since Huet-style projection and JP-style projection coincide
    in that case, Lemma~\ref{lem:jp-projection} applies.
    Hence, in the following we may assume that $\varrho a$ is not simply projective,
    i.e., that $m>0$.

    Let $\delta$ be
    the Huet-style projection binding:
    \[ \delta = \{a \mapsto \lambda \overline{x}_n. \, x_i \, (F_1 \, \overline{x}_n) \, \ldots \, (F_m \, \overline{x}_n)\} \]
    for fresh variables $F_1,\dots,F_m$.
    This binding is applicable because $b$ is rigid.
    Let $\varrho'$ be the same as $\varrho$ except that we set $\varrho' a = a$
    and for each $1\leq j\leq m$ we set
    \[
      \varrho' F_j = \lambda \overline{x}_n. \, t_j
    \]
    It remains to show that $\ord\varrho'<\ord\varrho$.
    The free weight of $\varrho a$ is the same as
    the sum of the free weights of $\varrho' F_j$ for $1\leq j\leq m$.
    Thus, the free weight is the same for $\varrho$ and $\varrho'$.
    The bound weight of $\varrho a$ however is exactly $1$ larger than
    the sum of the bound weights of $\varrho' F_j$ for $1\leq j\leq m$
    because of the additional occurrence of $x_k$ in $\varrho a$.
    The exemption for $\omega$-simple occurrences in the definition
    of the bound weight cannot be triggered by this occurrence of $x_k$
    because $m > 0$ and thus $x_k$ is not $\omega$-simple.
    It follows that $\ord\varrho'<\ord\varrho$.
    \end{proof}
    We are now ready to prove the completeness theorem (Theorem~\ref{thm:completeness}).
    \begin{proof}
    Let $E$ be a multiset of constraints
    and let $V$ be the supply of fresh variables provided to our procedure.
    Let $\varrho$ be a unifier of $E$.
    We must show that there exists
    a derivation $\left(E,\makeop{id}\right)\longrightarrow^{*}\sigma$ and a
    substitution $\theta$ such that for all free variables $X\not\in V$, we have $\varrho X = \theta\sigma X$.

    Let $E_{0}=E$ and $\sigma_{0}=\makeop{id}$.
    Let $\varrho_{0}=\tau\varrho$ for some renaming $\tau$, such that
    every free variable occurring in $\varrho_{0}E_{0}$ does not occur in $E_{0}$ and is not contained in $V$.
    Then $\varrho_{0}$
    unifies $E_{0}$ because $\varrho$ unifies $E$ by assumption.
    Moreover, $\varrho_{0}=\varrho_{0}\sigma_{0}$. We proceed
    to inductively define $E_{j}$, $\sigma_{j}$ and $\varrho_{j}$
    until we reach some $j$ such that $E_j = \varnothing$.
    To guarantee well-foundedness, we ensure that the measure
    $\ord\left(\varrho_{j},E_{j}\right)$ decreases with each step.
    We maintain the following invariants for all $j$:
    \begin{itemize}
    \item $\left(E_{j},\sigma_{j}\right)\longrightarrow\left(E_{j+1},\sigma_{j+1}\right)$;
    \item $\ord\left(\varrho_{j},E_{j}\right) > \ord\left(\varrho_{j+1},E_{j+1}\right)$;
    \item $\varrho_{j}$ unifies $E_{j}$;
    \item $\varrho_{0} X = \varrho_{j}\sigma_{j} X$ for all free variables $X\not\in V$;
    \item every free variable occurring in $\varrho_{j}E_{j}$ does not occur in $E_{j}$ and is not contained in $V$;
    \item for every identification variable $X$, $\rho_j X$ is not projective; and
    \item for every elimination variable $X$, each parameter of $\rho_j X$
          has occurrences in $\rho_j X$,
          all of which are base-simple.
    \end{itemize}
    If $E_j \not= \varnothing$, let $u\unif v$ be the selected constraint $S(E_{j})$ in $E_{j}$.

    First assume that an oracle is able to find a CSU for the constraint $u\unif v$.
    Since $\varrho_{j}$ unifies $u$ and~$v$, by the definition of a CSU,
    the CSU discovered by the oracle
    contains a
    unifier $\delta$ of $u$ and $v$ such that there exists $\varrho_{j+1}$
    and for all free variables $X$ except for the auxiliary variables of the CSU
    we have $\varrho_{j} X=\varrho_{j+1}\,\delta X$. Thus, an \rulename{OracleSucc}
    transition is applicable and yields the node
    $\left(E_{j+1},\sigma_{j+1}\right)=\left(\delta\left(E_{j}\setminus\left\{ u\unif v\right\} \right),\delta\,\sigma_{j}\right)$.
    Therefore we have a strict containment $\varrho_{j+1}E_{j+1}\subset\varrho_{j+1}\,\delta\,E_{j}=\varrho_{j}E_{j}$.
    This implies $\ord\left(E_{j+1},\varrho_{j+1}\right)<\ord\left(E_{j},\varrho_{j}\right)$.
    It also shows that the constraints $\varrho_{j+1}E_{j+1}$ are unified
    when $\varrho_{j}E_{j}$ are.
    Since the auxiliary variables introduced by \rulename{OracleSucc} are fresh,
    they cannot occur in $E_j$ nor in $\sigma_{j} X$ for any $X\not\in V$. Hence,
    we have $\varrho_{0} X = \varrho_{j}\sigma_{j} X = \varrho_{j+1}\,\delta\,\sigma_{j} X =\varrho_{j+1}\sigma_{j+1} X$
    for all free variables $X\not\in V$.
    Any free variable occurring in $\varrho_{j+1}E_{j+1}$ cannot not occur in $E_{j+1}$ and is not contained in $V$
    because  $\varrho_{j+1}E_{j+1}\subset\varrho_{j}E_{j}$
    and the variables in $E_{j+1} = \delta\left(E_{j}\setminus\left\{ u\unif v\right\} \right)$
    are either variables already present in $E_{j}$ or fresh variables introduced by \rulename{OracleSucc}.
    New identification or elimination variables are not introduced;
    so their properties are preserved. Hence all invariants are preserved.

    Otherwise we proceed by a case distinction on the form of $u\unif v$.
    Typically, one of the Lemmas~\ref{lem:elimination}\textendash{}\ref{lem:huet-projection}
    is going to apply. Any one of them gives substitutions $\varrho'$
    and $\delta$ with properties that let us define $E_{j+1}=\delta E_{j}$,
    $\sigma_{j+1}=\delta\,\sigma_{j}$ and $\varrho_{j+1}=\varrho'$.
    The problem size always strictly decreases, because
    these lemmas imply
    $\varrho_{j+1}E_{j+1}=\varrho_{j+1}\delta E_{j}=\varrho_{j}E_{j}$
    and $\ord\varrho_{j+1}=\ord\varrho'<\ord\varrho_{j}$.
    Regarding the other invariants, the former equation guarantees that $\varrho_{j+1}$
    unifies $E_{j+1}$, and
    $\varrho_{0} X = \varrho_{j}\,\sigma_{j} X = \varrho_{j+1}\,\delta\,\sigma_{j} X =\varrho_{j+1}\,\sigma_{j+1} X$
    for all $X \not\in V$ because the fresh variables introduced by the binding cannot occur in $\sigma_{j} X$ for any $X\not\in V$. The conditions on variables must be checked
    separately when new ones are introduced. Let $a$ be
    the head of $u=\param a\,\seq u$ and $b$ be the head of $v = \param b\,\seq v$.
    Consider the following cases:

    \begin{description}[leftmargin=3mm]
    \item[$u$ and $v$ have the same head symbol $a=b$] \hfill
    \begin{enumerate}[leftmargin=6ex]
    \item Suppose that $\varrho_{j}a$ has a parameter with non-base-simple
    occurrence. By one of the induction invariants, $a$ is not an elimination
    variable.
    Among all non-base-simple occurrences of parameters in $\varrho_{j}a$,
    choose the leftmost one, which we call $x$.
    This occurrence of $x$ cannot be below another parameter,
    because having $x$ occur in one of its arguments would make
    that other parameter non-base-simple, contradicting the occurrence of $x$ being leftmost.
    Thus $x$ is neither base-simple nor below another parameter; so $x$ is of functional type.
    Moreover, non-base-simplicity implies non-$\omega$-simplicity.
    Hence, we can apply Lemma~\ref{lem:iteration} (iteration).

    \item Otherwise suppose that $\varrho_{j}a$ discards some of its parameters.
    By one of the induction invariants, $\varrho_{j}a$ is not an elimination variable.
    Hence Lemma~\ref{lem:elimination} (elimination) applies. The newly introduced elimination
    variable $G$ satisfies the required invariants, because Lemma~\ref{lem:elimination}
    guarantees that $\varrho_{j+1}G$ uses its parameters and shares the
    body with $\varrho_{j}a$ which by assumption of this case contains
    only base-simple occurrences.

    \item Otherwise every parameter of $\varrho_{j}a$ has occurrences and
    all of them are base-simple. We are going to show that $\rulename{Decompose}$
    is a valid transition and decreases $\varrho_{j}E_{j}$. By Lemma~\ref{lem:eq-if-omega-simple}
    we conclude from $\varrho_{j}u=\varrho_{j}v$ that $\varrho_{j}u_{i}=\varrho_{j}v_{i}$
    for every $i$. Hence the new constraints
    $E_{j+1}=E_{j}\setminus\left\{ u\unif v\right\} \cup\set{u_{i}\unif v_{i}}{\text{for all }i}$
    after $\rulename{Decompose}$ are unified by $\varrho_{j}$.
    This allows us to define $\varrho_{j+1}=\varrho_{j}$ and
    $\sigma_{j+1}=\sigma_{j}$.
    To check that $\varrho_{j+1}E_{j+1}=\varrho_{j}E_{j+1}$ is smaller
    than $\varrho_{j}E_{j}$ it suffices to check that constraints $\varrho_{j}u_{i}\unif\varrho_{j}v_{i}$
    together are smaller than $\varrho_{j}u\unif\varrho_{j}v$. Since
    all parameters of $\varrho_{j}a$ have base-simple occurrences, $\varrho_{j}u_{i}$
    is a subterm of $\varrho_{j}u=\param\varrho_{j}a\,\left(\varrho_{j}\seq u\right)$
    by Lemma~\ref{lem:base-simple->inj.}. Similarly for $\varrho_{j}v$.
    It follows that $\varrho_{j+1}E_{j+1}$ is smaller
    than $\varrho_{j}E_{j}$.
    Since $\varrho_{j+1}=\varrho_{j}$ and
    $\sigma_{j+1}=\sigma_{j}$, the other
    invariants are obviously preserved.
    \end{enumerate}

    \item[\mbox{$ u$} and \mbox{$ v$} is a flex-flex pair with different heads] \hfill
    \begin{enumerate}[leftmargin=6ex]
      \setcounter{enumi}{4}
      \item First, suppose that
      $\varrho_j a$ or $\varrho_j b$ is simply projective
      (Definition~\ref{def:projective}).
      By the induction hypothesis, the simply projective head
      cannot be an identification variable.
      Thus Lemma~\ref{lem:jp-projection} (JP-style projection) applies.

      \item Otherwise suppose that
      $\varrho_j a$ is projective but not simply.
      Then the head of $\varrho_j a$ is some parameter $x_{k}$. But
      this occurrence cannot be $\omega$-simple because it has arguments
      which cannot be bound above the head $x_{k}$. Thus Lemma~\ref{lem:iteration}
      (iteration) applies.
      If $\varrho_j b$ is projective but not simply, the same argument applies.

      \item Otherwise suppose that $\varrho_j a$, $\varrho_j b$ are in simple
      comparison form (Definition~\ref{def:simple-comparison-form}).
      By one of the the induction invariants,
      the free variables occurring in $\varrho_{j}E_{j}$ do not occur in $E_{j}$.
      Thus $\varrho_{j} a \not= a$ and $\varrho_{j} b \not= b$.
      Then Lemma~\ref{lem:identification} (identification) applies.

      \item Otherwise $\varrho_j a$, $\varrho_j b$ are not in simple comparison
      form. By Lemma~\ref{lem:principal-lemma} and by the definition of simple comparison form,
      there is some opponent pair $x_{k}\,\seq r$, $ b$
      in $\varrho_j a$ and $\varrho_j b$
      (after possibly swapping $u$ and $v$) where either the occurrence
      of $x_{k}$ is not $\omega$-simple (Definition~\ref{def:omega-simple}) or else
      $x_{k}$ has another $\omega$-simple occurrence in the body of $\varrho_j a$.
      Then Lemma~\ref{lem:iteration}
      (iteration) applies.
    \end{enumerate}

    \looseness=-1
    \item[$u$ and $v$ is a flex-rigid pair]
    Without loss of generality, assume that $a$ is flex and $b$ is rigid.\hfill

    \begin{enumerate}[leftmargin=6ex]
    \setcounter{enumi}{8}
    \item Suppose first that $\varrho_j a$ is projective.
    By one of the induction invariants, $a$ cannot be an identification variable.
    Thus Lemma~\ref{lem:huet-projection} (Huet-style projection) applies.

    \item Otherwise $\varrho_j a$ is not projective.
    The head of $\varrho_j a$ must be $b$ because $b$ is rigid, and $\varrho_j$ unifies $u$ and $v$.
    Since $\varrho_j a$ is not projective, that means that $b$ is not a bound variable.
    Therefore, $b$ must be a constant.
    Then Lemma~\ref{lem:jp-imitation} (imitation) applies.
    \end{enumerate}

    We have now constructed a run
    $\left(E_{0},\sigma_{0}\right)\longrightarrow\left(E_{1},\sigma_{1}\right)\longrightarrow\left(E_{2},\sigma_{2}\right)\longrightarrow\cdots$
    of the procedure.
    This run cannot be infinite because because the measure $\ord\left(E_j,\varrho_{j}\right)$
    strictly decreases as $j$ increases.
    Hence, at some point we reach a $j$ such
    that $E_{j}=\varnothing$ and $\varrho_{0} X = \varrho_{j}\,\sigma_{j} X$ for all $X \not\in V$.
    Therefore,
    $\left(E,\makeop{id}\right)\longrightarrow^*\left(\varnothing,\sigma_{j}\right)\longrightarrow\sigma_j$,
    and $\rho X = \tau^{-1}\,\varrho_{j}\,\sigma_{j} X$ for all $X \not\in V$,
    completing the proof. \qedhere
    \end{description}
    \end{proof}

\section{A New Decidable Fragment}%
\label{sec:solid-oracle}
\looseness=-1
We discovered a new fragment that admits a finite CSU and a simple oracle. The
oracle is based on work by Prehofer and the PT procedure~\cite{cp-unifphd-95},
an adaptation of the preunification procedure by Snyder and Gallier~\cite{sg-unif-89} (which itself is an adaptation of Huet's procedure).
PT transforms an initial multiset of constraints $E_0$
by applying bindings $\varrho$.
If there is a sequence $E_0\ptarrow{\varrho_1}\cdots\ptarrow{\varrho_n}E_n$
such that $E_n$ has only flex-flex constraints,
we say that PT produces a preunifier $\sigma = \varrho_n\ldots\varrho_1$
with constraints $E_n$.
A sequence fails if $E_n=\bot$.
As in the previous section, we consider
all terms to be $\alpha\beta\eta$-equivalence classes
with the $\eta$-long $\beta$-reduced form
as their canonical representative.
Unlike previously, in this section we view
unification constraints $s \unif t$ as ordered pairs.

The following rules, however, are stated modulo orientation.
The PT transition rules, adapted for our presentation style, are as follows:
\pagebreak[2]
\begin{description}
    \item[\rm\textsf{Deletion}]
      $\{\selected{s \unif s}\} \uplus E \ptarrowid E$
    \item[\rm\textsf{Decomposition}]
      $\{\selected{\lambda \overline{x}. \, a \, \overline{s}_m \unif \lambda \overline{x}. \, a \, \overline{t}_m} \}
        \uplus \allowbreak
        E \ptarrowid \{ s_1 \unif  t_1, \ldots, s_m \unif t_m \} \uplus E$
        \\ where $a$ is rigid
    \item[\rm\textsf{Failure}]
        $\{\selected{\lambda \overline{x}. \, a \, \overline{s} \unif \lambda \overline{x}. \, b \, \overline{t}} \}
          \uplus \allowbreak
          E \ptarrowid \bot$
    \\ where $a$ and $b$ are different rigid heads
    \item[\rm\textsf{Solution}]
      $\{\selected{\lambda \overline{x}.\, F\,\overline{x} \unif \lambda \overline{x}.\,t} \} \uplus E \ptarrow{\varrho} \varrho E$
      \\where $F$ does not occur in $t$, $t$ does not have a flex head, and $\varrho = \{ F \mapsto \lambda \overline{x}.\,t \}$
    \item[\rm\textsf{Imitation}]
      $\{\selected{\lambda \overline{x}. \, F \, \overline{s}_m \unif \lambda
      \overline{x}. \, \cst{f} \, \overline{t}_n }\} \uplus E \ptarrow{\varrho}
      \varrho(\{ G_1 \,
      \overline{s}_m  \unif  t_1, \ldots,  G_n \, \overline{s}_m \allowbreak \unif t_n\} \uplus E)$ \\
      where $\varrho = \{ F \mapsto \lambda \overline{x}_m. \, \cst{f} \, (G_1
      \, \overline{x}_m) \ldots (G_n \, \overline{x}_m) \}$,
      $\overline{G}_n$ are fresh variables of appropriate types
    \item[\rm\textsf{Projection}]
      $\{\selected{\lambda \overline{x}. \, F \, \overline{s}_m \unif \lambda
      \overline{x}. \, a \, \overline{t} }\} \uplus E \ptarrow{\varrho}
      \varrho(\{ s_i \, (G_1 \,
      \overline{s}_m) \ldots (G_j \, \overline{s}_m) \unif a \, \overline{t}\} \uplus E)$ \\
      where $\varrho =\{F \mapsto \lambda \overline{x}_m. \, x_i \, (G_1 \,
      \overline{x}_m) \ldots (G_j \, \overline{x}_m)\}$, $\overline{G}_j$ are
      fresh variables of appropriate types
\end{description}
\looseness=-1
The \selected{\text{grayed}} constraints are required to be selected by a given
selection function $S$. We call $S$ \emph{admissible} if it selects only flex-rigid constraints, prioritizes
selection of constraints applicable for \textsf{Failure} and \textsf{Decomposition},
and of descendant constraints of
\textsf{Projection} transitions with $j=0$ (i.e., for $x_i$ of base type), in that order of priority. In the
remainder of this section we consider only admissible selection functions,
an assumption that Prehofer also makes implicitly in his thesis. Additionally,
whenever we compare multisets, we use the multiset ordering defined by
Dershowitz and Manna~\cite{dm-multisetordering-79}.
As above, we assume that the fresh variables are taken from an infinite supply $V$ of fresh variables
that are different from the variables in the initial problem
and never reused.

The following lemma states that PT is \emph{complete for preunification}:
\begin{lem}%
  \label{lemma:pt-complete}
  \looseness=-1
  Let $\varrho$ be a unifier of a multiset of constraints $E_0$. Then PT produces a
  preunifier $\sigma$ with constraints $E_n$, and there exists a
  unifier $\theta$ of $E_n$ such that
  $\varrho X = \theta \sigma X$ for all $X$ that are not contained in the supply $V$ of fresh variables.
\end{lem}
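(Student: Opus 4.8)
The plan is to prove the lemma by well-founded induction on the multiset $\mathcal{M}(E,\varrho)$ whose elements encode, for each constraint in $E$, a suitable local measure of the remainder unifier $\varrho$ restricted to that constraint (e.g. a lexicographic combination in the spirit of Prehofer's thesis: size of the unification problem under $\varrho$, plus the JP-style weights of the relevant parts of $\varrho$). First I would fix $\varrho$ and an admissible selection function $S$, and set up the bookkeeping exactly as in the completeness proof of Theorem~\ref{thm:completeness}: rename $\varrho$ to $\varrho_0$ by a fresh renaming so that the variables it introduces are disjoint from those in $E_0$ and from $V$, and maintain, along a PT-derivation $E_0 \ptarrow{\varrho_1} E_1 \ptarrow{\varrho_2} \cdots$, the invariants that (i) $\varrho_j$ unifies $E_j$, (ii) $\varrho_0 X = \varrho_j \varrho_j' \cdots \varrho_1' X$ for every $X \notin V$ (writing $\varrho_i'$ for the PT-binding actually applied), and (iii) the fresh variables appearing in $\varrho_j E_j$ are disjoint from $E_j$ and from $V$.

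Next I would do the case analysis on the selected constraint $u \unif v = S(E_j)$, using admissibility of $S$ to control which rule fires. If $u = v$, \textsf{Deletion} applies and the measure drops because a constraint is removed. If $u$ and $v$ are rigid with the same head, $S$ forces \textsf{Decomposition}; the componentwise constraints are unified by $\varrho_j$ because the shared head is a constant (or a bound variable) and hence injective on the relevant arguments, and the measure decreases. If the heads are distinct rigid heads, $\varrho_j$ could not unify $u$ and $v$, contradicting invariant (i), so this case is vacuous (and \textsf{Failure} never fires on a genuinely solvable problem). The substantive cases are flex-rigid. If the flex side is $\lambda\overline x.\,F\,\overline x$ with $F \notin v$ and $v$ rigid-headed, then \textsf{Solution} applies with $\varrho = \{F \mapsto \lambda\overline x.\,v\}$; define the new remainder by removing $F$ from $\varrho_j$, observe it still unifies $\varrho E_j$, and check the measure decreases (this is where Prehofer's weight argument is reused). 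Otherwise, for a general flex-rigid pair $\lambda\overline x.\,F\,\overline s_m \unif \lambda\overline x.\,a\,\overline t$: if $\varrho_j F$ is projective, say its head is the $i$-th parameter, then \textsf{Projection} at $i$ is the right move, and admissibility guarantees that when $x_i$ is of base type the descendant is immediately selected so the measure provably drops; if $\varrho_j F$ is not projective its head must be $a$ (a constant, since $a$ is rigid and $\varrho_j$ unifies), so \textsf{Imitation} applies and the standard Huet/Prehofer weight computation gives the decrease. Finally, if $u \unif v$ is flex-flex, the derivation stops: $E_j$ consists only of flex-flex constraints (one must argue $S$ would have selected any non-flex-flex constraint, given admissibility), so we take $n = j$, $\sigma = \varrho_n' \cdots \varrho_1'$, $\theta = \varrho_n$, and conclude from invariant (ii).

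The main obstacle I expect is the \emph{measure argument itself} — proving that each PT rule strictly decreases $\mathcal{M}(E_j,\varrho_j)$ in the multiset order of Dershowitz–Manna. Decomposition and Deletion are easy, but \textsf{Imitation} and \textsf{Projection} replace one constraint by several (and change $\varrho_j$ by peeling off one layer of $\varrho_j F$), so the decrease relies on a careful bookkeeping of free and bound weights, mirroring the weight lemmas used for $\ord$ in Section~\ref{sec:proof-of-completeness}; in particular the \textsf{Projection}-with-$j>0$ case needs the admissibility hypothesis to ensure that, even though the immediate step may not shrink the measure, the forced follow-up on the base-type descendant does. A secondary subtlety is making precise the claim that when the derivation halts, \emph{all} remaining constraints are flex-flex: this follows because any rigid-rigid, flex-rigid, or trivially-deletable constraint is actionable and $S$, being admissible, would have to select and the procedure would continue — so "stuck" forces "all flex-flex". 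Once the measure decrease is in hand, well-foundedness gives termination of the constructed derivation, and the invariants deliver exactly the statement of the lemma.
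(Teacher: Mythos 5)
Your skeleton---induct on a well-founded measure of the pair (remainder substitution, constraint multiset), case-split on the selected constraint, use the head of $\varrho_j F$ to choose between \textsf{Imitation} and \textsf{Projection}, and handle \textsf{Solution} by peeling $F$ off the remainder---is the same as the paper's, and your invariants and case analysis are essentially right. The genuine gap is the one you yourself flag: you never exhibit a working measure, and the direction you sketch for it would not close. A well-founded induction cannot tolerate a step where ``the immediate step may not shrink the measure'' with the decrease deferred to a forced follow-up; you would have to package the projection step together with its descendant into a single induction step, and for a \emph{functional} projection ($j>0$) the descendant is not the descendant of a base-type projection, so admissibility does not even force the follow-up you are counting on. Likewise, importing the JP-style free/bound weights (which the paper reserves for Theorem~\ref{thm:completeness}) is unnecessary here and is precisely what creates the Dershowitz--Manna bookkeeping you are worried about.

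The paper's fix is a much simpler measure: lexicographically, (A) the sum of the \emph{abstraction-free} sizes of $\varrho F$ over all $F$ mapped by the remainder $\varrho$, where $\text{afsize}(\lambda x.\, s)=\text{afsize}(s)$, followed by (B) the multiset of sizes of the constraints in $E$. Writing $\varrho F=\lambda\overline{x}_m.\, b\,\overline{u}_n$ and replacing this binding in the remainder by $G_i\mapsto\lambda\overline{x}_m.\, u_i$, component A drops by exactly one: the occurrence of $b$ is consumed and the newly introduced $\lambda$-binders are not counted. Hence \emph{every} \textsf{Imitation} and \textsf{Projection} step, functional or not, strictly decreases A, with no appeal to admissibility. \textsf{Deletion} and \textsf{Decomposition} leave A unchanged and decrease B; \textsf{Solution} strictly decreases A because $F$ must be mapped by $\varrho$ (its partner has a rigid head), and the identity $\varrho=\varrho\{F\mapsto\lambda\overline{x}.\, t\}$ needed there is Huet's observation rather than a weight computation. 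With that measure substituted for your $\mathcal{M}$, the rest of your argument goes through essentially as written.
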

\begin{proof}
  This lemma is a refinement of Lemma 4.1.7 from Prehofer's PhD thesis~\cite{cp-unifphd-95}, and this proof closely follows the proof of that lemma.
  Compared to the lemma from Prehofer's thesis, our lemma additionally establishes
  the relationship of unifier $\theta$ of the resulting flex-flex constraint
  set $E_n$ with preunifier $\sigma$ and the target unifier $\varrho$.

  We prove the lemma by induction
  using a well-founded measure on $(\varrho,
  E_0)$. The ordering
  is the lexicographic comparison of the following properties:

  \begin{enumerate}[{\bf A:}]
    \item sum of the abstraction-free sizes of the terms $\varrho F$ for each variable $F$ mapped by $\varrho$
    \item multiset of the sizes of constraints in $E_0$
  \end{enumerate}
Here, the \emph{abstraction-free size} is inductively defined by
$\text{afsize}(F) = 1$; $\text{afsize}(x) = 1$; $\text{afsize}(\cst{f}) = 1$;
$\text{afsize}(s\,t) = \text{afsize}(s) + \text{afsize}(t)$;
$\text{afsize}(\lambda x.\, s) = \text{afsize}(s)$.

  If $E_0$ consists only of flex-flex constraints, then we can take an empty
  transition sequence (i.e., $\sigma = \text{id}$) and $\theta = \varrho$.
  Otherwise, there exists a constraint that is selected by an
  admissible selection function. We show that for each such constraint, there is
  going to be a PT transition bringing us closer to the desired preunifier.

  Let $E_0 = \{\selected{s \unif t}\} \uplus E'_0$. Since $s \unif t$ is selected, at least
  one of $s$ and $t$ must have a rigid head. We distinguish several cases based on the
  form of $s \unif t$ (modulo the constraint order):

  \begin{itemize}
    \item $s \unif s$: in this case, \textsf{Deletion} applies. This transition does not
    alter $\varrho$, but removes an equation obtaining $E_1$ which is smaller
    than $E_0$ and still unifiable by $\varrho$. By induction hypothesis, the preunifier is reachable.
    \item
    \looseness=-1
    $\lambda \overline{x}. \, a \, \overline{s}_m \unif \lambda
    \overline{x}. \, b \, \overline{t}_n$, where $a$ and $b$ are rigid: since
    $\varrho$ is a unifier, then $a=b$, $n=m$, and \textsf{Decomposition}
    applies. Similarly to the above case, we conclude that the preunifier is reachable.
    \item $\lambda \overline{x}.\, F\,\overline{x} \unif \lambda \overline{x}.\,t$ where $t$
    has a rigid head and $F$ does not occur in $t$: \textsf{Solution} applies. Huet showed~\cite[proof of L5.1]{gh-unification-75}
    that in this case $\varrho = \varrho\sigma_1$, where $\sigma_1 = \{ F \mapsto \lambda\overline{x}.\,t \}$.
    Let $\varrho_1 = \varrho\setminus\{F \mapsto \varrho F\}$.
    Then $\varrho = \varrho_1 \sigma_1$.
    \looseness=-1
    Since $\varrho$ unifies $E'_0$, $\varrho_1$ unifies $E_1 = \sigma_1 E'_0$. Clearly, $\varrho_1$ is smaller
    than $\varrho$. Now we can apply induction hypothesis on $\varrho_1$ and $E_1$, from which
    we obtain a sequence $E_1 \ptarrow{\sigma_2} \cdots \ptarrow{\sigma_n} E_n$, $\theta$ which
    unifies $E_n$ and $\sigma' = \sigma_n \ldots \sigma_2$, such that $\varrho_1 X = \theta\sigma' X$ for all $X \not\in V$.
    Finally, it is clear that sequence $E_0 \ptarrow{\sigma_1} E_1 \ptarrow{\sigma_2} \cdots \ptarrow{\sigma_n} E_n$,
    $\theta$, and $\sigma = \sigma' \sigma_1$ are as wanted in the lemma statement.
    \item \looseness=-1
    $\lambda
    \overline{x}.\,F\,\overline{s}_m~\unif~\lambda \overline{x}.\,a \,
    \overline{t}$ where $a$ is rigid: depending on the value of $\varrho$, we can
    either take an \textsf{Imitation} or \textsf{Projection} step. In either case, we show that
    the measure reduces. Let $\varrho F = \lambda \overline{x}_m. \, b \, \overline{u}_n$ where $b$
    is either a constant or a bound variable. If $b$ is a constant, we choose
    the \textsf{Imitation} step; otherwise we choose the \textsf{Projection} step. In either case,
    $\sigma_1 = \{ F \mapsto \lambda \overline{x}_m. \, b \, (G_1 \, \overline{x}_m) \ldots
    (G_n \, \overline{x}_m) \}$. Then it is easy to check that $\varrho_1 = \varrho
    \setminus \{F \mapsto \lambda \overline{x}_m. \, b \, \overline{u}_n\} \cup \{ G_1~\mapsto~\lambda\overline{x}_m.\, u_1, \allowbreak
    \ldots, G_n \mapsto \lambda\overline{x}_m.\, u_n \}$ unifies $E_1$ created as the
    result of imitation or projection. Clearly, $\varrho_1$ has smaller abstraction-free size
    than $\varrho$. Therefore, by the induction hypothesis we obtain the transitions
    $E_1 \ptarrow{\sigma_2} \cdots \ptarrow{\sigma_n} E_n$ and substitutions
    $\theta$ and  $\sigma' = \sigma_n\ldots\sigma_2$, where $\varrho_1 X =
    \theta \sigma' X$ for all $X \not\in V\setminus\{G_1,\dots,G_n\}$ and $\theta$ is a unifier of $E_n$. Our goal is to prove
    $\varrho X = \theta \sigma' \sigma_1 X$ for all $X \not\in V$.
    For variables $X \not\in V$ that are not $F$, we have $\sigma_1 X = X$
    and $\varrho X = \varrho_1 X$, which implies $\varrho X = \theta \sigma' \sigma_1 X$.
    For $X = F$, since $\varrho_1$ and $\theta \sigma'$ agree on $G_1, \ldots, G_n$, we
    conclude that $\varrho F = \theta \sigma' \sigma_1 F$.
    Therefore, by taking $\sigma = \sigma'\sigma_1$, and prepending $E_0$ to the
    sequence from the induction hypothesis, we show that the preunifier is reachable.
    \qedhere
  \end{itemize}
\end{proof}

\looseness=-1
Prehofer showed that PT terminates for some classes of constraints. Those
classes impose requirements on the free variables occurring in the constraints.
In particular, he identified a class of terms we call \emph{strictly
solid}\footnote{In Prehofer's thesis this class is described in the statement of
Theorem 5.2.6.}, which requires that all arguments of free variables are either
bound variables (of arbitrary type) or ground second-order terms of base type.
Another important constraint is linearity: a term is called
\emph{linear} if it contains no repeated occurrences of free variables.

\begin{exa}%
\label{ex:solid-terms}
Let $G$, $\cst{a}$, and $x$ be of base type, and $F$, $H$, $\cst{g}$, and $y$ be
binary. Then, the term $F \, G \, \cst{a}$ is not strictly solid, since $F$ is
applied to a free variable $G$; similarly $H \, (\lambda x. \, x) \, \cst{a}$ is
not strictly solid as the first argument is of functional type, but it is not
a bound variable; $\lambda x. \, F \, x \, \cst{a}$ is strictly solid, but $F \, \cst{a} \, (\cst{g} \, (\lambda y. \,
y\,\cst{a}\,\cst{a})  \, \cst{a})$ is not, as the second
argument of $F$ is a ground term of third order.
\end{exa}

Prehofer's thesis states that PT terminates on $\{s \unif t\}$
if $s$ is linear, $s$ shares no free variables with~$t$, $s$ is strictly solid,
and $t$ is second-order. Together with completeness of PT for preunification,
this result implies that PT procedure can be used to enumerate finitely many
elements of complete set of preunifiers for this class of terms.

Prehofer focused on the preunification problem as he remarks that the resulting
flex-flex pairs are ``intricate''~\cite[Sect.~5.2.2]{cp-unifphd-95}. We discovered that these flex-flex
pairs actually have an MGU, allowing us to solve the full unification problem,
rather than preunification problem. Moreover, we lift the order restriction
imposed by Prehofer: We identify a class of terms called \emph{solid} which
requires that all arguments of free variables are either bound variables (of
arbitrary type) or ground (arbitrary-order) terms of base type.

\begin{exa}
  Consider the setting of Example~\ref{ex:solid-terms}. The terms $F \, G \,
  \cst{a}$ and $H \, (\lambda x. \, x) \, \cst{a}$ are not solid, for the same reasons they are not strictly solid.
  However, since the order restriction is lifted, $F \, \cst{a} \, (\cst{g} \, (\lambda y. \,
  y\,\cst{a}\,\cst{a})  \, \cst{a})$ is solid.
\end{exa}

\looseness=-1
In other words, we extend the above preunification decidability result
for linear strictly solid terms along two axes: we create an oracle
for the full unification problem and we lift the
order constraints.
To enumerate a CSU for a problem $E_0 = \{s \unif t\}$, where $s$ and $t$ are solid,
$s$ is linear and shares no free variables with~$t$, our oracle applies the following two steps:
\begin{enumerate}
  \item Apply the PT procedure on $E_0$ to obtain a preunifier $\sigma$ with
  flex-flex constraints $E_n$.
  \item\label{step:ff}\looseness=-1 If $E_n$ is empty, return $\sigma$. Otherwise, choose a flex-flex constraint $u \unif v$ from $E_n$, and let the MGU of $u \unif v$
  be $\varrho$. Then, set $E_n := \varrho(E_n \setminus \{ u \unif v \})$ and
   $\sigma := \varrho\sigma$, and repeat step~\ref{step:ff}.
\end{enumerate}

To show that this oracle terminates and yields a CSU, we must prove that the PT
procedure terminates on above described class of problems (Lemma~\ref{lemma:solid-termination-preunif}). Moreover, we must show how to compute MGUs for
the remaining flex-flex pairs (Lemma~\ref{lemma:solid-var-same-mgu} and~\ref{lemma:solid-vars-diff-mgu}). Our results are combined together in Theorem~\ref{thm:oracle}.

Towards proving that the PT
procedure terminates on above described class of problems, we first consider the corresponding matching problem.
A \emph{matching problem} is a unification problem
$\{ s \unif t \}$ where $t$ is ground. In what
follows, we establish some useful properties of matching problems in which both
$s$ and $t$ are solid (\emph{solid matching problems}). We call the unifier
of a matching problem a \emph{matcher}.

\begin{lem}%
    \label{lemma:pt-matching-termination}
    PT terminates on a solid matching problem $\{s \unif t\}$.
\end{lem}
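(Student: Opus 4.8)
I would prove this by bounding the length of every PT-derivation starting from a solid matching problem $\{s \unif t\}$ by a linear function of the size of $t$, so that in particular no derivation is infinite. The engine of the argument is an invariant maintained along every such derivation: each constraint in the current multiset can be oriented as $\lambda \overline{x}.\, p \unif \lambda \overline{x}.\, q$ with $q$ ground, and both $p$ and $q$ are solid; call $q$ the \emph{target side}. This holds for $\{s \unif t\}$ since $t$ is ground and solid, and I would check that each of the six PT rules preserves it. The cases \textsf{Deletion} and \textsf{Decomposition} only discard constraints or replace a constraint by ones whose two sides are subterms of its original sides, and subterms of solid (resp.\ ground) terms are solid (resp.\ ground), with the target side of each \textsf{Decomposition} child a subterm of its parent's target side. \textsf{Failure} produces the terminal state $\bot$. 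For \textsf{Solution}, \textsf{Imitation}, and \textsf{Projection} the applied binding $\varrho$ maps only the free variable $F$, hence is the identity on every ground term, so all target sides are untouched; what remains is to see that $\varrho$ maps solid terms to solid terms.

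That last point rests on the key feature of a solid term $u$: every free variable occurs in $u$ only as a head applied to a tuple of arguments that are bound variables or ground base-type terms, so in particular $F$ never occurs inside one of its own arguments. Consequently each occurrence of $F$ in $u$ has the form $F\,\overline{r}$ with a ``tame'' tuple $\overline{r}$, and $\varrho$ rewrites it to $\cst{g}\,(G_1\,\overline{r})\ldots$, to $r_i\,(G_1\,\overline{r})\ldots$, or (for \textsf{Solution}) to a ground term; in every case the new free variables are applied only to the still-tame $\overline{r}$, and where $r_i$ becomes a head it is a bound variable or a ground base-type term and so imposes no solidity constraint, while all other free-variable occurrences in $u$ are unaffected. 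Hence $\varrho u$ is solid. The only delicacy here is that these arguments must be read in $\eta$-long form.

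With the invariant in hand I would then extract the decisive structural fact about \textsf{Projection}. Since every target side is ground its head is rigid, so \textsf{Projection} is applied only to constraints $\lambda \overline{x}.\, F\,\overline{s}_m \unif \lambda \overline{x}.\, a\,\overline{t}$ with $a$ rigid, and it produces the single constraint $\lambda \overline{x}.\, s_i\,(G_1\,\overline{s}_m)\ldots(G_j\,\overline{s}_m) \unif \lambda \overline{x}.\, a\,\overline{t}$. By the invariant $s_i$ is a bound variable or a ground base-type term --- in the latter case $x_i$ has base type and thus $j = 0$ --- so the left-hand head is rigid and the produced constraint is rigid--rigid, hence admits \textsf{Failure}, \textsf{Decomposition}, or \textsf{Deletion}. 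Because an admissible selection function prioritises \textsf{Failure} and \textsf{Decomposition}, and descendants of projections with $j = 0$, the step right after any \textsf{Projection} step is therefore a \textsf{Failure}, \textsf{Decomposition}, or \textsf{Deletion} step; in particular a \textsf{Projection} step is never the last step of a derivation, and assigning to each \textsf{Projection} step the step following it injects the \textsf{Projection} steps of a derivation into its \textsf{Failure}, \textsf{Decomposition}, and \textsf{Deletion} steps.

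It then only remains to count. Let $N$ be the sum, over all constraints currently present, of the size of the target side. Every binding is the identity on ground terms, so $N$ is never increased and is left unchanged by \textsf{Projection}; and $N$ strictly decreases at each \textsf{Deletion}, \textsf{Solution}, \textsf{Imitation}, and \textsf{Decomposition} step, since such a step either drops a constraint or replaces a constraint whose target side has size $k$ by constraints whose target sides have sizes summing to $k-1$. Therefore a derivation contains at most $\text{size}(t)$ steps of these four kinds, at most one \textsf{Failure} step (it yields $\bot$), and so, by the injection above, at most $\text{size}(t)+1$ \textsf{Projection} steps; its total length is at most $2\,\text{size}(t)+2$, and in particular finite. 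I expect the only real work to be the solidity-preservation argument of the second paragraph: the idea is straightforward, but it must be carried out carefully with $\eta$-long forms and it depends on the observation that a flex-headed term can never sit as an argument of a free variable in a solid term --- which is exactly what makes ``$F$ does not occur in $\overline{s}_m$'' in the \textsf{Projection} case, and hence the invariant itself, hold. Everything else is routine.
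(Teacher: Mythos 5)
Your proof is correct and follows essentially the same route as the paper's: both rest on the observations that the right-hand sides stay ground and solid, that every binding preserves solidity and leaves the right-hand sides untouched, and that a \textsf{Projection} step yields a rigid--rigid constraint whose immediate processing (forced by admissibility) accounts for the otherwise non-decreasing step. The only cosmetic difference is that you turn this into an explicit linear bound on derivation length via an amortization of \textsf{Projection} steps, whereas the paper packages the same idea as a lexicographic measure (multiset of right-hand-side sizes, then number of free variables) that decreases per transition possibly grouped with the following \textsf{Decomposition} or \textsf{Failure} step.
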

\begin{proof}
    We show termination by designing a measure on a matching problem $E$ that
    decreases with each application of a PT transition (possibly followed by
    \textsf{Decomposition} or \textsf{Failure} steps).
    Our measure function
    compares the following properties lexicographically:
    \begin{enumerate}[{\bf A:}]
        \item multiset of the sizes of right-hand sides of the constraints in $E$
        \item number of free variables in $E$
    \end{enumerate}

    Clearly, when applying PT transitions, the problem stays a matching problem
    because the applied bindings do not introduce free variables on the
    right-hand sides. It is also easy to check that each applied binding keeps
    the terms in the solid fragment. Namely, bindings for both
    \textsf{Imitation} and \textsf{Projection} transitions are patterns, which means
    that, after applying the binding, fresh free variables are applied only to
    bound variables or ground base-type terms. The \textsf{Solution} transition effectively
    replaces the variable with a ground term, which is obviously solid. We show each transition
    either trivially terminates or reduces the measure:

    \begin{description}[leftmargin=3mm]
      \item[Deletion] A decreases.
      \item[Decomposition] A decreases.
      \item[Failure] Trivial---represents a terminal node.
      \item[Solution] This transition applies on
        constraints of the form
        $F \unif t$. This reduces A, since the constraint $F \unif t$ is removed
        and $F$ cannot appear on the right-hand side.
      \item[Imitation] The rule is applicable only on
        $\lambda \overline{x}.\, F \, \overline{s}_m  \unif \lambda
        \overline{x}.\, \cst{f} \, \overline{t}_n$. The \textsf{Imitation} transition
        replaces the constraint with $\{H_i \,
        \overline{s}_m  \unif \overline{t}_i \mid 1\le i\le n\}$. This reduces A, as $F$ does not appear on any right-hand side.
    \item[Projection] The rule is applicable only on
        $\lambda \overline{x}.\, F \, \overline{s}_m  \unif \lambda
        \overline{x}.\, a \, \overline{t}_n$. If $a$ is a bound variable,
        and we project $F$ to argument $s_i$ different than $a$, \textsf{Failure} applies and
        PT trivially terminates. If we project $F$ to $a$, we apply \textsf{Decomposition},
        which reduces $A$. If $a$ is a constant, for \textsf{Failure} not to apply, we
        have to project to a base-type ground term $s_i$. This does not increase A,
        since no variables appear on right-hand sides, but removes the variable $F$
        from $E$, reducing B by one. \qedhere
    \end{description}
\end{proof}

We say $\sigma$ is a grounding substitution if for every variable $F$ mapped
by $\sigma$, $\sigma F$ is ground.
\begin{lem}%
    \label{lemma:pt-matching-ground}
  All unifiers produced by PT for the solid matching problem $\{s \unif t\}$ are grounding substitutions.
\end{lem}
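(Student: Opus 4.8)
The plan is to reduce the statement to a claim about \emph{complete} PT runs and then prove that claim by induction on the run length. First I would record that on a solid matching problem the matching structure is preserved by every PT transition: as already noted in the proof of Lemma~\ref{lemma:pt-matching-termination}, no binding introduces a free variable on a right-hand side, so every right-hand side stays ground, hence rigid-headed. Consequently no flex--flex constraint can ever occur, so ``$E_n$ has only flex--flex constraints'' forces $E_n = \varnothing$. Since PT terminates on such a problem (Lemma~\ref{lemma:pt-matching-termination}) and on a non-empty constraint set some transition always applies (Decomposition or Failure when the non-ground side is rigid-headed; Solution, Imitation, or Projection when it is flex-headed), every maximal run ends in $\bot$ or in $\varnothing$. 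The substitutions produced by PT are exactly the $\sigma = \varrho_n\cdots\varrho_1$ collected along the runs of the second kind, so it suffices to show each such $\sigma$ is grounding.

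I would prove this by induction on the length $n$ of a run $E_0 \ptarrow{\varrho_1}\cdots\ptarrow{\varrho_n}\varnothing$, strengthening the induction hypothesis to also state that $\mathrm{dom}(\sigma)$ contains every free variable occurring in $E_0$. The base case $n=0$ is trivial. For the step, case-split on the first transition. For \textsf{Deletion} and \textsf{Decomposition}, $\varrho_1 = \makeop{id}$ and $E_1$ is again a matching problem, so the claim follows from the induction hypothesis applied to $E_1 \Longrightarrow^* \varnothing$ (note that in a matching problem a deletable constraint $s \unif s$ has a ground side and is therefore ground, so \textsf{Deletion} erases no free variable). For \textsf{Solution}, $\varrho_1 = \{F \mapsto \lambda \overline{x}.\, t\}$ with $t$ on the ground side of the selected constraint, so $\varrho_1 F$ is already ground; $E_1$ is still a matching problem; by the induction hypothesis $\varrho_n\cdots\varrho_2$ is grounding, and since $\varrho_1 F$ is ground we get $\sigma F = (\varrho_n\cdots\varrho_2)(\varrho_1 F) = \varrho_1 F$, while $\sigma$ agrees with $\varrho_n\cdots\varrho_2$ on all other variables. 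For \textsf{Imitation} and \textsf{Projection}, $\varrho_1$ maps $F$ to a term whose head is a constant (resp.\ a leading bound variable), whose only bound atoms are the leading bound variables, and whose only free variables are the freshly introduced $\overline{G}$; these $\overline{G}$ occur freely in $E_1$, so by the strengthened hypothesis each lies in $\mathrm{dom}(\varrho_n\cdots\varrho_2)$ and is mapped to a ground term; substituting these ground terms into $\varrho_1 F$ gives a ground term, so $\sigma F = (\varrho_n\cdots\varrho_2)(\varrho_1 F)$ is ground, and again $\sigma$ agrees with $\varrho_n\cdots\varrho_2$ elsewhere. In each case one checks in passing that the strengthened hypothesis is re-established: the bound variable $F$ enters $\mathrm{dom}(\sigma)$, the fresh variables enter through the inner run, and every other free variable of $E_0$ persists into $E_1$ since $\varrho_1$ acts only on $F$, which never occurs on a ground side.

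The main obstacle is the bookkeeping around fresh variables, i.e., making precise that a fresh variable introduced by \textsf{Imitation} or \textsf{Projection} must eventually be bound along any run that reaches $\varnothing$. The strengthened induction hypothesis handles this cleanly, but it rests on the matching-problem invariant: ground sides stay ground, hence free variables can occur only on non-ground sides, and there they cannot be eliminated by \textsf{Deletion} (the opposite side is ground, so the two sides are never syntactically equal) nor by the substitutions of \textsf{Solution}, \textsf{Imitation}, or \textsf{Projection} (which touch only the single variable being bound), so the only way a free variable leaves the constraints is by being bound itself. A secondary, purely notational nuisance is keeping the composition order $\varrho_n\cdots\varrho_1$ straight and verifying the elementary identities $\sigma Y = (\varrho_n\cdots\varrho_2)Y$ for $Y \neq F$ and $\sigma F = (\varrho_n\cdots\varrho_2)(\varrho_1 F)$.
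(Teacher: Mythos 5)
Your proposal is correct and follows essentially the same route as the paper: induction on the length of the PT run, peeling off the first transition and case-splitting on it, with the key observation that the fresh variables introduced by \textsf{Imitation}/\textsf{Projection} face ground right-hand sides and hence must be mapped to ground terms by the remainder of the run. Your explicit strengthening of the induction hypothesis (that the produced unifier's domain covers all free variables of $E_0$) merely makes precise what the paper asserts informally as ``any unifier produced by PT must map all of the variables $G_1,\dots,G_l$.''
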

\begin{proof}
  Closely following the proof of Lemma 5.2.5 in Prehofer's PhD thesis~\cite{cp-unifphd-95}, we prove our claim by induction on the length of the
  PT transition sequence that leads to the unifier. We know this
  sequence is finite by Lemma~\ref{lemma:pt-matching-termination}. The base case of
  induction, for the empty sequence, is trivial. The induction step is made using
  one of the following transitions:

  \begin{description}[leftmargin=3mm]
    \item[Deletion] Trivial.
    \item[Decomposition] Trivial.
    \item[Failure] This rule is not relevant, since it will not lead to the unifier.
    \item[Solution] This rule applies the substitution $\{F \mapsto t\}$, which
    is grounding since $t$ is ground.
    \item[Imitation] This transition applies on
    constraints of the form
    \[ \lambda \overline{x}. \, F \, \overline{s}_k \unif \lambda
    \overline{x}. \, \cst{f} \, \overline{t}_l \]
    The binding for \textsf{Imitation} is
    $\varrho = \{F \mapsto \lambda \overline{x}_k.\, \cst{f}\, (G_1 \,
    \overline{s}_k ) \ldots (G_l \, \overline{s}_k ) \}$, and reduces the
    problem to $\{G_i \,  \overline{s}_k \unif t_i \mid 1\le i\le l\}$. Since the right-hand sides are ground,
    any unifier $\sigma$ produced by PT must map all of the variables $G_1, \ldots, G_l$. By induction hypothesis,
    $\sigma G_i$ is ground. Therefore, $\sigma\varrho$ must map $F$ to a ground term.
    \item[Projection] This transition applies on
    constraints of the form
    \[\lambda \overline{x}. \, F \, \overline{s}_k \unif \lambda
    \overline{x}. \, t\]
    The binding for \textsf{Projection} is
    $\varrho = \{F \mapsto \lambda \overline{x}_k. \, x_i \, (G_1 \,
      \overline{x}_k) \ldots (G_j \, \overline{x}_k)\}$,
    and reduces the problem to
    $\{s_i \, (G_1 \,
      \overline{s}_k) \ldots (G_j \, \overline{s}_k) \unif a \, \overline{t}_l\}$.
      Since the right-hand side is ground,
      any unifier $\sigma$ produced by PT must map all of the variables $G_1, \ldots, G_j$.
    By induction hypothesis,
    $\sigma G_i$ is ground. Therefore, $\sigma\varrho$ must map $F$ to a ground term. \qedhere
  \end{description}
\end{proof}

\begin{lem}%
  \label{lemma:solid-termination-preunif}
  \looseness=-1
  If $s$ and $t$ are solid, $s$ is linear and shares no free variables with $t$, then
  PT terminates for the preunification problem $\{ s \unif t \}$, and
  all remaining flex-flex constraints are solid.
\end{lem}
\begin{proof}
  This lemma is a modification of Lemma 5.2.1 and Theorem 5.2.6 from Prehofer's
  PhD thesis~\cite{cp-unifphd-95}. Correspondingly, the following proof closely
  follows the proofs for these two lemmas. We also adopt the notion of an
  \emph{isolated} variable from Prehofer's thesis: a variable is isolated in a
  multiset $E$ of constraints if it appears exactly once in $E$.

  \looseness=-1
  First, similarly to the proof of previous lemma we conclude each transition
  maintains the condition that the terms remain solid. Second, we have to show
  that variables on left-hand sides remain isolated. For \textsf{Imitation} and
  \textsf{Projection} rules, the preservation of this invariant is obvious.
  Later, we also show that \textsf{Solution} preserves it. Third, since $s$ and
  $t$ share no variables, and $s$ is linear, no rule can introduce a variable
  from the right-hand side to the left-hand side.

  To prove termination of PT, we devise a measure that
  decreases with each application of a PT transition. The measure
  lexicographically compares the following properties:
  \begin{enumerate}[{\bf A:}]
    \item number of occurrences of constant symbols and bound variables on left-hand sides that
    are not below free variables
    \item number of free variables on right-hand sides
    \item multiset of the sizes of right-hand sides
  \end{enumerate}

  We show that each transition either trivially terminates or reduces the measure:
  \begin{description}[leftmargin=3mm]
    \item[Deletion] A does not increase and at least one of A or B reduces.
    \item[Decomposition] A reduces.
    \item[Failure] Trivial.
    \item[Solution] This rule applies in two cases:

    \begin{itemize}
      \item $F \unif t$: since $F$ is isolated, A is unchanged, B is not
      increased, and C reduces. All variables on left-hand sides remain isolated
      since they are unaffected by this substitution.
      \item $t \unif F $: since $F$ does not occur on any left-hand side and the
      head of $t$ must be a constant or bound variable (otherwise the rule would
      not be applicable), $A$ reduces. Even though $t$ might contain some free
      variables and $F$ can have multiple occurrences on right-hand sides, all
      free variables on left-hand sides remain isolated. Namely, all free variables
      in $t$ occur exactly once in the multiset, and since $t \unif F$ is
      removed, all of them either disappear or end up on right-hand sides.
    \end{itemize}
    \item[Imitation] We distinguish the following two forms of the selected constraint:
    \begin{itemize}
      \item $\lambda \overline{x}. \, F \, \overline{v}_n \unif \lambda \overline{x}. \,
      \cst{f} \, \overline{u}_m$:
      We replace this constraint by
      $\{H_i \, \overline{v}_n \unif u_i \mid 1\le i\le m\}$
      and apply the \textsf{Imitation} binding
      $F \mapsto \lambda \overline{x}_n.\, \cst{f} \, (H_1 \, \overline{x}_n) \allowbreak
      \ldots (H_m \, \overline{x}_n)$.
      As $F$ is isolated, A and B do not increase.
      Since $F$ is isolated, it does not occur on any right-hand side, and hence C decreases.
      \item $\lambda \overline{x}. \, \cst{f} \, \overline{u}_m \unif \lambda \overline{x}. \, F \, \overline{v}_n $:
      applying the \textsf{Imitation} transition
      as above will reduce A since $F$ cannot appear on any
      left-hand side.
    \end{itemize}
    \item[Projection] Similarly to the previous transition rule, we have two cases:
      \begin{itemize}
        \item$\lambda \overline{x}. \, F \, \overline{v}_n \unif \lambda
        \overline{x}. \, a \, \overline{u}_m$: if $a$ is a bound variable,
        then for \textsf{Failure} not to be applicable afterwards, we have to
        project $F$ onto argument $v_i$ equal to $a$. Then we proceed like for
        \textsf{Imitation}. If $a$ is not a bound variable, then we have to
        project to some base-type term $v_j$ (otherwise \textsf{Failure} would
        be applicable afterwards). This reduces the problem to $\lambda
        \overline{x}. \, v_j \unif \lambda \overline{x}. \, a \,
        \overline{u}_m$. This is a solid matching problem, whose solutions
        computed by PT are grounding substitutions (see Lemma~\ref{lemma:pt-matching-ground}). Applying one of those solutions will
        eliminate all the free variables in $\lambda \overline{x}.\, a \,
        \overline{u}_m$. Since PT is parametrized by an admissible
        selection function, we know that there are no constraints descending
        from a simple projection in $E$ since the constraint $\lambda \overline{x}. \, F
        \, \overline{v}_n \unif \lambda \overline{x}. \, a \, \overline{u}_m$ was
        chosen,
        which, due to solidity restrictions, cannot be such a descendant.
        Therefore, we know that PT will transform the descendants
        of the matching problem $\lambda \overline{x}. \, v_j \unif \lambda
        \overline{x}.\, a \, \overline{u}_m$ until either \textsf{Failure} is
        observed (making PT trivially terminating) or until no descendant exists
        and the grounding matcher is computed (see Lemmas~\ref{lemma:pt-matching-ground} and~\ref{lemma:pt-matching-termination}).
        This results in removal of the original constraint $\lambda
        \overline{x}. \, F \, \overline{v}_n \unif \lambda \overline{x}. \,
        a \, \overline{u}_m$ and application of the computed grounding
        matcher, which will either remove all the free variables in the right-hand
        side of the constraint (not increasing A and reducing B)
        or not increasing A and B, reduce C if no free variables occur in the right-hand side.
        \item $\lambda \overline{x}. \, a \, \overline{v}_n \unif \lambda
        \overline{x}. \, F \, \overline{u}_m$:
        if $a$ is a bound variable, projecting $F$ onto argument $u_i$ will either enable
        application of \textsf{Decomposition} as the next step reducing A, or it will result in \textsf{Failure},
        trivially terminating. If $a$ is a constant, then projecting $F$ onto some $u_j$
        will either yield  \textsf{Failure} or enable \textsf{Decomposition}, reducing A. \qedhere
      \end{itemize}
    \end{description}
\end{proof}

Enumerating a CSU for a solid flex-flex pair may seem as hard as
for any other flex-flex pair; however,
the following two lemmas show that solid pairs admit an MGU\@:

\begin{lem}%
    \label{lemma:solid-var-same-mgu}
    The unification problem
    $\{\lambda \overline{z}.\, F\,\overline{s}_m \unif   \lambda \overline{z}.\, F\,\overline{s}'_{m}\}$, where both terms are
    solid, has an MGU of the form $\sigma=\{F \mapsto \lambda \overline{x}_m.\,
    G \, x_{j_1} \ldots x_{j_r}\}$ where $G$ is an auxiliary variable, and $1 \leq
    j_1 < \cdots < j_r \leq m$ are exactly those indices $j_i$ for which
    $s_{j_i} = s'_{j_i}$.
\end{lem}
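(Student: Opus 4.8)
The plan is to verify directly that the displayed $\sigma$ meets the two requirements of an MGU: it is a unifier of the pair, and every unifier of the pair factors through it. Throughout I work with $\eta$-long $\beta$-normal representatives, and I abbreviate $T = \{\, k : s_k \neq s'_k\,\}$, so that $j_1 < \cdots < j_r$ enumerates exactly $\{1,\dots,m\}\setminus T$. That $\sigma$ is a unifier is immediate: solidity forces each $s_i$ and each $s'_i$ to be a bound variable among $\overline{z}$ or a ground base-type term, so none of them contains $F$; applying $\sigma$ and $\beta$-normalizing therefore turns $\lambda\overline{z}.\,F\,\overline{s}_m$ into $\lambda\overline{z}.\,G\,s_{j_1}\ldots s_{j_r}$ and $\lambda\overline{z}.\,F\,\overline{s}'_m$ into $\lambda\overline{z}.\,G\,s'_{j_1}\ldots s'_{j_r}$, and these coincide because $s_{j_i} = s'_{j_i}$ by the choice of the $j_i$.

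For generality, let $\varrho$ be an arbitrary unifier and write $\varrho F = \lambda\overline{x}_m.\,w$ in $\eta$-long $\beta$-normal form, with $w$ of base type. Since $F$ does not occur in $\overline{s}_m$ or $\overline{s}'_m$, cancelling the common prefix $\lambda\overline{z}$ from the unification equation leaves $w[\overline{x}_m\mapsto\overline{s}_m]\equi w[\overline{x}_m\mapsto\overline{s}'_m]$. The crux is the claim that \emph{$w$ has no free occurrence of $x_k$ for any $k\in T$}. I would establish the following more general statement by structural induction on an $\eta$-long $\beta$-normal term $v$ whose free variables lie among $\overline{x}_m$ together with variables disjoint from all the $s_i$ and $s'_i$: if $v[\overline{x}_m\mapsto\overline{s}_m]\equi v[\overline{x}_m\mapsto\overline{s}'_m]$, then $v$ has no free occurrence of any $x_k$ with $k\in T$. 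Writing $v = \lambda\overline{y}.\,a\,\overline{v}$: if $a$ is a constant, a bound variable, or a free variable other than the $x_i$, then both substituted normal forms have head $a$, hence agree argumentwise, and the induction hypothesis applies to each $v_i$ (note the new parameters $\overline{y}$ stay disjoint from the $s_i,s'_i$ after $\alpha$-renaming). If $a = x_k$, then $\eta$-longness makes $x_k$ fully applied; substituting $s_k$ respectively $s'_k$ and $\beta$-normalizing erases no subterms — the only redexes created come from plugging a fully $\eta$-expanded functional bound variable into an applied position, and reducing such a redex merely relocates the arguments — so at the root of this subterm the two normal forms display $s_k$ against $s'_k$ (literally when $x_k$ is of base type, and as heads when $x_k$ is functional, since a solid functional argument is a bound variable). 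Equality of the normal forms therefore forces $s_k = s'_k$, i.e.\ $k\notin T$, together with $\equi$-equality of the corresponding arguments, to which the induction hypothesis again applies. Taking $v = w$ gives the claim.

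With the claim in hand, define $\theta$ by $\theta X = \varrho X$ for every variable $X$ other than $F$ and the fresh auxiliary variable $G$, and $\theta G = \lambda x_{j_1}\ldots x_{j_r}.\,w$; this is well-typed precisely because $w$ has base type and, by the claim, mentions among $\overline{x}_m$ only $x_{j_1},\dots,x_{j_r}$. Then $\theta\sigma F = \lambda\overline{x}_m.\,(\lambda x_{j_1}\ldots x_{j_r}.\,w)\,x_{j_1}\ldots x_{j_r} = \lambda\overline{x}_m.\,w = \varrho F$, while $\theta\sigma X = \theta X = \varrho X$ for all $X\neq F$. Since $G$ is the only fresh variable of $\sigma$ and does not occur in the original constraint, this exhibits $\{\sigma\}$ as a CSU with auxiliary variables $\{G\}$, i.e.\ as an MGU.

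The step I expect to be the main obstacle is the inductive claim that $\varrho F$ cannot depend on the arguments on which $\overline{s}_m$ and $\overline{s}'_m$ disagree. The subtlety there is entirely normalization bookkeeping: one has to make precise that feeding solid arguments into an $\eta$-long $\beta$-normal term destroys no subterms and shifts the relevant positions consistently on the two sides, so that occurrences of each $x_k$ really survive into the normal form and can be compared head against head. Once that is pinned down, the head comparison and the construction of $\theta$ are routine.
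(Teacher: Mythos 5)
Your proof is correct and follows essentially the same route as the paper's: both hinge on the observation that any occurrence of a parameter $x_i$ in the body of $\varrho F$ survives into the two normal forms as $s_i$ versus $s'_i$ (using solidity to control $\beta$-reduction), forcing $s_i = s'_i$, so $\varrho F$ can only depend on the agreeing arguments and hence factors through $\sigma$. The paper states this key claim informally in a few lines, whereas you supply the structural induction and the explicit construction of $\theta$; this is a more rigorous rendering of the same argument, not a different one.
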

\begin{proof}
    Let $\varrho$ be a unifier for the given unification problem.
    Let $\lambda \overline{x}.\, u = \varrho F$. Take an arbitrary subterm of $u$ whose head is a bound
    variable $x_i$. If $x_i$ is of function type, it corresponds to either $s_i$
    or $s'_i$ which, due to solidity restrictions, has to be a bound variable.
    Furthermore, since $\varrho$ is a unifier, $s_i$ and $s'_i$ have to be the
    syntactically equal. Similarly, if $x_i$ is of base type, it corresponds
    to two ground terms $s_i$ and $s'_i$ which have to be syntactically equal.
    We conclude that $\varrho$ can use variables from $\overline{x}_n$ only
    if they correspond to syntactically equal terms. Therefore,
    there is a substitution $\theta$ such that $\rho X = \theta\sigma X$ for all $X \not = G$.
    Due to arbitrary choice of $\varrho$, we conclude that $\sigma$ is an MGU\@.
\end{proof}

\begin{lem}%
  \label{lemma:solid-vars-diff-mgu}
  Let
  $\{\lambda \overline{x}.\, F \, \overline{s}_m \unif \lambda\overline{x}.\, F' \, \overline{s}'_{m'}\}$
  be a solid unification problem
  where $F \neq F'$.
  By Lemma~\ref{lemma:pt-matching-termination}, there exists a finite CSU $\{\sigma_i^1, \dots, \sigma_i^{k_i} \}$
  of the problem $\{s_i \unif H_i \, \overline{s}'_{m'}\}$,
  where $H_i$ is a fresh free variable.
  Let
  $\lambda\overline{y}_{m'}. \, s_i^j = \lambda\overline{y}_{m'}. \, \sigma_i^j(H_i) \,\overline{y}_{m'}$.
  Similarly, also by Lemma~\ref{lemma:pt-matching-termination},
  there exists a finite CSU $\{ \tilde\sigma_i^1, \dots, \tilde\sigma_i^{l_i} \}$
  of the problem $\{s'_i \unif \tilde H_i \, \overline{s}_m\}$, where $\tilde H_i$
  is a fresh free variable.
  Let
  $\lambda \overline{x}_m. \, {s'}_i^j = \lambda \overline{x}_m. \, \tilde\sigma_i^j(\tilde H_i) \, \overline{x}_m$.
  Let $Z$ be a fresh free variable.
  An MGU $\sigma$ for the given problem is
    \begin{align*}
      F &\mapsto \lambda \overline{x}_m.\, Z
          \, \underbrace{x_1 \, \ldots \, x_1}_{k_1 \text{ times}} \, \ldots \,
             \underbrace{x_m \, \ldots \, x_m}_{k_m \text{ times}}
             \, {s'}_1^1 \ldots {s'}_1^{l_1} \, \ldots \, {s'}_{m'}^1 \ldots {s'}_{m'}^{l_{m'}} \, \\
      F' &\mapsto \lambda \overline{y}_{m'}. \, Z
          \, s_1^1 \ldots s_1^{k_1} \ldots s_m^1 \ldots s_m^{k_m} \,
      \underbrace{y_1 \, \ldots y_1}_{l_1 \text{ times}} \, \ldots \,
      \underbrace{y_{m'} \, \ldots y_{m'}}_{l_{m'} \text{ times}}
    \end{align*}
  where the auxiliary variables are $H_1, \dots, H_m$, $\tilde H_1, \dots \tilde H_{m'}$, $Z$ and all auxiliary variables associated with the above CSUs.
\end{lem}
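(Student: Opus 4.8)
\noindent\emph{Proof idea.}
The plan is to prove the two halves of the statement separately: first that $\sigma$ is a unifier, then that it is most general. For the first half I would just apply $\sigma$ to both sides and check that the normal forms coincide. After substituting the two bindings and carrying out the $\beta$-reductions caused by the outer arguments $\overline{s}_m$ and $\overline{s}'_{m'}$, each side becomes $\lambda\overline{x}.\,Z$ applied to the list $s_1,\dots,s_1,\dots,s_m,\dots,s_m$ (with $s_i$ taken $k_i$ times) followed by ${s'}_1^1,\dots,{s'}_{m'}^{l_{m'}}$, where every ${s'}_i^j$ has been normalized to $s'_i$; the only non-routine point is that $\sigma_i^j(H_i)\,\overline{s}'_{m'}$ normalizes to $s_i$ and $\tilde\sigma_i^j(\tilde H_i)\,\overline{s}_m$ to $s'_i$, which holds because $\sigma_i^j$ (resp.\ $\tilde\sigma_i^j$) unifies $\{s_i \unif H_i\,\overline{s}'_{m'}\}$ (resp.\ $\{s'_i \unif \tilde H_i\,\overline{s}_m\}$) by construction, and $s_i$, $s'_i$, $\overline{s}_m$, $\overline{s}'_{m'}$ have no free variables and are therefore untouched by these substitutions.

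For the second half, let $\varrho$ be any unifier; working with $\eta$-long $\beta$-normal forms, write $\varrho F = \lambda\overline{x}_m.\,u$ and $\varrho F' = \lambda\overline{y}_{m'}.\,u'$, so that the common instance $W := \varrho(F\,\overline{s}_m) = \varrho(F'\,\overline{s}'_{m'})$ is the normal form both of $u\{\overline{x}_m\mapsto\overline{s}_m\}$ and of $u'\{\overline{y}_{m'}\mapsto\overline{s}'_{m'}\}$. I would define $\theta$ to agree with $\varrho$ on every non-auxiliary variable and to send $Z$ to $\lambda\overline{w}_N.\,v$, where $N = k_1+\dots+k_m+l_1+\dots+l_{m'}$, the bound variables come in groups $w^X_{i,j}$ ($1\le j\le k_i$) and $w^Y_{i,j}$ ($1\le j\le l_i$) matching the two argument blocks of $Z$ in $\sigma$, and $v$ is chosen so that the substitution $\{w^X_{i,j}\mapsto x_i,\ w^Y_{i,j}\mapsto {s'}_i^j\}$ turns $v$ into $u$ while $\{w^X_{i,j}\mapsto s_i^j,\ w^Y_{i,j}\mapsto y_i\}$ turns $v$ into $u'$. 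Given such a $v$, unfolding $\theta\sigma F$ and $\theta\sigma F'$ yields $\varrho F$ and $\varrho F'$ directly — the ground subterms $\sigma_i^j(H_i)$, $\tilde\sigma_i^j(\tilde H_i)$ are not touched by $\theta$ — so the MGU property follows.

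The term $v$ I would construct by recursion on $\text{size}(u)+\text{size}(u')$, inspecting $u$ and $u'$ in parallel. When their heads agree, that head is automatically rigid from both sides (a constant, a variable of the outer prefix $\overline{x}$, or a $\lambda$-bound variable of the shared skeleton; it cannot be among $\overline{x}_m$ or $\overline{y}_{m'}$, which are bound only in the other body): $v$ copies the head, and the recursion descends into the corresponding argument pairs, which again have a common instance. When the heads disagree, one of them must be a binder of $F$ or of $F'$, since otherwise both would equal the unique head of $W$ at that position. In the case where the $u$-side head is some $x_i$ (the $u'$-side case is symmetric, via $\{\tilde\sigma_i^j\}$ and the variables $w^Y$), the corresponding subterm of $W$ is an instance of $s_i$, so the corresponding subterm of $u'$, abstracted over $\overline{y}_{m'}$, unifies $\{s_i \unif H_i\,\overline{s}'_{m'}\}$ (in the base-type case directly, and in the functional case after peeling the head). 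The key step is to use Lemma~\ref{lemma:pt-matching-ground}: since every $\sigma_i^j(H_i)$ is ground, ``covered by the CSU $\{\sigma_i^1,\dots,\sigma_i^{k_i}\}$'' strengthens to ``\emph{equal} to some $\sigma_i^j(H_i)$'', so that subterm of $u'$ is exactly $s_i^j$ (applied to further arguments if $x_i$ is functional). I place $w^X_{i,j}$ there, applied to the recursively translated remaining argument pairs, picking $j$ uniformly across positions carrying the same $u'$-subterm; the supply $k_i$ is enough precisely because each value that occurs is one of the at most $k_i$ terms $s_i^j$. An induction then confirms that the two prescribed instantiations of $v$ give back $u$ and $u'$.

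The main obstacle is this construction of $v$. Forcing a \emph{single} term to reconstruct both $u$ and $u'$ means the disagreement analysis must be performed against the common instance $W$ rather than against $u$ or $u'$ alone, because binder-headed subterms of $u$ and of $u'$ can be nested arbitrarily inside one another; and when a functional binder is peeled off one must descend into its arguments against the matching arguments on the other side, which requires careful bookkeeping of the $\eta$-long forms (the $\beta$-reductions triggered by $\overline{s}_m$ and $\overline{s}'_{m'}$ only undo the $\eta$-expansions of solid arguments, so they leave the arguments themselves intact, but they must be tracked). Everything else — the unifier check, verifying that $\{s_i \unif H_i\,\overline{s}'_{m'}\}$ and $\{s'_i \unif \tilde H_i\,\overline{s}_m\}$ are solid matching problems so that Lemmas~\ref{lemma:pt-matching-termination} and~\ref{lemma:pt-matching-ground} apply, keeping track of the auxiliary variables, and counting the block variables — is routine.
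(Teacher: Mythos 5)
Your proposal matches the paper's proof in all essentials: your term $v$, built by parallel recursion on $u$ and $u'$ with shared heads copied, disagreements resolved via the binders of $F$ or $F'$, and groundness (Lemma~\ref{lemma:pt-matching-ground}) used to upgrade ``instance of a CSU element'' to ``equal to some $s_i^j$ or ${s'}_i^l$'', is exactly the paper's $\textsf{diff}(u,u')$, and the concluding induction is the same. The only difference is that you also spell out the routine check that $\sigma$ is a unifier, which the paper leaves implicit.
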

\begin{proof}
  Let $\varrho$ be an arbitrary unifier for the problem $\lambda
  \overline{x}.\, F \, \overline{s}_m \unif \lambda\overline{x}.\, F' \,
  \overline{s}'_{m'}$. We prove $\sigma$ is an MGU by showing that there exists a substitution
  $\theta$ such that $\varrho X = \theta\sigma X$ for all non-auxiliary variables $X$.
  We can focus only on $X \in \{F, F'\}$
  because all other non-auxiliary variables
  do neither appear in the original problem nor in $\sigma F$ or $\sigma F'$; so we can simply define $\theta X = \varrho X$.


  Let $\lambda \overline{x}_m. \, u = \varrho F$ and  $\lambda \overline{y}_{m'}. \, u' = \varrho{F'}$, where
  the bound variables $\overline{x}_m$ and $\overline{y}_{m'}$ have been $\alpha$-renamed apart. We also assume
  that the names of variables bound inside $u$ and $u'$ are $\alpha$-renamed so that they are different from $\overline{x}_m$ and
  $\overline{y}_{m'}$. Finally, bound variables
  from the definition of $\sigma$ have been $\alpha$-renamed to match $\overline{x}_m$ and $\overline{y}_{m'}$.
  We
  define $\theta$ to be the substitution
  \[\theta = \{ Z \mapsto \lambda z_1^1 \ldots z_1^{k_1} \ldots z_m^1 \ldots z_m^{k_m} \, w_1^{1} \ldots w_1^{l_1} \ldots w_{m'}^1 \ldots  w_{m'}^{l_{m'}}. \, \textsf{diff}(u, u') \}\]
  where \textsf{diff}$(v,v')$ is defined recursively as
  \begin{alignat}{3}
    &\textsf{diff}(\lambda x.\, v, \lambda y.\, v') &&{}={}&& \lambda x. \, \textsf{diff}(v, \{y \mapsto x\}v') \label{diff:alpha} \\
    &\textsf{diff}(a \, \overline{v}_n, a \, \overline{v}'_n) &&=&&  a \, \textsf{diff}(v_1, v'_1) \ldots \textsf{diff}(v_n, v'_n) \label{diff:samehd} \\
    &\textsf{diff}(x_i, v') &&=&&  z_i^k \text{, if $v' = {s}_i^k$ } \label{diff:xi}  \\
    &\textsf{diff}(v, y_i) &&=&&  w_i^l \text{, if $v = {s'}_i^l$ } \label{diff:yi}  \\
    &\textsf{diff}(x_i \, \overline{v}_n, y_j \, \overline{v}'_n) &&=&&  z_i^k \, \textsf{diff}(v_1, v'_1) \ldots \textsf{diff}(v_n, v'_n) \text{, if $y_j = s_i^k $ } \label{diff:xiyj}
  \end{alignat}

  \looseness=-1
  From \textsf{diff}'s definition it is clear that there are terms $v,v'$ for which it is undefined.
  However, we will show that for each $u$ and $u'$ that are bodies of bindings from a unifier $\varrho$,
  \textsf{diff} is defined and has the desired property.
  In equations~\ref{diff:xi},~\ref{diff:yi}, and~\ref{diff:xiyj}, if there are multiple numbers $k$ or $l$
  that fulfill the condition, choose an arbitrary one.
  We need to show that $\varrho F = \theta\sigma F$ and $\varrho F' = \theta\sigma F'$.
  By the definitions of $u$, $u'$, $\theta$ and $\sigma$ and $\beta$-reduction, this is equivalent to
  \begin{align*}
    \lambda \overline{x}_m.\, u &= \lambda \overline{x}_m.\,
    \{z_i^k \mapsto x_i, w_i^l \mapsto {s'}_i^l \text{ for all }k,l,i\}\,\textsf{diff}(u,u')\\
    \lambda \overline{y}_{m'}.\, u' &= \lambda \overline{y}_{m'}.\,
    \{z_i^k \mapsto s_i^k, w_i^l \mapsto y_i \text{ for all }k,l,i\}\,\textsf{diff}(u,u')
  \end{align*}
  We will show by induction that for any $\lambda \overline{x}_m.\, v$, $\lambda \overline{y}_m.\, v'$ such that
  \[\{x_1 \mapsto s_1, \ldots, x_m \mapsto  s_m\} v = \{y_1 \mapsto s'_1, \ldots, y_{m'} \mapsto  s'_{m'}\} v' \tag{$\star$}\]
  we have
  \begin{align*}
    \tag{\textdagger}
    \renewcommand{\arraystretch}{1.2}
    \begin{array}{c@{\qquad}l}
      v &=
      \{z_i^k \mapsto x_i, w_i^l \mapsto {s'}_i^l \text{ for all }k,l,i\}\,\textsf{diff}(v,v')\\
      v' &=
      \{z_i^k \mapsto s_i^k, w_i^l \mapsto y_i \text{ for all }k,l,i\}\,\textsf{diff}(v,v')
    \end{array}
  \end{align*}
  The equation ($\star$) holds for $v = u$ and $v' = u'$ because
  $\varrho$ is a unifier of
  $\lambda
  \overline{x}.\, F \, \overline{s}_m \unif \lambda\overline{x}.\, F' \,
  \overline{s}'_{m'}$.
  Therefore, once we have shown that ($\star$) implies ($\dagger$),
  we know that ($\dagger$) holds for $v = u$ and $v' = u'$ and we are done.

  We prove that ($\star$) implies ($\dagger$) by induction on the size of $v$ and $v'$. We consider
  the following cases:

  \begin{description}[leftmargin=3mm]
    \item[$v = \lambda x. \, v_1$]  For ($\star$) to hold, $v$ and $v'$ must be of the
    same type. Therefore, the $\lambda$-prefixes of their $\eta$-long representatives
    must have the same length and we can apply equation
    $\ref{diff:alpha}$. By the induction hypothesis, ($\dagger$) holds.
    \item[$v = x_i$] In this case, $\{x_1 \mapsto s_1, \ldots, x_m \mapsto
    s_m\} v = s_i$. Since ($\star$) holds, $v'$ must be an instance of a unifier from
    the CSU of $s_i = H_i \, \overline{s}'_{m'}$. However, since $s_i$ and all
    terms in $\overline{s}'_{m'}$ are ground, $\lambda \overline{y}_m. \, v' =
    \sigma_i^k(H_i)$, for some $k$. Then, $\textsf{diff}(x_i, v') = z_i^k$, and
    it is easy to check that ($\dagger$) holds.
    \item[$v = x_i \, \overline{v}_n, n > 0$] In this case, $x_i$ is mapped to $s_i$
    which, due to solidity restrictions, has to be a functional bound variable.
    Since ($\star$) holds, we conclude that the head of $\{y_1 \mapsto s'_1, \ldots,
    y_{m'} \mapsto  s'_{m'}\}v'$ must be $s'_{j}$, such that $s'_{j} = s_i$;
    this also means that $v' = y_j \, \overline{v'}_n$. Therefore, it is easy to
    check that some $\tau = \{ H_i \mapsto \lambda \overline{y}_{m'}. \, y_j\}$ is
    a matcher for the problem $s_i = H_i \, \overline{s}'_{m'}$. For some $k$,
    $\sigma_i^k = \tau$, i.e., $\textsf{diff}(v,v') = z_i^k \, \textsf{diff}(v_1, v'_1)
    \ldots \textsf{diff}(v_n, v'_n) $. By induction hypothesis, we get that $(\dagger)$ holds.
    \item[$v = a \, \overline{v}_n$] In the remaining cases $a$ is either a free variable, a loose bound variable
    different than $x_1, \ldots, x_m$, or a constant. If $a$ is a free variable or a loose bound variable
    different than $x_1, \ldots, x_m$, then $v' = a \, \overline{v'}_n$, since
    ($\star$) holds, all of $\overline{s}'_{m'}$ are ground, and bound variables different than
    $x_1, \ldots, x_m$ and $y_1, \ldots, y_{m'}$ are renamed to match by equation~\ref{diff:alpha}.
    By the induction
    hypothesis and by equation~\ref{diff:samehd}, we obtain $(\dagger)$. If $a$ is a constant, we consider two
    cases: either $v' = a \, \overline{v'}_n$, which allows us to apply the induction
    hypothesis and obtain $(\dagger)$ as above, or $v' = y_j \, \overline{v'}_m$.
    Since $a$ is a constant, $y_j$ cannot be a functional bound variable, since
    then it would be mapped to $s'_j$, which due to solidity restrictions
    also has to be a functional bound
    variable and $(\star)$ would not hold. Therefore $v' = y_j$. In this case, we
    proceed as in the case $v = x_i$ with the roles of $v$ and $v'$ swapped. \qedhere
  \end{description}

\end{proof}

\begin{thm}%
    \label{thm:oracle}
    Let $s$ and $t$ be
    solid terms that share no free variables,
    and let $s$ be linear.
    Then the unification problem $\{s \unif t\}$ has a finite CSU\@.
\end{thm}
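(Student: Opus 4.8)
The plan is to run the PT procedure on $\{s \unif t\}$ and then finish off the remaining flex-flex constraints using the most general unifiers supplied by Lemmas~\ref{lemma:solid-var-same-mgu} and~\ref{lemma:solid-vars-diff-mgu}. First I would apply Lemma~\ref{thm:solid-termination}: since $s$ and $t$ are solid, $s$ is linear, and $s$ and $t$ share no free variables, PT terminates on the preunification problem $\{s\unif t\}$, so the PT derivation tree is finite. Each of its leaves is either $\bot$ or a preunifier $\sigma$ together with a finite multiset $E_\sigma$ of solid flex-flex constraints. By the design of the PT transitions, whenever $\tau$ unifies $E_\sigma$ the composition $\tau\sigma$ unifies $\{s\unif t\}$; and by Lemma~\ref{lemma:pt-complete}, for every unifier $\varrho$ of $\{s\unif t\}$ there is such a leaf and a unifier $\theta$ of $E_\sigma$ with $\varrho X = \theta\sigma X$ for all $X \notin V$.

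Next I would show that every finite multiset $E$ of solid flex-flex constraints has a finite CSU $U_E$, proceeding by induction on the number of constraints in $E$. If $E = \varnothing$, take $U_E = \{\makeop{id}\}$. Otherwise select a constraint $c = (\lambda\overline{x}.\,F\,\overline{r} \unif \lambda\overline{x}.\,F'\,\overline{r}')$ of $E$; Lemma~\ref{lemma:solid-var-same-mgu} (if $F = F'$) or Lemma~\ref{lemma:solid-vars-diff-mgu} (if $F \neq F'$) gives an MGU $\mu$ of $c$. Applying $\mu$ to $E \setminus \{c\}$ and discarding trivial equations yields a multiset $E'$ with strictly fewer constraints: $c$ itself becomes trivial under its own unifier $\mu$ and disappears, $\mu$ adds no constraints, and every remaining constraint stays flex-flex because the bodies of $\mu F$ and $\mu F'$ are headed by fresh free variables. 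The crucial point, addressed in the last paragraph, is that $E'$ is again a multiset of \emph{solid} flex-flex constraints, so the induction hypothesis applies and gives a finite CSU $U'$ of $E'$; a routine composition argument then shows that $U_E = \{\nu\mu \mid \nu \in U'\}$ is a finite CSU of $E$, whose auxiliary variables are the fresh variables of $\mu$ together with those of $U'$.

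Finally I would assemble the CSU of $\{s\unif t\}$ as $U = \bigcup \{\tau\sigma \mid \tau \in U_{E_\sigma}\}$ over all non-failure leaves $\sigma$ of the PT tree, with auxiliary variables $V$ together with all auxiliary variables of the various $U_{E_\sigma}$. This $U$ is finite, since the PT tree has finitely many leaves and each $U_{E_\sigma}$ is finite; every element $\tau\sigma \in U$ is a unifier of $\{s\unif t\}$ by the preunifier property; and completeness follows by chaining Lemma~\ref{lemma:pt-complete} with the CSU property of $U_{E_\sigma}$ — here one has to note that the auxiliary variables of $U_{E_\sigma}$ are fresh and hence do not occur in $\sigma$, so that the substitution equalities compose as required.

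The hard part will be establishing, for the inductive step, that $\mu$ keeps the remaining constraints inside the solid fragment. The same-head case is immediate: the MGU of Lemma~\ref{lemma:solid-var-same-mgu} has the pattern form $\{F \mapsto \lambda\overline{x}_m.\,G\,x_{j_1}\ldots x_{j_r}\}$, so at every occurrence $F\,\overline{v}_m$ in a solid term the arguments of the fresh head $G$ become a subsequence of $\overline{v}_m$, which are bound variables or ground base-type terms by solidity. The different-head case of Lemma~\ref{lemma:solid-vars-diff-mgu} is more delicate, because the bodies of $\mu F$ and $\mu F'$ are headed by the single fresh variable $Z$ applied not only to repeated bound variables but also to the subterms ${s'}_i^j$ and $s_i^j$ coming from the auxiliary matching CSUs. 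One must check that once $\mu$ replaces the $\lambda$-bound $\overline{x}_m$ (resp.\ $\overline{y}_{m'}$) by the actual restricted arguments of $F$ (resp.\ $F'$) at an occurrence, each of these subterms reduces to a bound variable or a ground base-type term. This follows because the auxiliary problems are solid matching problems whose PT-produced matchers are grounding substitutions (Lemmas~\ref{lemma:pt-matching-termination} and~\ref{lemma:pt-matching-ground}), combined with a type argument distinguishing the base-type and functional-bound-variable subcases. I would package this as a separate lemma asserting that the bindings of Lemmas~\ref{lemma:solid-var-same-mgu} and~\ref{lemma:solid-vars-diff-mgu} preserve solidity, proved by exactly this case analysis.
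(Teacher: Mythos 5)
Your proposal is correct and follows essentially the same route as the paper's proof: run PT (Lemma~\ref{thm:solid-termination}) to obtain finitely many preunifiers with solid flex-flex residues, resolve each residue by iteratively applying the MGUs of Lemmas~\ref{lemma:solid-var-same-mgu} and~\ref{lemma:solid-vars-diff-mgu}, and chain Lemma~\ref{lemma:pt-complete} with the MGU property to establish completeness. You are in fact somewhat more explicit than the paper on the one point it merely asserts, namely that applying these MGUs keeps the remaining constraints solid and flex-flex.
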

\begin{proof}
    \looseness=-1
    By Lemma~\ref{lemma:solid-termination-preunif},
    PT terminates on $\{s \unif t\}$ with a finite set of preunifiers $\sigma$,
    each associated with a multiset $E$ of solid flex-flex pairs.

    An MGU $\delta_E$ of the remaining multiset $E$
    of solid flex-flex constraints can be found as follows.
    Choose one constraint $(u \unif v) \in E$ and determine an MGU $\varrho$ for it
    using Lemma~\ref{lemma:solid-var-same-mgu} or~\ref{lemma:solid-vars-diff-mgu}.
    Then the set $\varrho(E\setminus\{u \unif v\})$ also contains only solid flex-flex constraints,
    and we iterate this process by choosing a constraint from $\varrho(E\setminus\{u \unif v\})$ next
    until there are no constraints left, eventually
    yielding an MGU $\varrho'$ of $\varrho(E\setminus\{u \unif v\})$. Finally, we obtain the MGU
    $\delta_E = \varrho'\varrho$ of $E$.

    Let $U = \{ \delta_E\sigma
    \mid \text{PT produces preunifier $\sigma$ with constraints }E\}$.
    By termination of PT, $U$ is finite.
    We show that $U$ is a CSU\@.
    Let $\varrho$ be an arbitrary unifier for $\{ s \unif t\}$. By Lemma~\ref{lemma:pt-complete},
    PT produces a preunifier $\sigma$ with flex-flex constraints $E$
    such that there is a unifier $\theta$ of $E$ and
    $\varrho X = \theta\sigma X$ for all $X$ not contained in the supply of fresh variables $V$.
    Since
    $\delta_E$ is an MGU of $E$,
    assuming that we use variables from $V$ in the role of the auxiliary variables,
    there exists a
    substitution $\theta'$ such that $\theta X = \theta' \delta_E X$ for all $X \not\in V$.
    If we make sure that we never reuse fresh variables and that the supply $V$ does not contain any variables from the initial problem,
    it follows that $\varrho X = \theta' \delta_E\sigma X$ for all $X \not\in V$.
    Therefore, $U$ is a CSU\@.
\end{proof}

The proof of Theorem~\ref{thm:oracle}
provides an effective way to calculate a CSU using PT and the results
of Lemmas~\ref{lemma:solid-var-same-mgu} and~\ref{lemma:solid-vars-diff-mgu}.

 \begin{exa}%
   \label{example:solid}
  Let $\{F \, (\cst{f} \, \cst{a}) \unif \cst{g} \, \cst{a} \, (G \, \cst{a})\}$ be
  the unification problem to solve. Projecting~$F$ onto the first argument will
  lead to a nonunifiable problem, so we perform imitation of~$\cst{g}$ building a
  binding $\sigma_1 = \{ F \mapsto \lambda x.\, \cst{g} \, (F_1 \, x) \, (F_2 \, x)
  \}$. This yields the  problem $\{F_1
  \, (\cst{f} \, \cst{a}) \unif \cst{a}, \allowbreak F_2 \, (\cst{f} \, \cst{a}) \unif G \,
  \cst{a}\}$. Again, we can only imitate $\cst{a}$ for $F_1$---building a new
  binding $\sigma_2 = \{F_1 \mapsto \lambda x.\, \cst{a} \}$. Finally, this
  yields the problem $\{ F_2 \, (\cst{f} \, \cst{a}) \unif G \, \cst{a} \}$.
  According to Lemma~\ref{lemma:solid-vars-diff-mgu}, we find CSUs for the
  problems $J_1 \, \cst{a} = \cst{f} \, \cst{a}$ and $I_1 \, (\cst{f} \, \cst{a})
  \unif \cst{a}$ using PT\@. The latter problem has a singleton CSU $\{I_1 \mapsto \lambda x.\,
  \cst{a}\}$, whereas the former has a CSU containing $\{ J_1 \mapsto \lambda x.\,
  \cst{f} \, x \}$ and $\{ J_1 \mapsto \lambda x.\, \cst{f} \, \cst{a} \}$.
  Combining these solutions, we obtain an MGU $\sigma_3 =
  \{ F_2 \mapsto \lambda x.\, H \, x \, x \, \cst{a},\,
  G   \mapsto \lambda x.\, H \, (\cst{f} \, \cst{a}) \, (\cst{f} \, x) \, x \}$
  for $F_2 \, (\cst{f}\, \cst{a}) \unif G \, \cst{a}$.
  Finally, we get the MGU $\sigma = \sigma_3 \sigma_2 \sigma_1 = \{ F
  \mapsto \lambda x.\, \cst{g} \, \cst{a} \, (H \, x \, x \, \cst{a}),\, G \mapsto
  \lambda x.\, H \, (\cst{f} \, \cst{a}) \, (\cst{f} \, x) \, x  \}$ of the original problem.
  For brevity, we omitted the intermediate bindings of auxiliary variables
  in $\sigma$.
\end{exa}

Example~\ref{example:solid} shows that the solid fragment is useful for
automatic theorem provers based on $\lambda$-superposition~\cite{ab-lamsup-2019}.
When the solid oracle is used, superposing from $F \, (\cst{f} \, \cst{a})$
into $\cst{g} \, \cst{a} \, (G \, \cst{a})$ yields a single clause; without it,
our procedure does not terminate.


Small examples that violate conditions of Theorem~\ref{thm:oracle}
and admit only infinite CSUs can be found easily. The problem $\{ \lambda x.\, F \, (\cst{f} \, x) \unif \lambda x.\, \cst{f} \,
(F \, x) \}$ violates variable distinctness and is a well-known example of a problem with only infinite CSUs.
Similarly,
$\lambda x.\, \cst{g} \, (F \, (\cst{f} \, x)) \, F  \unif \lambda x.\, \cst{g}
\, (\cst{f} \, (G \, x)) \, G $, which violates linearity, reduces to the previous problem.
Only ground arguments to free variables are allowed because  $\{F \, X \unif G
\, \cst{a}\}$ has only infinite CSUs. Finally, it is crucial that
functional arguments to free variables are only bound variables: the problem
$\{ \lambda y. \, X (\lambda x.\, x) \, y \unif \lambda y. \, y \}$ has only infinite CSUs.


\section{An Extension of Fingerprint Indexing}%
\label{sec:indexing}

\looseness=-1
A fundamental building block for almost all automated reasoning tools is the
operation of retrieving term pairs that satisfy certain conditions, e.g.,
unifiable terms, instances or generalizations. Indexing data structures are used
to implement this operation efficiently. If the data structure retrieves
precisely the terms that satisfy the condition, it is called \emph{perfect}.

Higher-order indexing has received little attention compared to its
first-order counterpart. However, recent research in higher-order theorem
proving increased the interest in higher-order indexing~\cite{ls-indexing-16,br-combs-19}.
A \emph{fingerprint index}~\cite{ss-fpindex-12,pv-ehoh-19} is an imperfect index based on
the idea that the skeleton of the term consisting of all non-variable positions
is not affected by substitutions. Therefore, we can easily determine that the terms
are not unifiable (or matchable) if they disagree on a fixed set of sample
positions.

More formally, when we sample an untyped first-order term $t$ on a sample position $p$,
the \emph{generic fingerprinting function} $\text{gfpf}$
distinguishes four possibilities:
\[
  \kern5em
  \text{gfpf}(t,p) =
  \begin{cases}
    \cst{f} & \text{if $t|_p$ has a symbol head $\cst{f}$} \\[-\jot]
    \AA & \text{if $t|_p$ is a variable} \\[-\jot]
    \BB & \text{if $t|_q$ is a variable for some proper prefix $q$ of $p$} \\[-\jot]
    \NN & \text{otherwise}
  \end{cases}
  \kern5em
\]
We define the \emph{fingerprinting function} $\text{fp}(t) = (\text{gfpf}(t,
p_1), \ldots, \text{gfpf}(t, p_n))$, based on a fixed tuple of positions $\overline{p}_n$.
Determining whether two terms are compatible for a given retrieval operation reduces
to checking their fingerprints' componentwise compatibility. The following matrices
determine the compatibility for retrieval operations:
\begin{align*}
    && \hbox{\begin{tabular}{@{}C{1em}|C{1em}C{1em}C{1em}C{1em}C{1em}@{}}
    & $\cst{f}_1$ & $\cst{f}_2$ & \AA & \BB & \NN \\
    \hline
    $\cst{f}_1\vphantom{^{()}}$ & & \xmark & & & \xmark \\
    \AA & & & & & \xmark \\
    \BB & & & & & \\
    \NN & \xmark & \xmark & \xmark & &
    \end{tabular}}
    &&
    \hbox{\begin{tabular}{@{}C{1em}|C{1em}C{1em}C{1em}C{1em}C{1em}@{}}
    & $\cst{f}_1$ & $\cst{f}_2$ & \AA & \BB & \NN \\
    \hline
    $\cst{f}_1\vphantom{^{()}}$ & & \xmark & \xmark & \xmark & \xmark \\
    \AA & & & & \xmark & \xmark \\
    \BB & & & & & \\
    \NN & \xmark & \xmark & \xmark & \xmark &
    \end{tabular}}
\end{align*}
The left matrix determines unification compatibility, while the right matrix
determines compatibility for matching term $s$ (rows) onto term $t$ (columns). Symbols
$\cst{f}_1$ and $\cst{f}_2$ stand for arbitrary distinct constants. Incompatible
features are marked with \xmark. For example, given a tuple of term positions $(1, 1.1.1, 2)$, and
terms $\cst{f}(\cst{g}(X), \cst{b})$ and $\cst{f}(\cst{f}(\cst{a},\cst{a}),\cst{b})$,
their fingerprints are $(\cst{g}, \BB, \cst{b})$ and $(\cst{f},
\NN, \cst{b})$, respectively. Since the first fingerprint component is incompatible,
terms are not unifiable.

Fingerprints for the terms in the index are stored in a trie data structure.
This allows us to efficiently filter out terms that are not compatible with a given
retrieval condition. For the remaining terms, a unification or matching
procedure must be invoked to determine whether they satisfy the condition or not.

The fundamental idea of first-order fingerprint indexing carries over to
higher-order terms---application of a substitution does not change the rigid skeleton of a term.
However, to extend fingerprint indexing to
higher-order terms, we must address the issues of $\alpha\beta\eta$-normalization and figure how to cope with $\lambda$-abstractions and
bound variables. To that end, we define a function $\hotofo{t}$, defined on
$\beta$-reduced $\eta$-long terms in De Bruijn~\cite{db-dbindices-75} notation:
\begin{align*}
& \hotofo{F \, \overline{s}} = F
&& \hotofo{x_i \, \overline{s}_n } = \cstdba{i}(\hotofo{s_1}, \ldots, \hotofo{s_n})
&& \hotofo{\cst{f} \, \overline{s}_n } = \cst{f}(\hotofo{s_1}, \ldots, \hotofo{s_n}) &&
\hotofo{\lambda \overline{x}. \, s  } = \hotofo{s}
\end{align*}
\looseness=-1
We let $x_i$ be a bound variable of type $\alpha$ with De Bruijn index $i$,
and $\cstdba{i}$ be a fresh constant corresponding to this variable. All constants
$\cstdba{i}$ must be fresh.
Effectively, $\hotofo{\;}$ transforms a $\eta$-long $\beta$-reduced higher-order term to an untyped first-order term. Let $\nf{t}{\beta\eta}$ be
the $\eta$-long $\beta$-reduced form of $t$; the higher-order generic
fingerprinting function $\text{gfpf}_\textsf{ho}$, which relies on conversion $\nametodb{t}$ from
named to De Bruijn representation, is defined as
\[
  \text{gfpf}_\textsf{ho}(t,p) =
    \text{gfpf}(\hotofo{\nametodb{\nf{t}{\beta\eta}}}, p)
\]

If we define $\text{fp}_\textsf{ho}(t) =
\text{fp}(\hotofo{\nametodb{\nf{t}{\beta\eta}}})$, we can support fingerprint indexing
for higher-order terms with no changes to the compatibility matrices. For example,
consider the terms $s = (\lambda x y. \, x \, y) \, \textsf{g}$ and $t = \cst{f}$,
where $\cst{g}$ has the type $\alpha \rightarrow \beta$ and $\cst{f}$ has the
type $\alpha \rightarrow \alpha \rightarrow \beta$.
For the tuple of
positions $(1, 1.1.1, 2)$ we get
\begin{align*}
  & \text{fp}_\textsf{ho}(s) =
    \text{fp}(\hotofo{\nametodb{\nf{s}{\beta\eta}}}) =
    \text{fp}(\cst{g}(\cstdba{0})) =
    (\cstdba{0}, \NN, \NN) \\
  & \text{fp}_\textsf{ho}(t) =
   \text{fp}(\hotofo{\nametodb{\nf{t}{\beta\eta}}}) =
    \text{fp}(\cst{f}(\cstdba{1}, \cstdba{0})) =
    (\cstdba{1}, \NN, \cstdba{0})
\end{align*}
Since the first and third fingerprint component are incompatible, the terms are not unifiable.

\looseness=-1
Other first-order indexing techniques such as feature vector indexing and substitution trees
can probably be extended to higher-order terms using the method described here as well.

\section{Implementation}%
\label{sec:implementation}
Zipperposition~\cite{sc-phd-2015,sc-supind-17} is an
open-source\footnote{\url{https://github.com/sneeuwballen/zipperposition}}
theorem prover written in OCaml.
It is a versatile testbed for prototyping extensions to
superposition-based theorem provers.
It was initially
designed as a prover for polymorphic first-order logic
and then extended to higher-order logic.
A recent addition is a complete mode for
Boolean-free higher-order logic~\cite{ab-lamsup-2019},
which depends on a unification procedure that can enumerate a CSU\@.
We implemented our procedure in Zipperposition.

We used OCaml's functors to create a modular implementation. The core of our
procedure is implemented in a module which is parametrized by another module
providing oracles and implementing the \textsf{Bind} step. In this way we can
obtain the complete or pragmatic procedure and seamlessly integrate
oracles while reusing as much common code as possible.

To enumerate all elements of a possibly infinite CSU, we rely on lazy lists whose
elements are subsingletons of unifiers (either one-element sets containing a unifier
or empty sets). The search space must be explored in a \emph{fair} manner,
meaning that no branch of the constructed tree is indefinitely postponed.

Each {\sf Bind} step will give rise to new unification problems $E_1, E_2,
\ldots$ to be solved. Solutions to each of those problems are lazy lists $p_1,
p_2, \ldots$ containing subsingletons of unifiers. To avoid postponing some
unifier indefinitely, we use the dovetailing technique: we first take one
subsingleton from $p_1$, then one from each of $p_1$ and $p_2$. We continue with
one subsingleton from each of $p_1,p_2$ and $p_3$, and so on. Empty lazy lists are
ignored in the traversal. To ensure we do not remain stuck waiting for a unifier
from a particular lazy list, the procedure will periodically return an empty
set, indicating that the next lazy list should be probed.

\looseness=-1
The implemented selection function for our procedure prioritizes selection of
rigid-rigid over flex-rigid pairs, and flex-rigid over flex-flex pairs. However,
since the constructed substitution $\sigma$ is not applied eagerly, heads can
appear to be flex, even if they become rigid after dereferencing and
normalization. To mitigate this issue to some degree, we
dereference the heads with $\sigma$, but do not normalize, and use the resulting
heads for prioritization.

We implemented oracles for the pattern, solid, and fixpoint fragment.
Fixpoint unification~\cite{gh-unification-75} is concerned with problems of the form
$\{F \unif t\}$. If $F$ does not occur in $t$, $\{ F \mapsto t \}$
is an MGU for the problem. If there is a position $p$ in $t$ such that
$t|_p = F \, \overline{u}_m$
and for each prefix $q\neq p$ of $p$, $t|_q$ has a rigid head and either $m=0$ or $t$ is not a $\lambda$-abstraction,
then we can conclude that $F \unif t$ has no solutions. Otherwise,
the fixpoint oracle is not applicable.

For second-order logic with only unary constants, it is decidable whether
a unifier for a problem in this class (called
\emph{monadic second-order}) exists~\cite{wf-monadicunif-88}. As this class of terms admits a
possibly infinite CSU, this oracle cannot be used for \textsf{OracleSucc} step,
but it can be used for \textsf{OracleFail}. Similarly the fragment of
second-order terms with no repeated occurrences of free variables has decidable
unifier existence but possibly infinite CSUs~\cite{gd-unif-chapter-01}. Due to their limited applicability and high complexity we
decided not to implement these oracles.




\section{Evaluation}%
\label{sec:evaluation}

We evaluated the implementation of our unification procedure in Zipperposition,
assessing the complete variant and the pragmatic variant, the latter with several
different combinations of limits for number of bindings. As part of the
implementation of the complete mode for Boolean-free higher-order logic in
Zipperposition~\cite{ab-lamsup-2019}, Bentkamp implemented a straightforward
version of the JP procedure. This version is faithful to the original description,
with a check as to whether a (sub)problem can be solved using a first-order oracle 
as the only optimization. Our evaluations were performed on StarExec Miami~\cite{as-starexec-14} servers with Intel Xeon E5-{2620 v4} CPUs clocked at {2.10}\,GHz with 60\,s CPU limit.

Contrary to first-order unification, there is no widely available corpus of
benchmarks dedicated solely to evaluating performance of higher-order
unification algorithms. Thus, we used all 2606 monomorphic higher-order theorems
from the TPTP library~\cite{gf-tptp-17} and 832 monomorphic higher-order
Sledgehammer (SH) generated problems~\cite{ns-sh-13} as our benchmarks\footnote{An archive with raw
results, all used problems, and scripts for running each configuration is
available at \url{http://doi.org/10.5281/zenodo.4269591}}.
Many TPTP problems require synthesis of complicated unifiers,
whereas Sledgehammer problems are only mildly higher-order---many of them are
solved with first-order unifiers.

\looseness=-1
We used the naive implementation of the JP procedure (\textbf{jp}) as a baseline
to evaluate the performance of our procedure. We compare it with the complete
variant of our procedure (\textbf{cv}) and pragmatic variants
(\textbf{pv}) with several different configurations of limits for
applied bindings. All other Zipperposition parameters have been fixed to the values of
a variant of a well-performing configuration we used for the CASC-27 theorem proving competition~\cite{gs-casc-2019}.
The cv configuration and all of the pv
configurations use only pattern unification as an underlying oracle. To test
the effect of oracle choice, we evaluated the complete variant
in 8 combinations: with no oracles
(\textbf{n}), with only fixpoint (\textbf{f}), pattern (\textbf{p}), or solid (\textbf{s}) oracle,
and with their combinations: \textbf{fp}, \textbf{fs}, \textbf{ps}, \textbf{fps}.

\looseness=-1
Figure~\ref{fig:new-old} compares different variants of the procedure with the
naive JP implementation. Each pv configuration is denoted by pv$^{a}_{bcde}$
where $a$ is the limit on the total number of applied bindings, and $b$, $c$,
$d$, and $e$ are the limits of functional projections, eliminations, imitations,
and identifications, respectively. Figure~\ref{fig:oracles} summarizes the
effects of using different oracles.

The configuration of our procedure with no oracles outperforms the JP procedure with the first-order oracle. This
suggests that the design of the procedure, in particular lazy normalization and
lazy application of the substitution, already reduces the effects of the JP
procedure's main bottlenecks. Raw evaluation data shows that on TPTP benchmarks, complete and pragmatic configurations
differ in the set of problems they solve---cv solves 19
problems not solved by pv$^{4}_{2222}$, whereas pv$^{4}_{2222}$ solves 34
problems cv does not solve. Similarly, comparing the pragmatic configurations with each other,
pv$^{6}_{3333}$ and pv$^{4}_{2222}$ each solve 13 problems that the other one does not.  The overall higher success rate of pv$^2_{1020}$ compared to
pv$^2_{1222}$ suggests that solving flex-flex pairs by trivial unifiers often
suffices for superposition-based theorem proving.

\newbox\mybox
\setbox\mybox=\hbox{\small pv$^{12}_{66666}$}

\newcommand\HEAD[1]{\hbox to \wd\mybox{\hfill\hbox{#1}\hfill}}
\newcommand\Z{\phantom{0}}
\newcommand\MIDLINE{\\[.25ex]\hline\rule{0pt}{3ex}}

\begin{figure}[t]
	\def\arraystretch{1.1}%
  \begin{tabular}{lcccccccc}
    \strut                  &  \HEAD{jp}               & \HEAD{cv}              & \HEAD{pv$^{12}_{6666}$}    & \HEAD{pv$^{6}_{3333}$}     & \HEAD{pv$^{4}_{2222}$}    & \HEAD{pv$^{2}_{1222}$}   & \HEAD{pv$^{2}_{1121}$}         & \HEAD{pv$^{2}_{1020}$}
    \MIDLINE
    TPTP                    &  \eqmakebox[jp][r]{1551} &  \eqmakebox[nc][r]{1717} &  \eqmakebox[np6][r]{1722} &  \eqmakebox[np3][r]{\bf 1732} & \eqmakebox[np2][r]{\bf 1732} & \eqmakebox[np1][r]{1715} &  \eqmakebox[np10][r]{1712}   & \eqmakebox[np0][r]{1719}  \\
    SH                      &  \eqmakebox[jp][r]{242}  &  \eqmakebox[nc][r]{\bf 260}  &  \eqmakebox[np6][r]{253}  &  \eqmakebox[np3][r]{255}  & \eqmakebox[np2][r]{255}  & \eqmakebox[np1][r]{254}  &  \eqmakebox[np10][r]{259}    & \eqmakebox[np0][r]{257}
    \end{tabular}
	\caption{Proved problems, per configuration}%
	\label{fig:new-old}
\end{figure}
\begin{figure}[t]
	\def\arraystretch{1.1}%
  \begin{tabular}{lcccccccc}
    \strut                  &  \HEAD{n}                     & \HEAD{f}                      & \HEAD{p}                       & \HEAD{s}                       & \HEAD{fp}                     & \HEAD{fs}                     & \HEAD{ps}                      & \HEAD{fps}
    \MIDLINE
    TPTP                    &  \eqmakebox[n][r]{1658} & \eqmakebox[s][r]{1717} &  \eqmakebox[p][r]{1717} & \eqmakebox[sp][r]{1720} & \eqmakebox[n][r]{1719} & \eqmakebox[s][r]{\bf 1724} &  \eqmakebox[p][r]{1720} & \eqmakebox[sp][r]{1723}  \\
    SH                      &  \eqmakebox[n][r]{245}  & \eqmakebox[s][r]{255}  &  \eqmakebox[p][r]{\bf 260}  & \eqmakebox[sp][r]{259}  & \eqmakebox[n][r]{255}  & \eqmakebox[s][r]{254}  &  \eqmakebox[p][r]{258}  & \eqmakebox[sp][r]{254}
    \end{tabular}
	\caption{Proved problems, per used oracle}%
	\label{fig:oracles}
\end{figure}

Counterintuitively, in some cases, using oracles can hurt the performance of
Zipperposition. Using oracles typically results in generating smaller
CSUs, whose elements are more general substitutions than the ones we obtain without oracles.
These more general substitutions usually
contain more applied variables, which Zipperposition's heuristics avoid due to
their explosive nature. This can make Zipperposition postpone necessary inferences for too long.
Configuration n benefits from this effect and therefore solves 18 TPTP problems that no other configuration in Figure~\ref{fig:oracles} solves. The same effect also gives configurations with only one oracle
an advantage over configurations with multiple oracles on some problems.

The evaluation sheds some light on how often solid unification problems appear in practice.
The raw data show that
configuration s solves 5 TPTP problems that neither f nor p solve. Configuration
f solves 8 TPTP problems that neither s nor p solve, while p solves 9 TPTP
problems that two other configurations do not. This suggests that the solid oracle
is slightly less beneficial than the fixpoint or pattern oracles, but still presents a
useful addition the set of available oracles.

\looseness=-1
A subset of TPTP benchmarks, concerning operations on Church numerals, is designed to test the efficiency of
higher-order unification. Our procedure performs exceptionally well
on these problems---it solves all of them, usually
faster than other competitive higher-order provers.
  There are 11 benchmarks in \texttt{NUM} category of TPTP that contain conjectures about
  Church numerals: \texttt{NUM020\^{}1}, \texttt{NUM021\^{}1}, \texttt{NUM415\^{}1},
  \texttt{NUM416\^{}1}, \texttt{NUM417\^{}1}, \texttt{NUM418\^{}1}, \texttt{NUM419\^{}1}, \texttt{NUM798\^{}1},
  \texttt{NUM799\^{}1}, \texttt{NUM800\^{}1}, and \texttt{NUM801\^{}1}.
  We evaluated those problems using the same CPU nodes and the same time limits
  as above. In addition to Zipperposition, we used
  all higher-order provers that took part in the 2019 edition of CASC~\cite{gs-casc-2019} (in the THF category) for this evaluation: CVC4
  1.7~\cite{cb-cvc4-11}, Leo-III 1.4~\cite{ascb-leo3-2018}, Satallax 3.4~\cite{cb-satallax-12}, Vampire 4.4 THF~\cite{lc-vampire-13}. Figure~\ref{fig:numerals}
  shows the CPU time needed to solve a problem or ``--'' if the prover timed out.

  \begin{figure}[tb]
    \def\arraystretch{1.1}%
    \centerline{\begin{tabular}{@{}l@{\hskip 1.25em}c@{\hskip 1em}c@{\hskip 1em}c@{\hskip 1em}c@{\hskip 1em}c@{\hskip 1em}c}
      \strut                  &  CVC4               & Leo-III                & Satallax             & Vampire               & Zipperposition (cv)
      \MIDLINE
      \texttt{NUM020\^{}1}  &  \eqmakebox[n][r]{--} & \eqmakebox[n][r]{0.46} & \eqmakebox[n][r]{--} & \eqmakebox[n][r]{--} & \eqmakebox[n][r]{\bf 0.03} \\
      \texttt{NUM021\^{}1}  &  \eqmakebox[n][r]{--} & \eqmakebox[n][r]{--} & \eqmakebox[n][r]{--} & \eqmakebox[n][r]{--} & \eqmakebox[n][r]{\bf 4.10} \\
      \texttt{NUM415\^{}1}  &  \eqmakebox[n][r]{45.80} & \eqmakebox[n][r]{0.34} & \eqmakebox[n][r]{0.21} & \eqmakebox[n][r]{0.42} & \eqmakebox[n][r]{\bf 0.03} \\
      \texttt{NUM416\^{}1}  &  \eqmakebox[n][r]{47.37} & \eqmakebox[n][r]{0.92} & \eqmakebox[n][r]{0.21} & \eqmakebox[n][r]{0.41} & \eqmakebox[n][r]{\bf 0.07} \\
      \texttt{NUM417\^{}1}  &  \eqmakebox[n][r]{--} & \eqmakebox[n][r]{49.73} & \eqmakebox[n][r]{\bf 0.30} & \eqmakebox[n][r]{0.40} & \eqmakebox[n][r]{0.45} \\
      \texttt{NUM418\^{}1}  &  \eqmakebox[n][r]{--} & \eqmakebox[n][r]{0.40} & \eqmakebox[n][r]{1.29} & \eqmakebox[n][r]{0.38} & \eqmakebox[n][r]{\bf 0.03} \\
      \texttt{NUM419\^{}1}  &  \eqmakebox[n][r]{--} & \eqmakebox[n][r]{0.42} & \eqmakebox[n][r]{23.33} & \eqmakebox[n][r]{0.37} & \eqmakebox[n][r]{\bf 0.03} \\
      \texttt{NUM798\^{}1}  &  \eqmakebox[n][r]{46.29} & \eqmakebox[n][r]{0.35} & \eqmakebox[n][r]{4.01} & \eqmakebox[n][r]{0.38} & \eqmakebox[n][r]{\bf 0.03} \\
      \texttt{NUM799\^{}1}  &  \eqmakebox[n][r]{--} & \eqmakebox[n][r]{5.05} & \eqmakebox[n][r]{--} & \eqmakebox[n][r]{--} & \eqmakebox[n][r]{\bf 0.03} \\
      \texttt{NUM800\^{}1}  &  \eqmakebox[n][r]{--} & \eqmakebox[n][r]{--} & \eqmakebox[n][r]{--} & \eqmakebox[n][r]{\bf 0.37} & \eqmakebox[n][r]{3.15} \\
      \texttt{NUM801\^{}1}  &  \eqmakebox[n][r]{--} & \eqmakebox[n][r]{0.73} & \eqmakebox[n][r]{38.77} & \eqmakebox[n][r]{--} & \eqmakebox[n][r]{\bf 0.50} \\
      \end{tabular}}
    \caption{Time needed to prove a problem, in seconds.}%
    \label{fig:numerals}
  \end{figure}

\section{Discussion and Related Work}
\looseness=-1
The problem addressed in this paper is that of finding
a complete and efficient higher-order unification procedure. Three main lines of
research dominated the research field of higher-order unification over the last forty years.

\looseness=-1
The first line of research went in the direction of finding procedures that
enumerate CSUs. The most prominent procedure designed for
this purpose is the JP procedure~\cite{jp-unif-76}. Snyder and Gallier~\cite{sg-unif-89} also provide such a procedure,
but instead of solving
flex-flex pairs systematically, their procedure blindly guesses the head of the necessary
binding by considering all constants in the signature and fresh variables of all
possible types. Another approach, based on higher-order combinators, is given by
Dougherty~\cite{dd-combunif-93}. This approach blindly creates (partially
applied) \textsf{S}-, \textsf{K}-, and \textsf{I}-combinator bindings for applied
variables, which results in returning many redundant unifiers, as well as in
nonterminating behavior even for simple problems such as $X \, \cst{a} =
\cst{a}$.

The second line of research is concerned with enumerating
preunifiers. The most prominent procedure in this line of research is Huet's~\cite{gh-unification-75}. The Snyder--Gallier procedure restricted not to solve flex-flex pairs 
is a version of the PT procedure presented in Section~\ref{sec:solid-oracle}. It improves
Huet's procedure by featuring a Solution rule.

The third line of research gives up the expressiveness of the full
$\lambda$-calculus and focuses on decidable fragments. Patterns~\cite{tn-patterns-93} are arguably the most important such fragment in practice,
with implementations
in Isabelle~\cite{tn-isabelle-2002}, Leo-III~\cite{ascb-leo3-2018}, Satallax~\cite{cb-satallax-12}, $\lambda$Prolog~\cite{dm-lprolog-12}, and other systems.
Functions-as-constructors~\cite{tl-facunif-2016} unification subsumes pattern
unification but is significantly more complex to implement. Prehofer~\cite{cp-unifphd-95} lists many other decidable fragments, not only for
unification but also preunification and unifier existence problems. Most of
these algorithms are given for second-order terms with various constraints on
their variables. Finally, one of the first decidability results is Farmer's discovery~\cite{wf-monadicunif-88} that
higher-order unification of terms with unary function symbols is decidable.

Our procedure draws inspiration from and contributes to all three lines of research.
Accordingly, its advantages over previously known procedures can be laid out
along those three lines. First, our procedure mitigates many issues of the
JP procedure. Second, it can be modified not to solve flex-flex pairs,
and become a version of Huet's procedure with important built-in optimizations.
Third, it can integrate any oracle for problems with finite CSUs, including the one we discovered.


The implementation of our procedure in Zipperposition was one of the reasons
this prover evolved from proof-of-concept prover for higher-order logic to
competitive higher-order prover. In the 2020 edition of CASC, Zipperposition won
the higher-order division solving 84\% of problems, which is 20 percentage
points ahead of the runner-up.

\section{Conclusion}
\looseness=-1
We presented a procedure for enumerating a complete set of higher-order unifiers
that is designed for efficiency. Due to a design that restricts
the search space and a tight integration of oracles,
 it reduces the number of redundant unifiers returned and
gives up early in cases of nonunifiability. In addition, we presented a new
fragment of higher-order terms that admits finite CSUs. Our evaluation shows
a clear improvement over previously known procedures.

In future work, we will focus on designing intelligent heuristics that
automatically adjust unification parameters according to the type of the problem.
For example, we should usually choose shallow unification for mostly first-order
problems and deeper unification for hard higher-order problems. We plan to
investigate other heuristic choices, such as the order of bindings and the way in which search
space is traversed (breadth- or depth-first).
We are also interested in further improving the
termination behavior of the procedure, without sacrificing completeness. Finally,
following the work of Libal~\cite{tl-regularpatterns-15} and Zaionc~\cite{mz-regularunif-85}, we would like to consider the use of
regular grammars to finitely present infinite CSUs. For example, the grammar
$G ::= \lambda x.\, x \mid \lambda x.\, \cst{f} \, (G \, x)$
represents
all elements of the CSU for the problem $\lambda x.\, G \, (\cst{f} \, x) \unif
\lambda x.\, \cst{f} \, (G \, x)$.

\subparagraph{\bf Acknowledgement}
\looseness=-1
We are grateful to the maintainers of StarExec for letting us use their
service.
We thank Ahmed Bhayat, Jasmin Blanchette, Daniel El Ouraoui,
Mathias Fleury, Pascal Fontaine, Predrag Jani\v{c}i\'c,
Robert Lewis, Femke van Raamsdonk, Hans-J\"org Schurr, Sophie Tourret, Dmitriy Traytel,
and the anonymous reviewers for suggesting many improvements to this text.
Vukmirovi\'c and Bentkamp's research has received funding from the European
Research Council (ERC) under the European Union's Horizon 2020 research and
innovation program (grant agreement No.\ 713999, Matryoshka).
Nummelin has received funding from the Netherlands Organization for
Scientific Research (NWO) under the Vidi program (project No.\
016.Vidi.189.037, Lean Forward).

\appendix
\bibliographystyle{alpha}
\bibliography{bib}

\newcommand{\etalchar}[1]{$^{#1}$}
\begin{thebibliography}{BCD{\etalchar{+}}11}

\bibitem[BBT{\etalchar{+}}19]{ab-lamsup-2019}
Alexander Bentkamp, Jasmin Blanchette, Sophie Tourret, Petar Vukmirovi\'c, and
  Uwe Waldmann.
\newblock Superposition with lambdas.
\newblock In Pascal Fontaine, editor, {\em CADE-27}, volume 11716 of {\em
  LNCS}, pages 55--73. Springer, 2019.

\bibitem[BCD{\etalchar{+}}11]{cb-cvc4-11}
Clark~W. Barrett, Christopher~L. Conway, Morgan Deters, Liana Hadarean, Dejan
  Jovanovi\'c, Tim King, Andrew Reynolds, and Cesare Tinelli.
\newblock {CVC4}.
\newblock In Ganesh Gopalakrishnan and Shaz Qadeer, editors, {\em {CAV} 2011},
  volume 6806 of {\em LNCS}, pages 171--177. Springer, 2011.

\bibitem[BR19]{br-combs-19}
Ahmed Bhayat and Giles Reger.
\newblock Restricted combinatory unification.
\newblock In Pascal Fontaine, editor, {\em CADE-27}, volume 11716 of {\em
  LNCS}, pages 74--93. Springer, 2019.

\bibitem[Bro12]{cb-satallax-12}
Chad~E. Brown.
\newblock Satallax: {An} automatic higher-order prover.
\newblock In Bernhard Gramlich, Dale Miller, and Uli Sattler, editors, {\em
  IJCAR 2012}, volume 7364 of {\em LNCS}, pages 111--117. Springer, 2012.

\bibitem[Bru75]{db-dbindices-75}
Nicolaas G.~De Bruijn.
\newblock Lambda calculus notation with nameless dummies, a tool for automatic
  formula manipulation, with application to the {Church-Rosser} theorem.
\newblock {\em J. Symb. Log.}, 40(3):470--470, 1975.

\bibitem[Cru15]{sc-phd-2015}
Simon Cruanes.
\newblock {\em Extending Superposition with Integer Arithmetic, Structural
  Induction, and Beyond}.
\newblock PhD thesis, \'Ecole polytechnique, 2015.

\bibitem[Cru17]{sc-supind-17}
Simon Cruanes.
\newblock Superposition with structural induction.
\newblock In Clare Dixon and Marcelo Finger, editors, {\em FroCoS 2017}, volume
  10483 of {\em LNCS}, pages 172--188. Springer, 2017.

\bibitem[DM79]{dm-multisetordering-79}
Nachum Dershowitz and Zohar Manna.
\newblock Proving termination with multiset orderings.
\newblock {\em Commun. {ACM}}, 22(8):465--476, 1979.

\bibitem[Dou93]{dd-combunif-93}
Daniel~J. Dougherty.
\newblock Higher-order unification via combinators.
\newblock {\em Theor. Comput. Sci.}, 114(2):273--298, 1993.

\bibitem[Dow01]{gd-unif-chapter-01}
Gilles Dowek.
\newblock Higher-order unification and matching.
\newblock In John~Alan Robinson and Andrei Voronkov, editors, {\em Handbook of
  Automated Reasoning}, pages 1009--1062. Elsevier and {MIT} Press, 2001.

\bibitem[Far88]{wf-monadicunif-88}
William~M. Farmer.
\newblock A unification algorithm for second-order monadic terms.
\newblock {\em Ann. Pure Appl. Logic}, 39(2):131--174, 1988.

\bibitem[Hue75]{gh-unification-75}
G{\'{e}}rard~P. Huet.
\newblock A unification algorithm for typed lambda-calculus.
\newblock {\em Theor. Comput. Sci.}, 1(1):27--57, 1975.

\bibitem[HV09]{kh-unification-2009}
Krystof Hoder and Andrei Voronkov.
\newblock Comparing unification algorithms in first-order theorem proving.
\newblock In B{\"{a}}rbel Mertsching, Marcus Hund, and Muhammad~Zaheer Aziz,
  editors, {\em {KI}~2009}, volume 5803 of {\em LNCS}, pages 435--443.
  Springer, 2009.

\bibitem[JP76]{jp-unif-76}
Don~C. Jensen and Tomasz Pietrzykowski.
\newblock Mechanizing omega-order type theory through unification.
\newblock {\em Theor. Comput. Sci.}, 3(2):123--171, 1976.

\bibitem[KV13]{lc-vampire-13}
Laura Kov{\'{a}}cs and Andrei Voronkov.
\newblock First-order theorem proving and vampire.
\newblock In Natasha Sharygina and Helmut Veith, editors, {\em CAV 2013},
  volume 8044 of {\em LNCS}, pages 1--35. Springer, 2013.

\bibitem[Lib15]{tl-regularpatterns-15}
Tomer Libal.
\newblock Regular patterns in second-order unification.
\newblock In Amy~P. Felty and Aart Middeldorp, editors, {\em CADE-25}, volume
  9195 of {\em LNCS}, pages 557--571. Springer, 2015.

\bibitem[LM16]{tl-facunif-2016}
Tomer Libal and Dale Miller.
\newblock Functions-as-constructors higher-order unification.
\newblock In Delia Kesner and Brigitte Pientka, editors, {\em {FSCD} 2016},
  volume~52 of {\em LIPIcs}, pages 26:1--26:17. Schloss Dagstuhl, 2016.

\bibitem[LS16]{ls-indexing-16}
Tomer Libal and Alexander Steen.
\newblock Towards a substitution tree based index for higher-order resolution
  theorem provers.
\newblock In Pascal Fontaine, Stephan Schulz, and Josef Urban, editors, {\em
  PAAR 2016}, volume 1635 of {\em {CEUR}-WS}, pages 82--94. CEUR-WS, 2016.

\bibitem[MN12]{dm-lprolog-12}
Dale Miller and Gopalan Nadathur.
\newblock {\em Programming with Higher-Order Logic}.
\newblock Cambridge University Press, 2012.

\bibitem[Nip93]{tn-patterns-93}
Tobias Nipkow.
\newblock Functional unification of higher-order patterns.
\newblock In E.~Best, editor, {\em LICS '93}, pages 64--74. {IEEE} Computer
  Society, 1993.

\bibitem[NPW02]{tn-isabelle-2002}
Tobias Nipkow, Lawrence~C. Paulson, and Markus Wenzel.
\newblock {\em Isabelle/{HOL} - {A} Proof Assistant for Higher-Order Logic},
  volume 2283 of {\em LNCS}.
\newblock Springer, 2002.

\bibitem[Pre95]{cp-unifphd-95}
Christian Prehofer.
\newblock {\em Solving higher order equations: from logic to programming}.
\newblock PhD thesis, Technical University Munich, Germany, 1995.

\bibitem[Rob65]{ar-resolution-65}
John~Alan Robinson.
\newblock A machine-oriented logic based on the resolution principle.
\newblock {\em J. {ACM}}, 12(1):23--41, 1965.

\bibitem[SB18]{ascb-leo3-2018}
Alexander Steen and Christoph Benzm{\"{u}}ller.
\newblock The higher-order prover {Leo-III}.
\newblock In Didier Galmiche, Stephan Schulz, and Roberto Sebastiani, editors,
  {\em IJCAR 2018}, volume 10900 of {\em LNCS}, pages 108--116. Springer, 2018.

\bibitem[SBP13]{ns-sh-13}
Nik Sultana, Jasmin~Christian Blanchette, and Lawrence~C. Paulson.
\newblock {LEO-II} and {S}atallax on the {S}ledgehammer test bench.
\newblock {\em J. Applied Logic}, 11(1):91--102, 2013.

\bibitem[Sch12]{ss-fpindex-12}
Stephan Schulz.
\newblock Fingerprint indexing for paramodulation and rewriting.
\newblock In Bernhard Gramlich, Dale Miller, and Uli Sattler, editors, {\em
  {IJCAR} 2012}, volume 7364 of {\em LNCS}, pages 477--483. Springer, 2012.

\bibitem[SG89]{sg-unif-89}
Wayne Snyder and Jean~H. Gallier.
\newblock Higher-order unification revisited: Complete sets of transformations.
\newblock {\em J. Symb. Comput.}, 8(1/2):101--140, 1989.

\bibitem[SST14]{as-starexec-14}
Aaron Stump, Geoff Sutcliffe, and Cesare Tinelli.
\newblock Starexec: {A} cross-community infrastructure for logic solving.
\newblock In St{\'{e}}phane Demri, Deepak Kapur, and Christoph Weidenbach,
  editors, {\em {IJCAR} 2014}, volume 8562 of {\em LNCS}, pages 367--373.
  Springer, 2014.

\bibitem[Sut17]{gf-tptp-17}
Geoff Sutcliffe.
\newblock The {TPTP} problem library and associated infrastructure - from {CNF}
  to {TH0}, {TPTP} v6.4.0.
\newblock {\em J. Autom. Reasoning}, 59(4):483--502, 2017.

\bibitem[Sut19]{gs-casc-2019}
Geoff Sutcliffe.
\newblock The {CADE-27} automated theorem proving system competition -
  {CASC-27}.
\newblock {\em {AI} Commun.}, 32(5-6):373--389, 2019.

\bibitem[VBCS19]{pv-ehoh-19}
Petar Vukmirovic, Jasmin~Christian Blanchette, Simon Cruanes, and Stephan
  Schulz.
\newblock Extending a brainiac prover to lambda-free higher-order logic.
\newblock In Tom{\'{a}}s Vojnar and Lijun Zhang, editors, {\em {TACAS} 2019},
  volume 11427 of {\em LNCS}, pages 192--210. Springer, 2019.

\bibitem[VBN20]{pv-hounif-2020}
Petar Vukmirovic, Alexander Bentkamp, and Visa Nummelin.
\newblock Efficient full higher-order unification.
\newblock In {\em {FSCD}}, volume 167 of {\em LIPIcs}, pages 5:1--5:17. Schloss
  Dagstuhl - Leibniz-Zentrum f{\"{u}}r Informatik, 2020.

\bibitem[Zai85]{mz-regularunif-85}
Marek Zaionc.
\newblock The set of unifiers in typed lambda-calculus as regular expression.
\newblock In Jean{-}Pierre Jouannaud, editor, {\em RTA-85}, volume 202 of {\em
  LNCS}, pages 430--440. Springer, 1985.

\end{thebibliography}

\end{document}